%% file: main.tex
\documentclass[journal]{IEEEtran}

\input{include/0-requirement}

\begin{document}

\title{DiffRMST: A GPU-Accelerated Fully Differentiable Rectilinear Minimum Spanning Tree Wirelength Model for Global Placement}

\author{Fuxing Huang, Hao Ai, Wenxing Zhu, Xinning Liu, and Ziran Zhu%
\thanks{This work has been submitted to the IEEE for possible publication. Copyright may be transferred without notice, after which this version may no longer be accessible.}
\thanks{Preliminary version of this paper was presented at the 2026 63rd ACM/IEEE Design Automation Conference (DAC), Long Beach, CA, USA, July 2026~\cite{DAC26}.}
\thanks{This work was partially supported by the National Natural Science Foundation of China under Grants 62674046 and 92464202.
(Corresponding author: Ziran Zhu).}
\thanks{Fuxing Huang, Xinning Liu, and Ziran Zhu are with the School of Integrated Circuits, Southeast University, Nanjing, China (emails: \{fxhuang, xinning.liu, zrzhu\}@seu.edu.cn).}
\thanks{Hao Ai and Wenxing Zhu are with the Center for Discrete Mathematics and Theoretical Computer Science, Fuzhou University, Fuzhou, China (email: \{2600310002, wxzhu\}@fzu.edu.cn).}
}

\markboth{}{}

\maketitle
\begin{abstract}
Analytical global placement requires a differentiable wirelength objective that accurately reflects routed interconnect cost.
The widely used half-perimeter wirelength (HPWL) is efficient but overlooks the internal topology of multi-pin nets. The rectilinear Steiner minimum tree (RSMT) provides higher fidelity, but its underlying decision problem is NP-complete, and its discrete construction precludes direct use in gradient-based optimization. The rectilinear minimum spanning tree (RMST) offers a polynomial-time alternative, but remains nondifferentiable because of minimization over discrete spanning trees.
To address these limitations, we present DiffRMST, a fully differentiable RMST wirelength model for analytical global placement.
DiffRMST combines spatial smoothing with a log-sum-exp relaxation to define a continuous distribution over spanning trees.
Based on the Kirchhoff matrix-tree theorem, the partition function and edge marginals can be computed without explicit tree enumeration, enabling a closed-form topology-aware gradient for placement optimization. For efficient integration into analytical placement, we further develop a fused forward-backward operator with parallel CPU and GPU implementations.
On the ISPD2019 benchmark suite, our method reduces RMST, RSMT, and routed wirelength by an average of 5.0\%, 3.2\%, and 3.5\%, respectively, compared with the weighted-average baseline, while maintaining acceptable runtime. Furthermore, the GPU implementation achieves an average $14.7\times$ speedup in global placement over its eight-thread CPU counterpart.
\end{abstract}

\begin{IEEEkeywords}
Differentiable wirelength model, global placement, rectilinear minimum spanning tree, GPU acceleration, matrix-tree theorem.
\end{IEEEkeywords}

\input{include/1-introduction}
\input{include/2-preliminaries}

\input{include/3-algorithm}
\input{include/4-experiment}
\input{include/5-conclusion}

\bibliographystyle{IEEEtran}
\bibliography{ref}

\vspace{-10mm}
\begin{IEEEbiography}[{\biophoto{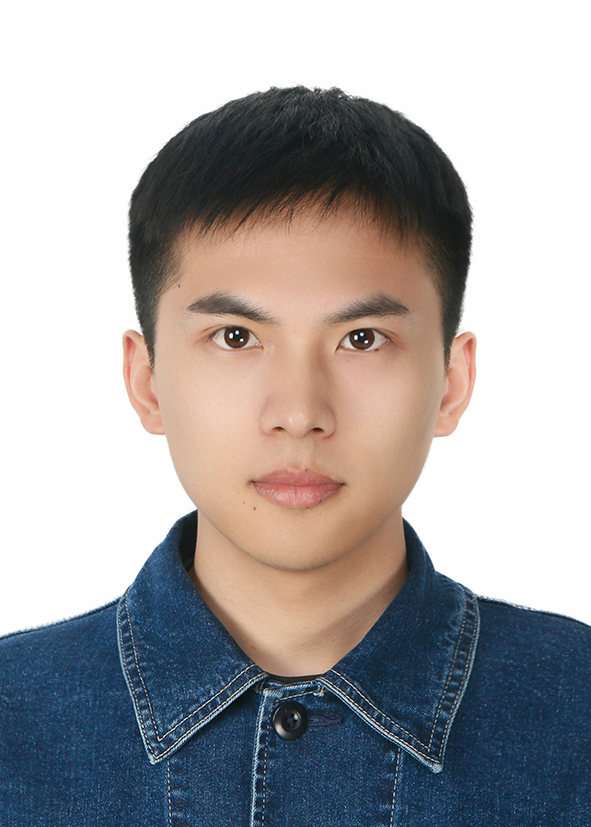}}]
{Fuxing Huang} received his M.Sc. degree in computer science from Fuzhou University, Fuzhou, China, in 2024.
He is currently pursuing the Ph.D. degree at the School of Integrated Circuits, Southeast University, Nanjing, China.
His research interests include mathematical optimization techniques and GPU acceleration for EDA physical design automation.
\end{IEEEbiography}

\vspace{-10mm}
\begin{IEEEbiography}[{\biophoto{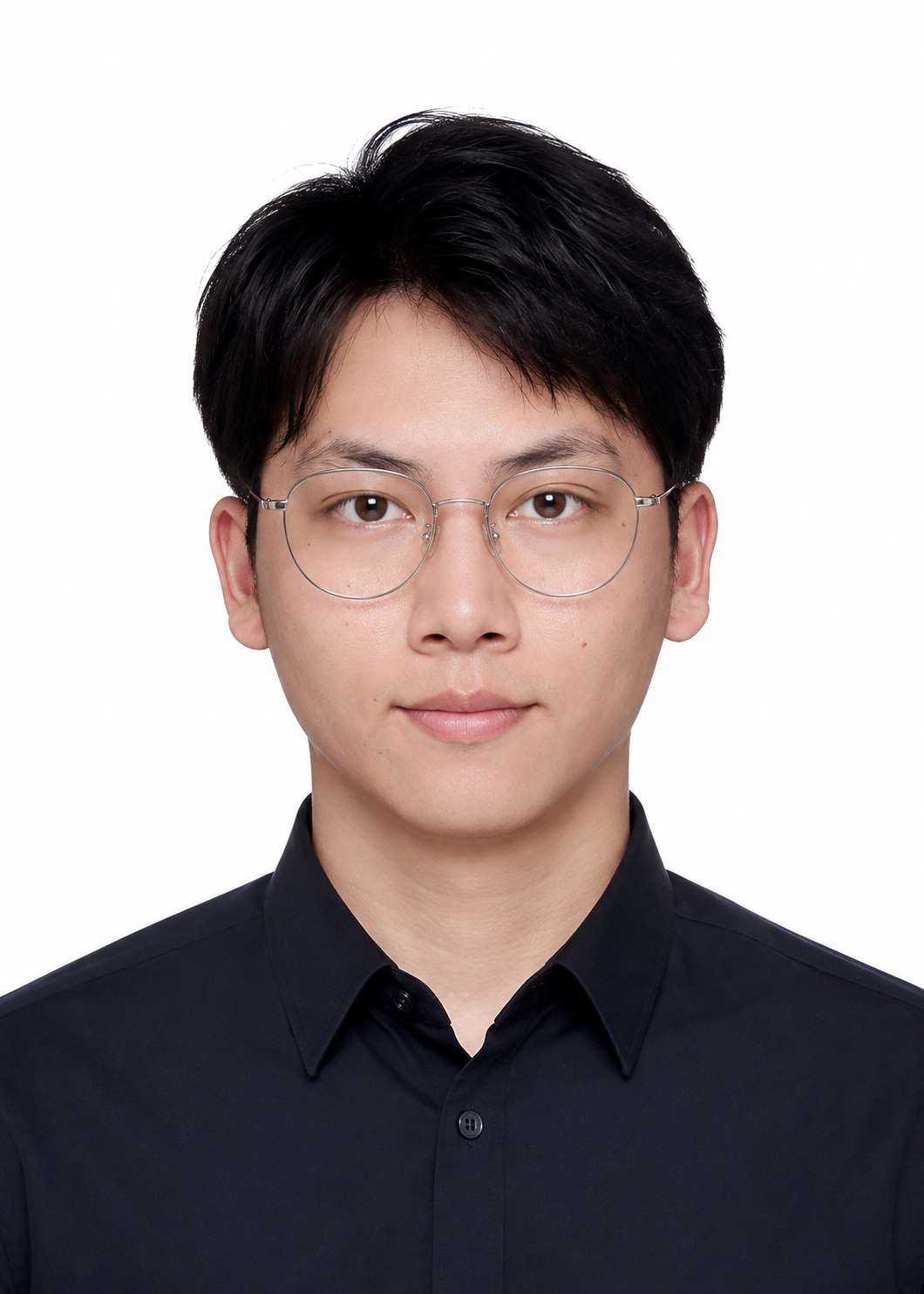}}]
{Hao Ai} is currently pursuing Ph.D. degree at the Center for Discrete Mathematics and Theoretical Computer Science, Fuzhou University, Fuzhou, China. His research interests include VLSI physical design automation and multiphysics simulation for integrated circuits. He is also a recipient of
the Best Paper Award Nomination in DAC 2026.
\end{IEEEbiography}

\vspace{-10mm}
\begin{IEEEbiography}[{\biophoto{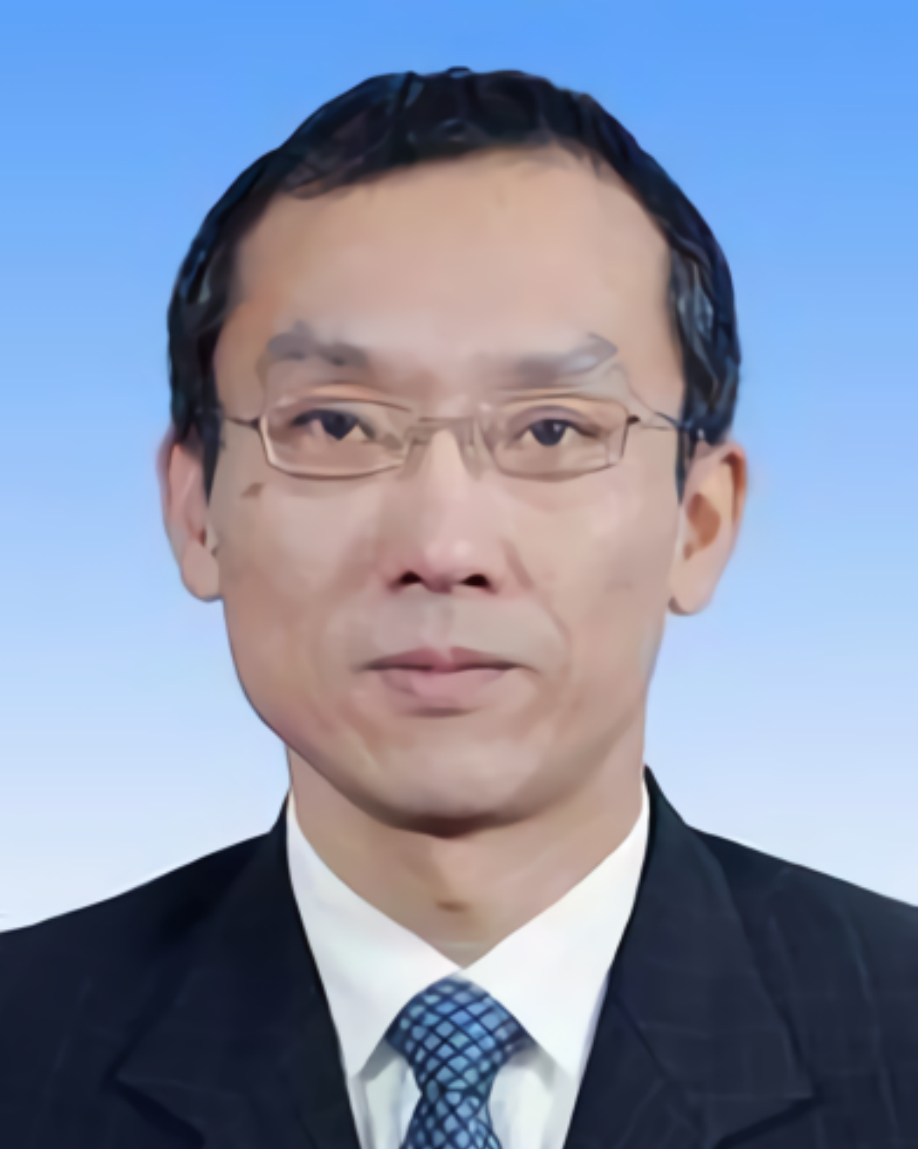}}]
{Wenxing Zhu} received the Ph.D. degree in operations research from Shanghai University, Shanghai, China, in 1996. He is a ``Jiaxi'' Distinguished Professor at Fuzhou University, Fuzhou, China. His research interests include VLSI physical design automation, optimization theory, and algorithms. Dr. Zhu received several prestigious awards, including the First Prize of Natural Sciences from the Ministry of Education of China in 2022 and the Operations Research Application Award from the Operations Research Society of China in 2020. He received the Best Paper Awards from DAC 2017 and ISEDA 2023 and was nominated for the Best Paper Award at DAC 2026, ICCAD 2018, and CCF-DAC 2023. In addition, he was honored with the Second Prize for National Teaching Achievements in 2009.
\end{IEEEbiography}

\vspace{-10mm}
\begin{IEEEbiography}[{\biophoto{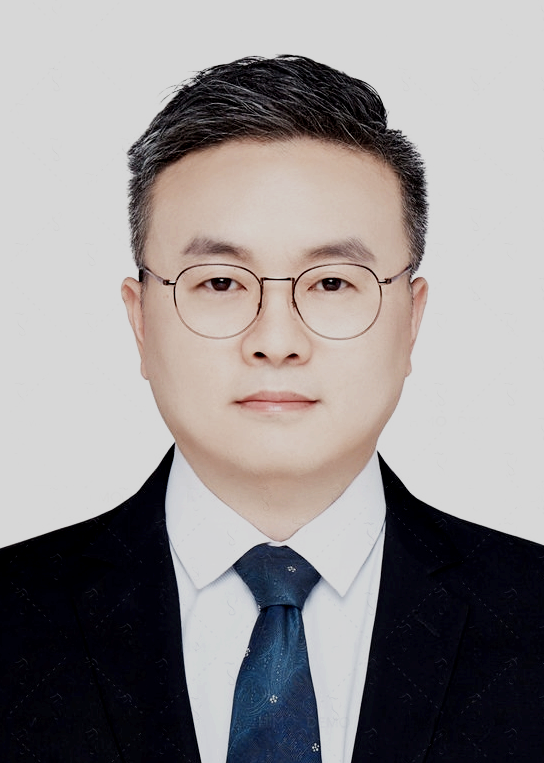}}]
{Xinning Liu} received the B.S., M.S., and Ph.D. degrees from Southeast University, Nanjing, China, in 2000, 2003, and 2015, respectively. He is currently an Associate Professor with the School of Integrated Circuits, Southeast University. His research interests include low power circuits, energy harvesting circuits, and low power SoC design technologies.
\end{IEEEbiography}

\vspace{-10mm}
\begin{IEEEbiography}[{\biophoto{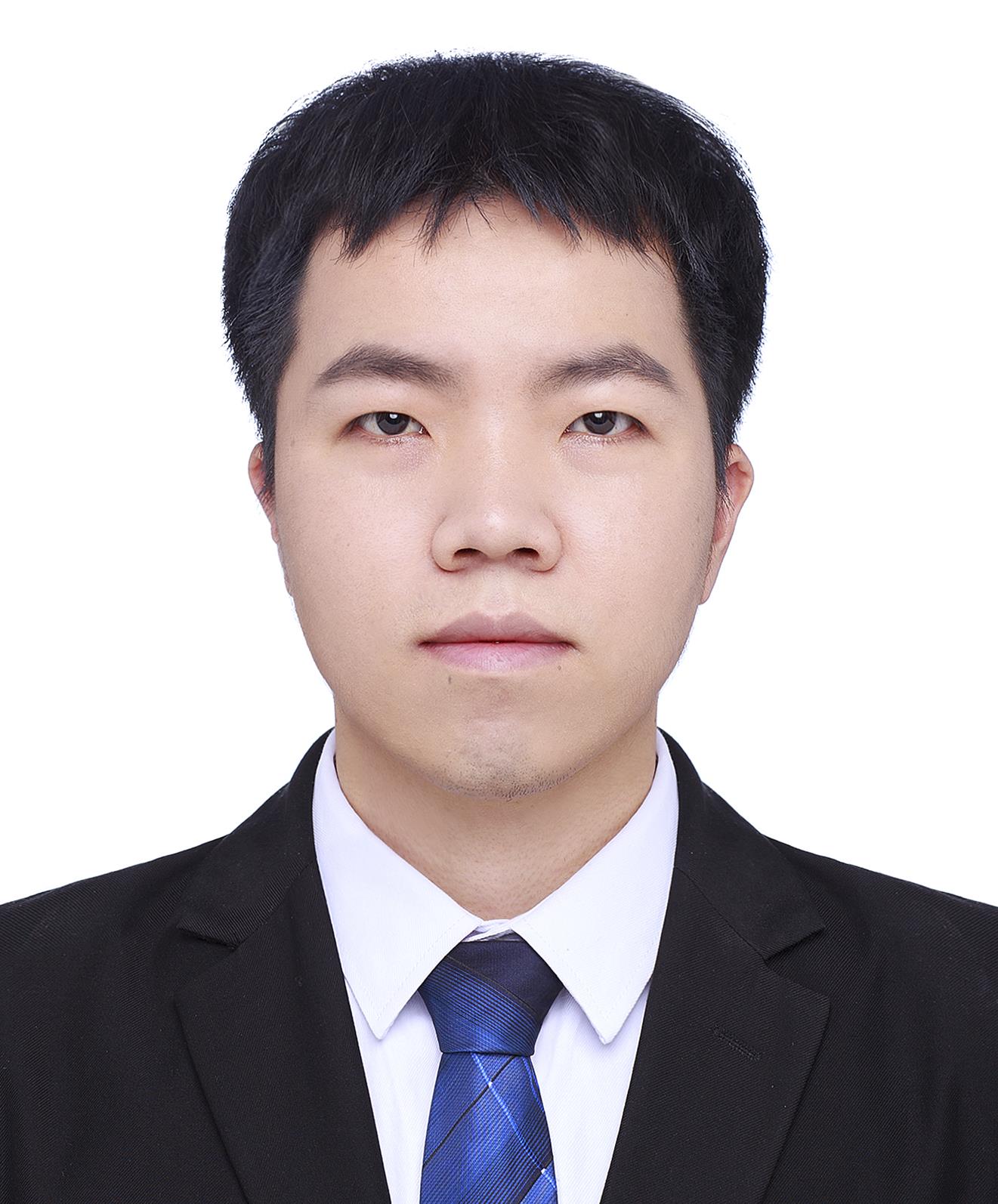}}]
{Ziran Zhu} received the B.Sc. degree in computer science and technology, and the Ph.D. degree in applied mathematics from Fuzhou University, in 2015 and 2020, respectively. He is currently an Associate Professor with Southeast University.
His research interest is EDA physical design automation.
He received the Best Paper Award in DAC 2017, the First Place in CAD Contest @ICCAD 2017 and 2018, the First Place in MLCAD Contest 2023, and the Second Place in CAD Contest @ICCAD 2022.
\end{IEEEbiography}
\end{document}

%% file: include/0-requirement.tex
\usepackage{amsmath,amsfonts,amssymb}
\usepackage{graphicx}
\usepackage{tikz}
\usepackage{algorithm}
\usepackage{algpseudocode}
\usepackage{booktabs}
\usepackage{multirow}
\usepackage{cite}
\usepackage[colorlinks=true,linkcolor=blue,citecolor=blue,urlcolor=blue]{hyperref}
\usepackage[normalem]{ulem}

\graphicspath{{include/Figure/}}

\newsavebox{\biophotobox}
\newcommand{\biophoto}[1]{%
    \sbox{\biophotobox}{\includegraphics[width=1in]{#1}}%
    \ifdim\ht\biophotobox<1.25in
        \sbox{\biophotobox}{\includegraphics[height=1.25in]{#1}}%
    \fi
    \begin{tikzpicture}[baseline=0pt]
        \path[use as bounding box] (0,0) rectangle (1in,1.25in);
        \clip (0,0) rectangle (1in,1.25in);
        \node[overlay,inner sep=0pt,outer sep=0pt] at (0.5in,0.625in)
            {\usebox{\biophotobox}};
    \end{tikzpicture}%
}

\allowdisplaybreaks[2]

\newtheorem{theorem}{Theorem}
\newtheorem{corollary}{Corollary}

%% file: include/1-introduction.tex
\section{Introduction}\label{sec:intro}

As VLSI technology continues to scale, circuit placement has become a decisive stage in the physical design flow. It determines the physical locations of cells and macros and establishes the geometric foundation for routing, clock tree synthesis, timing closure, and subsequent physical optimization. Its quality directly affects the performance, power, and area (PPA) of the final chip, as well as routing congestion and thermal reliability \cite{lin2019dreamplace,shi2025timing,fu2024hybrid,liu2023xplace,hsu2014ntuplace4h,liu2025theplace,liu2013fast}. 
Among the objectives considered in global placement, wirelength is particularly important because it is closely related to routing demand, interconnect delay, switching power, and the difficulty of timing and congestion optimization.

Modern analytical global placers typically formulate and solve large-scale nonlinear optimization problems with gradient-based methods, such as Nesterov's accelerated gradient method \cite{sun2019big,lu2015eplace,zhu2018analytical,lin2019dreamplace,liu2023xplace,chen2008ntuplace3,chan2006mpl6,viswanathan2007fastplace,kim2011simpl,kahng2006faster,agnesina2023autodmp}. Against this backdrop, the wirelength model is not merely an evaluation metric. Its gradient determines the attractive force applied to cells at every global placement iteration and directly affects the trajectory and convergence of the placer. The wirelength model must therefore be smooth and differentiable with respect to cell coordinates, while remaining sufficiently faithful to the routed interconnect. 
Because the objective and its gradient are evaluated across all nets at every iteration, their computational cost directly affects the overall placement runtime.
A smooth but inaccurate model can efficiently optimize the wrong objective, whereas an accurate but discrete or expensive model is difficult to use directly at each global placement iteration.
Thus, balancing modeling accuracy, differentiability, and computational efficiency is essential for developing effective wirelength objectives for analytical global placement.

\input{include/Figure/routing_type}

\subsection{Previous Work}

Wirelength estimation models in physical design can be broadly divided into the half-perimeter wirelength (HPWL), rectilinear minimum spanning tree (RMST), and rectilinear Steiner minimum tree (RSMT) models, as illustrated in Figure~\ref{fig:routing_type}. HPWL describes a net by the half-perimeter of its bounding box and is designed for efficient optimization. RMST connects all pins with a rectilinear spanning tree without introducing Steiner points, whereas RSMT further allows Steiner points and yields a minimum-length rectilinear Steiner tree for the given pin locations.
These models form a natural spectrum. HPWL has the lowest computational cost but provides the least topological information, RSMT has the highest routing fidelity but is difficult to solve and optimize, and RMST lies between them by retaining an explicit tree structure without the full combinatorial complexity of Steiner-point construction. The following paragraphs review the limitations of existing work along this spectrum.

HPWL has been the most widely used wirelength objective in analytical placement because its bounding-box expression is inexpensive and its nonsmooth maximum and minimum operators admit differentiable approximations. The weighted-average (WA) and log-sum-exp (LSE) models are widely used to smooth HPWL for gradient-based optimization \cite{naylor2001non,hsu2011tsv,hsu2013tsv}. Moreau-envelope, $L_p$-type, and CHKS models further study the approximation error, gradient support, convexity, and numerical behavior of smooth HPWL formulations \cite{liao2023moreau,kennings2000analytical,chen1993non}. The force-field analysis in \cite{liao2023moreau} shows that conventional dense smoothing can assign unnecessary forces to pins that are internal to the current bounding box, which can oppose density-driven spreading during overlap removal. The Moreau-envelope model can produce sparse gradients and deactivate such internal pins, but it still optimizes the same bounding-box surrogate. For a net with more than two pins, HPWL does not reveal which pins should be connected, where internal branches should pass, or how much wire is required inside the bounding box. Consequently, existing HPWL models improve optimization smoothness but cannot remove the fidelity gap between the placement objective and the actual routed wirelength.

RSMT methods directly construct rectilinear routing trees with Steiner points and are therefore widely used for routing, interconnect planning, and post-placement wirelength or timing estimation. The Hanan-grid property shows that an optimal RSMT can be restricted to a finite set of candidate Steiner locations \cite{hanan1966steiner}. However, the RSMT problem is NP-complete \cite{garey1977rsmt}, and exact optimization is not available in polynomial time unless P equals NP. Approximation schemes such as Arora's PTAS provide strong theoretical guarantees, but their practical complexity is too high for repeated calls in a large-scale placement optimizer \cite{arora1998ptas}. Practical methods therefore rely on specialized heuristics. FLUTE uses lookup tables for small nets and decomposes larger nets into smaller subproblems \cite{chu2004flute,chu2005flute,chu2007flute}. GeoSteiner targets exact solutions for relatively small instances \cite{juhl2018geosteiner}. Recent learning-based methods, including REST and NN-Steiner, accelerate the construction of discrete Steiner trees while preserving the routing objective \cite{liu2021rest,kahng2024nn}. These approaches are effective for routing and reference estimation, but their discrete tree selection does not provide a smooth end-to-end gradient with respect to cell coordinates. For example, DREAMPlace 4.0 uses FLUTE-derived trees as an RC-tree reference or for timing-related net weighting, while its placement wirelength objective remains HPWL \cite{liao2022dreamplace}. Because FLUTE or RSMT construction is much more expensive than evaluating a smooth HPWL surrogate, such tools invoke it periodically rather than at every global placement iteration. Thus, high-fidelity RSMT information remains separated from the continuous optimization trajectory of global placement.

RMST provides an attractive trade-off between the efficiency of HPWL and the fidelity of RSMT. Since its candidate vertices are restricted to the pins of a net, the exact RMST can be obtained in polynomial time. Geometry-based algorithms achieve $\mathcal{O}(n \log n)$ time, while a direct $\mathcal{O}(n^2)$ implementation of Prim's algorithm is often preferred for nets with small degree \cite{guibas1983northeast,cormen1990algorithms,kahng2004rmstpack}. Unlike HPWL, RMST produces an explicit pin-to-pin rectilinear tree and can therefore provide a concrete routing-length estimate for a multi-pin net. Its approximation quality is also controlled by Hwang's theorem, which bounds the RMST length by 1.5 times the optimal RSMT length \cite{hwang1976steiner}. Nevertheless, exact RMST remains a discrete combinatorial objective because Manhattan distances contain absolute-value operations and the minimum over feasible spanning trees changes with the pin coordinates. Thus, exact RMST is nondifferentiable despite being polynomial-time solvable. Existing work has not developed a fully differentiable RMST wirelength model for analytical global placement, leaving this HPWL-RSMT trade-off unavailable to the continuous optimizer.

\subsection{Our Contributions}
Existing wirelength models leave a gap between routing fidelity and suitability for continuous optimization: HPWL is efficient but overlooks net topology, while RSMT offers higher fidelity but is computationally expensive and discrete. To address this gap, we propose DiffRMST, a fully differentiable RMST wirelength model for analytical global placement. It makes RMST's balance between routing fidelity and computational efficiency available to gradient-based optimization. The main contributions are summarized as follows:

\begin{itemize}
\item To the best of our knowledge, this work is the first to introduce a fully differentiable RMST wirelength objective into analytical global placement, enabling its direct optimization by gradient descent. It captures the internal pin-to-pin connectivity omitted by HPWL without the discrete Steiner-point optimization required by RSMT.
\item We construct the objective through spatial smoothing and an LSE relaxation of the discrete spanning-tree minimum. The Kirchhoff matrix-tree theorem (MTT)~\cite{koo2007structured} evaluates the partition function and edge marginals without tree enumeration, yielding a closed-form topology-aware coordinate gradient. Approximation bounds and gradient regularity analysis support the formulation.
\item We develop a fused forward and backward operator that reuses one marginal solve to compute the forward cost proxy and the relaxed objective's gradient. Dynamic scheduling across nets on the CPU and degree-aware subgroup and block mappings on the GPU exploit parallelism across and within nets, while workspace reuse limits temporary storage.
\item On the ISPD2019 benchmark suite, DiffRMST reduces RMST, RSMT, and routed wirelength by 5.0\%, 3.2\%, and 3.5\% on average, respectively, relative to WA. It achieves these gains with acceptable runtime overhead and an average global-placement speedup of $14\times$ over its eight-thread CPU implementation.
\end{itemize}

The remainder of this paper is organized as follows. Section~\ref{sec:preliminaries} introduces the placement problem and wirelength models. Section~\ref{sec:framework} presents the differentiable RMST formulation and its analysis. Section~\ref{sec:exp} reports the experiments, and Section~\ref{sec:conclusion} concludes the paper.

%% file: include/Figure/routing_type.tex
\begin{figure}
\centering

\tikzset{every picture/.style={line width=0.75pt}} %

\begin{tikzpicture}[x=0.75pt,y=0.75pt,yscale=-0.85, xscale=0.85]

\draw  [color={rgb, 255:red, 0; green, 0; blue, 0 }  ,draw opacity=0.8 ][line width=1.21]  (0,23) .. controls (0,10.3) and (10.3,0) .. (23,0) -- (92,0) .. controls (104.7,0) and (115,10.3) .. (115,23) -- (115,127) .. controls (115,139.7) and (104.7,150) .. (92,150) -- (23,150) .. controls (10.3,150) and (0,139.7) .. (0,127) -- cycle ;
\draw  [color={rgb, 255:red, 0; green, 0; blue, 0 }  ,draw opacity=0.8 ][fill={rgb, 255:red, 183; green, 209; blue, 226 }  ,fill opacity=1 ] (4.67,77.67) -- (24.67,77.67) -- (24.67,109) -- (4.67,109) -- cycle ;
\draw  [color={rgb, 255:red, 0; green, 0; blue, 0 }  ,draw opacity=0.8 ][fill={rgb, 255:red, 183; green, 209; blue, 226 }  ,fill opacity=1 ] (47.17,47.67) -- (73.67,47.67) -- (73.67,78.17) -- (47.17,78.17) -- cycle ;
\draw  [color={rgb, 255:red, 0; green, 0; blue, 0 }  ,draw opacity=0.8 ][fill={rgb, 255:red, 183; green, 209; blue, 226 }  ,fill opacity=1 ] (89.67,19.67) -- (109.67,19.67) -- (109.67,49.67) -- (89.67,49.67) -- cycle ;
\draw  [color={rgb, 255:red, 0; green, 0; blue, 0 }  ,draw opacity=0.8 ][fill={rgb, 255:red, 183; green, 209; blue, 226 }  ,fill opacity=1 ] (56,109) -- (86,109) -- (86,129) -- (56,129) -- cycle ;
\draw  [color={rgb, 255:red, 56; green, 128; blue, 177 }  ,draw opacity=1 ][fill={rgb, 255:red, 56; green, 128; blue, 177 }  ,fill opacity=1 ] (25,91.17) .. controls (25,89.79) and (26.12,88.67) .. (27.5,88.67) .. controls (28.88,88.67) and (30,89.79) .. (30,91.17) .. controls (30,92.55) and (28.88,93.67) .. (27.5,93.67) .. controls (26.12,93.67) and (25,92.55) .. (25,91.17) -- cycle ;
\draw  [color={rgb, 255:red, 56; green, 128; blue, 177 }  ,draw opacity=1 ][fill={rgb, 255:red, 56; green, 128; blue, 177 }  ,fill opacity=1 ] (63,81.17) .. controls (63,79.79) and (64.12,78.67) .. (65.5,78.67) .. controls (66.88,78.67) and (68,79.79) .. (68,81.17) .. controls (68,82.55) and (66.88,83.67) .. (65.5,83.67) .. controls (64.12,83.67) and (63,82.55) .. (63,81.17) -- cycle ;
\draw  [color={rgb, 255:red, 56; green, 128; blue, 177 }  ,draw opacity=1 ][fill={rgb, 255:red, 56; green, 128; blue, 177 }  ,fill opacity=1 ] (73,106.17) .. controls (73,104.79) and (74.12,103.67) .. (75.5,103.67) .. controls (76.88,103.67) and (78,104.79) .. (78,106.17) .. controls (78,107.55) and (76.88,108.67) .. (75.5,108.67) .. controls (74.12,108.67) and (73,107.55) .. (73,106.17) -- cycle ;
\draw  [color={rgb, 255:red, 56; green, 128; blue, 177 }  ,draw opacity=1 ][fill={rgb, 255:red, 56; green, 128; blue, 177 }  ,fill opacity=1 ] (84,34.5) .. controls (84,33.12) and (85.12,32) .. (86.5,32) .. controls (87.88,32) and (89,33.12) .. (89,34.5) .. controls (89,35.88) and (87.88,37) .. (86.5,37) .. controls (85.12,37) and (84,35.88) .. (84,34.5) -- cycle ;
\draw  [color={rgb, 255:red, 56; green, 128; blue, 177 }  ,draw opacity=0.8 ][line width=1.21]  (86.5,37) -- (86.5,106.17) -- (78,106.17) ;
\draw  [color={rgb, 255:red, 56; green, 128; blue, 177 }  ,draw opacity=0.8 ][line width=1.21]  (27.27,88.42) -- (27.27,34.74) -- (84.23,34.74) ;
\draw  [color={rgb, 255:red, 56; green, 128; blue, 177 }  ,draw opacity=0.8 ][line width=1.21]  (73,106.17) -- (27.5,106.17) -- (27.5,93.67) ;
\draw  [color={rgb, 255:red, 0; green, 0; blue, 0 }  ,draw opacity=0.8 ][line width=1.21]  (135,23) .. controls (135,10.3) and (145.3,0) .. (158,0) -- (227,0) .. controls (239.7,0) and (250,10.3) .. (250,23) -- (250,127) .. controls (250,139.7) and (239.7,150) .. (227,150) -- (158,150) .. controls (145.3,150) and (135,139.7) .. (135,127) -- cycle ;
\draw  [color={rgb, 255:red, 0; green, 0; blue, 0 }  ,draw opacity=0.8 ][fill={rgb, 255:red, 183; green, 209; blue, 226 }  ,fill opacity=1 ] (139.67,77.67) -- (159.67,77.67) -- (159.67,109) -- (139.67,109) -- cycle ;
\draw  [color={rgb, 255:red, 0; green, 0; blue, 0 }  ,draw opacity=0.8 ][fill={rgb, 255:red, 183; green, 209; blue, 226 }  ,fill opacity=1 ] (182.17,47.67) -- (208.67,47.67) -- (208.67,78.17) -- (182.17,78.17) -- cycle ;
\draw  [color={rgb, 255:red, 0; green, 0; blue, 0 }  ,draw opacity=0.8 ][fill={rgb, 255:red, 183; green, 209; blue, 226 }  ,fill opacity=1 ] (224.67,19.67) -- (244.67,19.67) -- (244.67,49.67) -- (224.67,49.67) -- cycle ;
\draw  [color={rgb, 255:red, 0; green, 0; blue, 0 }  ,draw opacity=0.8 ][fill={rgb, 255:red, 183; green, 209; blue, 226 }  ,fill opacity=1 ] (191,109) -- (221,109) -- (221,129) -- (191,129) -- cycle ;
\draw  [color={rgb, 255:red, 56; green, 128; blue, 177 }  ,draw opacity=1 ][fill={rgb, 255:red, 56; green, 128; blue, 177 }  ,fill opacity=1 ] (160,91.17) .. controls (160,89.79) and (161.12,88.67) .. (162.5,88.67) .. controls (163.88,88.67) and (165,89.79) .. (165,91.17) .. controls (165,92.55) and (163.88,93.67) .. (162.5,93.67) .. controls (161.12,93.67) and (160,92.55) .. (160,91.17) -- cycle ;
\draw  [color={rgb, 255:red, 56; green, 128; blue, 177 }  ,draw opacity=1 ][fill={rgb, 255:red, 56; green, 128; blue, 177 }  ,fill opacity=1 ] (198,81.17) .. controls (198,79.79) and (199.12,78.67) .. (200.5,78.67) .. controls (201.88,78.67) and (203,79.79) .. (203,81.17) .. controls (203,82.55) and (201.88,83.67) .. (200.5,83.67) .. controls (199.12,83.67) and (198,82.55) .. (198,81.17) -- cycle ;
\draw  [color={rgb, 255:red, 56; green, 128; blue, 177 }  ,draw opacity=1 ][fill={rgb, 255:red, 56; green, 128; blue, 177 }  ,fill opacity=1 ] (208,106.17) .. controls (208,104.79) and (209.12,103.67) .. (210.5,103.67) .. controls (211.88,103.67) and (213,104.79) .. (213,106.17) .. controls (213,107.55) and (211.88,108.67) .. (210.5,108.67) .. controls (209.12,108.67) and (208,107.55) .. (208,106.17) -- cycle ;
\draw  [color={rgb, 255:red, 56; green, 128; blue, 177 }  ,draw opacity=1 ][fill={rgb, 255:red, 56; green, 128; blue, 177 }  ,fill opacity=1 ] (219,34.5) .. controls (219,33.12) and (220.12,32) .. (221.5,32) .. controls (222.88,32) and (224,33.12) .. (224,34.5) .. controls (224,35.88) and (222.88,37) .. (221.5,37) .. controls (220.12,37) and (219,35.88) .. (219,34.5) -- cycle ;
\draw  [color={rgb, 255:red, 56; green, 128; blue, 177 }  ,draw opacity=0.8 ][line width=1.21]  (221.5,37) -- (221.5,81.17) -- (203,81.17) ;
\draw  [color={rgb, 255:red, 56; green, 128; blue, 177 }  ,draw opacity=0.8 ][line width=1.21]  (162.5,88.67) -- (162.5,81.17) -- (198,81.17) ;
\draw  [color={rgb, 255:red, 56; green, 128; blue, 177 }  ,draw opacity=0.8 ][line width=1.21]  (208,106.17) -- (200.5,106.17) -- (200.5,83.67) ;
\draw  [color={rgb, 255:red, 0; green, 0; blue, 0 }  ,draw opacity=0.8 ][line width=1.21]  (270,23) .. controls (270,10.3) and (280.3,0) .. (293,0) -- (362,0) .. controls (374.7,0) and (385,10.3) .. (385,23) -- (385,127) .. controls (385,139.7) and (374.7,150) .. (362,150) -- (293,150) .. controls (280.3,150) and (270,139.7) .. (270,127) -- cycle ;
\draw  [color={rgb, 255:red, 0; green, 0; blue, 0 }  ,draw opacity=0.8 ][fill={rgb, 255:red, 183; green, 209; blue, 226 }  ,fill opacity=1 ] (274.67,77.67) -- (294.67,77.67) -- (294.67,109) -- (274.67,109) -- cycle ;
\draw  [color={rgb, 255:red, 0; green, 0; blue, 0 }  ,draw opacity=0.8 ][fill={rgb, 255:red, 183; green, 209; blue, 226 }  ,fill opacity=1 ] (317.17,47.67) -- (343.67,47.67) -- (343.67,78.17) -- (317.17,78.17) -- cycle ;
\draw  [color={rgb, 255:red, 0; green, 0; blue, 0 }  ,draw opacity=0.8 ][fill={rgb, 255:red, 183; green, 209; blue, 226 }  ,fill opacity=1 ] (359.67,19.67) -- (379.67,19.67) -- (379.67,49.67) -- (359.67,49.67) -- cycle ;
\draw  [color={rgb, 255:red, 0; green, 0; blue, 0 }  ,draw opacity=0.8 ][fill={rgb, 255:red, 183; green, 209; blue, 226 }  ,fill opacity=1 ] (326,109) -- (356,109) -- (356,129) -- (326,129) -- cycle ;
\draw  [color={rgb, 255:red, 56; green, 128; blue, 177 }  ,draw opacity=1 ][fill={rgb, 255:red, 56; green, 128; blue, 177 }  ,fill opacity=1 ] (295,91.17) .. controls (295,89.79) and (296.12,88.67) .. (297.5,88.67) .. controls (298.88,88.67) and (300,89.79) .. (300,91.17) .. controls (300,92.55) and (298.88,93.67) .. (297.5,93.67) .. controls (296.12,93.67) and (295,92.55) .. (295,91.17) -- cycle ;
\draw  [color={rgb, 255:red, 56; green, 128; blue, 177 }  ,draw opacity=1 ][fill={rgb, 255:red, 56; green, 128; blue, 177 }  ,fill opacity=1 ] (333,81.17) .. controls (333,79.79) and (334.12,78.67) .. (335.5,78.67) .. controls (336.88,78.67) and (338,79.79) .. (338,81.17) .. controls (338,82.55) and (336.88,83.67) .. (335.5,83.67) .. controls (334.12,83.67) and (333,82.55) .. (333,81.17) -- cycle ;
\draw  [color={rgb, 255:red, 56; green, 128; blue, 177 }  ,draw opacity=1 ][fill={rgb, 255:red, 56; green, 128; blue, 177 }  ,fill opacity=1 ] (343,106.17) .. controls (343,104.79) and (344.12,103.67) .. (345.5,103.67) .. controls (346.88,103.67) and (348,104.79) .. (348,106.17) .. controls (348,107.55) and (346.88,108.67) .. (345.5,108.67) .. controls (344.12,108.67) and (343,107.55) .. (343,106.17) -- cycle ;
\draw  [color={rgb, 255:red, 56; green, 128; blue, 177 }  ,draw opacity=1 ][fill={rgb, 255:red, 56; green, 128; blue, 177 }  ,fill opacity=1 ] (354,34.5) .. controls (354,33.12) and (355.12,32) .. (356.5,32) .. controls (357.88,32) and (359,33.12) .. (359,34.5) .. controls (359,35.88) and (357.88,37) .. (356.5,37) .. controls (355.12,37) and (354,35.88) .. (354,34.5) -- cycle ;
\draw  [color={rgb, 255:red, 56; green, 128; blue, 177 }  ,draw opacity=0.8 ][line width=1.21]  (356.5,37) -- (356.5,81.17) -- (338,81.17) ;
\draw  [color={rgb, 255:red, 56; green, 128; blue, 177 }  ,draw opacity=0.8 ][line width=1.21]  (297.5,88.67) -- (297.5,81.17) -- (333,81.17) ;
\draw [color={rgb, 255:red, 56; green, 128; blue, 177 }  ,draw opacity=0.8 ][line width=1.21]    (345.73,81.18) -- (345.5,103.67) ;

\draw (48,152) node [anchor=north west][inner sep=0.75pt]   [align=left] {\textbf{(a)}};
\draw (184,151.5) node [anchor=north west][inner sep=0.75pt]   [align=left] {\textbf{(b)}};
\draw (320,152) node [anchor=north west][inner sep=0.75pt]   [align=left] {\textbf{(c)}};

\end{tikzpicture}
\caption{Wirelength estimation models for a net. (a) HPWL model; (b) RMST model; (c) RSMT model.}
\vspace{-5mm}
\label{fig:routing_type}
\end{figure}

%% file: include/2-preliminaries.tex
\section{Preliminaries}\label{sec:preliminaries}

\subsection{Global Placement Optimization}
In the physical design flow, the primary objective of global placement is to determine the continuous coordinates of cells across the layout area to minimize the total wirelength, subject to cell density constraints. Analytical placers commonly formulate this as an unconstrained nonlinear optimization problem~\cite{lu2015eplace,lin2019dreamplace}:
\begin{equation}
\label{eq:placement_obj}
\min_{x,y} \sum_{e \in E} \mathcal{W}(e; x,y) + \lambda \mathcal{D}(x,y),
\end{equation}
where $E$ denotes the set of all nets in the circuit, and $\mathcal{W}(e;\cdot)$ represents the estimated wirelength of a specific net $e$ based on the current cell locations $(x,y)$. The term $\mathcal{D}(x, y)$ is a density penalty function that penalizes regions with excessive cell congestion, and $\lambda > 0$ is a weighting parameter used to balance the trade-off between wirelength and density. For simplicity in the mathematical model, pin offsets relative to their cell centers are typically ignored, allowing the variables $(x, y)$ to directly represent the coordinates of the pins.

Because modern analytical placement frameworks heavily rely on gradient-based optimization techniques, the objective function must be mathematically smooth. Consequently, any wirelength estimation model $\mathcal{W}(e; \cdot)$ integrated into this framework must be continuously differentiable with respect to the cell coordinates to effectively compute gradients during the optimization process.

\subsection{Wirelength Estimation Models}\label{sec:wirelength_models}
In the objective function of global placement, the wirelength $\mathcal{W}(e; \cdot)$ can be evaluated using different mathematical models.  For a given net $e$, let $V_e$ denote its pin set, with $n=|V_e|$, and let $\mathbf{p}_i=(x_i,y_i)$ denote the position of pin $i\in V_e$.

\subsubsection{Half-Perimeter Wirelength (HPWL)} 
HPWL is the most widely used wirelength estimation model~\cite{kennings2000analytical}, which computes the half-perimeter of the bounding box enclosing all pins in $V_e$:
\begin{equation}
    \mathcal{W}_{\text{HPWL}}(e) = \left( \max_{i \in V_e} x_i - \min_{i \in V_e} x_i \right) + \left( \max_{i \in V_e} y_i - \min_{i \in V_e} y_i \right).
\end{equation}

Because the $\max$ and $\min$ operators are non-differentiable, smooth approximations such as the Weighted-Average (WA)~\cite{hsu2013tsv} or Log-Sum-Exp (LSE)~\cite{naylor2001non} functions are commonly adopted in analytical placers to evaluate HPWL.

\subsubsection{Rectilinear Steiner Tree (RST)} 
To accurately represent the actual routing topology, RST constructs a routing tree that connects all pins in $V_e$ along with a set of optionally inserted Steiner points $S_e$, to achieve the minimum possible routing length. The exact RST wirelength is formulated as:
\begin{equation}
    \mathcal{W}_{\text{RST}}(e) = \min_{S_e, T=(V_e \cup S_e, E_T)} \sum_{\{i,j\} \in E_T} d_{ij},
\end{equation}
where $E_T$ represents the set of edges in the routing tree $T$, and $d_{ij} = |x_i - x_j| + |y_i - y_j|$ is the Manhattan distance between nodes $i$ and $j$. However, finding the minimum RST is a well-known NP-hard problem~\cite{garey1977rsmt}.

\subsubsection{Rectilinear Minimum Spanning Tree (RMST)} 
RMST restricts the tree nodes strictly to the given pin set $V_e$ without allowing any Steiner points. Let $\mathcal{T}_e$ denote the set of all feasible spanning trees that connect all pins in net $e$. The exact RMST wirelength is defined as the following combinatorial optimization problem:
\begin{equation}
    \mathcal{W}_{\text{RMST}}(e; x, y) = \min_{T \in \mathcal{T}_e} \sum_{\{i,j\} \in T} d_{ij}.
    \label{eq:rmst}
\end{equation}
According to Hwang's theorem, the length of an RMST is at most 1.5 times the optimal RST length~\cite{hwang1976steiner}. However, the exact RMST formulation in Equation (\ref{eq:rmst}) is inherently non-differentiable due to two mathematical barriers: the absolute value function in the Manhattan distance $d_{ij}$, and the discrete combinatorial $\min$ operator over the spanning tree set $\mathcal{T}_e$.

\subsection{Mathematical Foundations for Smooth Approximation}

As discussed in the previous section, the exact RMST formulation is non-differentiable. To seamlessly integrate RMST into a gradient-based placement framework, smooth approximation techniques and graph theory foundations are required. 

\subsubsection{Log-Sum-Exp Function} The Log-Sum-Exp (LSE) function is a widely used smooth mathematical approximation for the extreme value functions ($\max$ or $\min$). Given real variables 
$\boldsymbol{z}=(z_1,\ldots,z_m)^{\mathsf T}$, the LSE function for smooth minimum approximation is defined as~\cite{boyd2004convex}:
\begin{equation}
    \label{eq:lse_def}
    \text{LSE}_{\beta}(\boldsymbol{z}) = -\beta \log \sum_{i=1}^m \exp\left(-\frac{z_i}{\beta}\right),
\end{equation}
where $\beta > 0$ is a strictly positive smoothing parameter. The LSE function is continuously differentiable with respect to each $z_i$. Furthermore, it provides a lower bound for the minimum function and uniformly converges to the exact minimum as $\beta \rightarrow 0^+$:
\begin{equation}
    \lim_{\beta \rightarrow 0^+} \text{LSE}_{\beta}(\boldsymbol{z}) = \min_{i=1}^m z_i.
\end{equation}

\subsubsection{Graph Laplacian Matrix}\label{sec:laplacian} To mathematically model the routing topology of a net $e$, we represent it as a complete undirected weighted graph $G_e=(V_e,E_{\text{full}})$, where $E_{\text{full}}$ contains all unordered pairs of distinct pins in $V_e$. Let $\boldsymbol{A}_e=[a_{ij}]$ denote a symmetric matrix of nonnegative edge affinities with $a_{ii}=0$.

The structural and algebraic properties of $G_e$ can be characterized by its Laplacian matrix. The degree matrix and weighted Laplacian $\boldsymbol{L}_e\in\mathbb{R}^{n\times n}$ are defined as
\begin{equation}
    \boldsymbol{D}_e=\operatorname{Diag}(\boldsymbol{A}_e\boldsymbol{1}),
    \qquad\boldsymbol{L}_e=\boldsymbol{D}_e-\boldsymbol{A}_e.
    \label{eq:laplacian_matrices}
\end{equation}
Here, $\operatorname{Diag}$ forms a diagonal matrix from a vector, and $\boldsymbol{1}$ is the all-ones column vector. The degree matrix $\boldsymbol{D}_e$ is distinct from the scalar density penalty $\mathcal{D}$. By definition, the sum of each row and each column in $\boldsymbol{L}_e$ equals zero. The Laplacian matrix formulation is the algebraic foundation for evaluating the sum of products of edge affinities over all spanning trees~\cite{chaiken1982allminors}, which will be utilized in our differentiable RMST model.

%% file: include/3-algorithm.tex
\input{include/Figure/forward}
\section{Proposed DiffRMST Model}\label{sec:framework}
Figure~\ref{fig:forward} summarizes the computation of DiffRMST from pin coordinates to global placement updates. Starting from each net's pin coordinates, we construct a complete graph with smoothed edge weights $\boldsymbol{W}_e$ and convert them into edge affinities $\boldsymbol{A}_e$ under the LSE relaxation. We then form the graph Laplacian $\boldsymbol{L}_e$ and its minor $\boldsymbol{L}_e^k$. Using the weighted matrix-tree theorem, the zero-padded inverse $\boldsymbol{M}_e$ and the edge affinities yield the edge marginal probabilities $\boldsymbol{P}_e$ without enumerating trees. The fused forward and backward computation combines these probabilities with the edge weights to evaluate the expected wirelength as a forward cost proxy, and with the spatial derivatives of the edge weights to obtain coordinate gradients of the relaxed RMST objective. These gradients guide gradient descent updates of the cell coordinates at each global placement iteration. The following subsections develop each component of this formulation and derive the corresponding coordinate gradients.

\subsection{Smooth Approximation and Continuous Relaxation of RMST}
The exact RMST objective in Equation~\eqref{eq:rmst} contains two distinct sources of nondifferentiability. The first arises from the Manhattan edge length, whose absolute-value terms are nondifferentiable when two pins have identical $x$- or $y$-coordinates. The second arises from the discrete minimization over spanning trees, since the optimal tree topology can change abruptly as the pin coordinates vary. We address these two sources separately. 
Spatial smoothing is first applied to obtain differentiable edge lengths, after which a continuous relaxation is introduced over the set of spanning trees to replace the discrete minimum. The parameters $\gamma$ and $\beta$ control edge-length smoothing and tree-topology relaxation, respectively.

\subsubsection{Complete Graph with Smoothed Edge Weights}
The complete graph $G_e$ defined in Section~\ref{sec:laplacian} contains every candidate connection between the pins of net $e$. No tree is selected at this stage. Instead, we assign a smooth geometric length to each pair of pins, preserving the common set of candidate edges throughout coordinate optimization. For a scalar displacement $d\in\mathbb{R}$, the spatial smoothing function is
\begin{equation}
    \phi_\gamma(d)=d\tanh\!\left(\frac{d}{\gamma}\right),
    \label{eq:spatial_smoothing}
\end{equation}
where $\gamma>0$. This function is even and nonnegative, and it satisfies $\phi_\gamma(0)=0$. Around zero, the hyperbolic tangent removes the corner of the absolute value function, providing a continuous derivative when pin coordinates coincide. For a fixed nonzero displacement, $\tanh(d/\gamma)$ approaches the sign of $d$ as $\gamma\to0^+$.

Applying the same smoothing independently to the two coordinate directions gives
\begin{equation}
    w_{ij}=\phi_\gamma(x_i-x_j)+\phi_\gamma(y_i-y_j).
    \label{eq:edge_weight}
\end{equation}
The corresponding edge weight matrix is
\begin{equation}
    \boldsymbol{W}_e=[w_{ij}]
      =\begin{bmatrix}
        0&w_{12}&\cdots&w_{1n}\\
        w_{21}&0&\cdots&w_{2n}\\
        \vdots&\vdots&\ddots&\vdots\\
        w_{n1}&w_{n2}&\cdots&0
      \end{bmatrix}.
    \label{eq:edge_weight_matrix}
\end{equation}
Because $\phi_\gamma$ is even, reversing an edge does not change its length, and $\boldsymbol{W}_e$ is symmetric. Moreover, $w_{ii}=0$ because the coordinate displacement of a pin from itself is zero. The matrix collects the geometric information needed by the subsequent topological computation, while Equation~\eqref{eq:edge_weight} retains the explicit coordinate dependence required for the backward pass.

\subsubsection{Differentiable Surrogate via LSE}
Each spanning tree $T\in\mathcal{T}_e$ contains exactly $n-1$ edges, and its smoothed length is the sum of the corresponding entries of $\boldsymbol{W}_e$. Minimizing this sum over $\mathcal{T}_e$ would still select a discrete tree and could introduce a change in gradient when the minimizing topology switches. Applying the smooth minimum in Equation~\eqref{eq:lse_def} gives the relaxed wirelength objective
\begin{equation}\label{eq:lse_obj}
    \hat{\mathcal{W}}_{\text{RMST}}^\beta(e;x,y)
       =-\beta\log\sum_{T\in\mathcal{T}_e}
          \exp\!\left(-\frac{1}{\beta}
              \sum_{\{u,v\}\in T}w_{uv}\right).
\end{equation}
Here, the inner sum evaluates one candidate tree, and the outer sum aggregates all candidate topologies. The parameter $\beta$ controls the concentration of this weighting. A smaller $\beta$ concentrates the contribution on shorter trees, whereas a larger $\beta$ allows a wider range of topologies to influence the objective and its gradient.

For fixed $\gamma$, the LSE convergence property discussed in Section~\ref{sec:preliminaries} implies convergence to $\min_{T\in\mathcal{T}_e}\sum_{\{u,v\}\in T}w_{uv}$ as $\beta\to0^+$. The remaining difference from the exact RMST then comes from the spatial approximation in Equation~\eqref{eq:edge_weight}. Although Equation~\eqref{eq:lse_obj} provides the desired differentiability, it is not yet a practical evaluation rule. A complete graph with $n\geq2$ vertices has $n^{n-2}$ spanning trees, so computing its terms individually would introduce an exponential bottleneck. We next evaluate this sum algebraically using the weighted matrix-tree theorem.

\subsection{Algebraic Evaluation via Matrix-Tree Theorem}
\label{sec:mtt_eval}
The summation inside the logarithm in Equation~\eqref{eq:lse_obj} is a positive scalar partition function:
\begin{equation}
    Z_e=\sum_{T\in\mathcal{T}_e}
         \exp\!\left(-\frac{1}{\beta}
             \sum_{\{u,v\}\in T}w_{uv}\right).
    \label{eq:partition_function}
\end{equation}
The weighted matrix-tree theorem evaluates this sum using matrices whose dimensions depend only on the number of pins. To apply it, the geometric edge lengths must first be converted into affinities whose products reproduce the exponential tree weights.

Because the edge weight matrix has a zero diagonal, this conversion has the compact form
\begin{equation}
    \boldsymbol{A}_e
       =\hat{\exp}\!\left(-\frac{\boldsymbol{W}_e}{\beta}\right)
        -\boldsymbol{I},
    \label{eq:affinity_matrix}
\end{equation}
where $\hat{\exp}(\cdot)$ denotes the element-wise exponential operator and $\boldsymbol{I}$ is the $n\times n$ identity matrix. Each off-diagonal affinity is $a_{ij}=\exp(-w_{ij}/\beta)>0$, so smaller geometric lengths correspond to stronger affinities. The degree matrix and Laplacian then follow from Equation~\eqref{eq:laplacian_matrices}. Since all candidate connections have positive affinity, the weighted graph is connected in exact arithmetic.

\begin{theorem}[Weighted Matrix-Tree Theorem]\label{thm:mtt}
Let $\boldsymbol{L}_e^k$ denote the $(n-1)\times(n-1)$ principal submatrix obtained by deleting the $k$-th row and column of $\boldsymbol{L}_e$. For the connected weighted graph $G_e$, the determinant of this minor equals the sum, over spanning trees, of the products of their edge affinities~\cite{chaiken1982allminors}:
\begin{equation}
    \det\!\left(\boldsymbol{L}_e^k\right)
       =\sum_{T\in\mathcal{T}_e} \Big( \prod_{\{u,v\}\in T}a_{uv} \Big).
    \label{eq:mtt}
\end{equation}
\end{theorem}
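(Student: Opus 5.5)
The plan is to prove the weighted matrix-tree theorem by the Cauchy--Binet formula applied to a factorization of the reduced Laplacian through an oriented incidence matrix. Fix an arbitrary orientation of each edge $\{u,v\}\in E_{\text{full}}$. Let $\boldsymbol{B}\in\mathbb{R}^{n\times m}$, $m=\binom{n}{2}$, be the incidence matrix with $+1$ at the tail and $-1$ at the head of each edge, and let $\boldsymbol{C}=\operatorname{Diag}(a_{uv})$ over edges. First I will check directly from Equation~\eqref{eq:laplacian_matrices} that $\boldsymbol{L}_e=\boldsymbol{B}\boldsymbol{C}\boldsymbol{B}^{\mathsf T}$. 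The diagonal entry $i$ collects $\sum_{j}a_{ij}$, and the off-diagonal entry $(i,j)$ equals $-a_{ij}$, since the two endpoint signs are opposite. Deleting row and column $k$ then gives $\boldsymbol{L}_e^k=\boldsymbol{B}_k\boldsymbol{C}\boldsymbol{B}_k^{\mathsf T}$, where $\boldsymbol{B}_k$ is $\boldsymbol{B}$ with row $k$ removed.

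Next, write $\boldsymbol{L}_e^k=(\boldsymbol{B}_k\boldsymbol{C}^{1/2})(\boldsymbol{B}_k\boldsymbol{C}^{1/2})^{\mathsf T}$, which is valid because $a_{uv}>0$ (nonnegativity would suffice). Applying Cauchy--Binet gives
\begin{equation*}
\det(\boldsymbol{L}_e^k)=\sum_{S\subseteq E_{\text{full}},\,|S|=n-1}\det\!\big(\boldsymbol{B}_k[S]\big)^2\prod_{\{u,v\}\in S}a_{uv},
\end{equation*}
where $\boldsymbol{B}_k[S]$ is the square submatrix with columns indexed by $S$. The theorem then reduces to a purely combinatorial claim: $\det(\boldsymbol{B}_k[S])=\pm1$ if $S$ is the edge set of a spanning tree, and $0$ otherwise.

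That combinatorial lemma is the main step. If $S$ has $n-1$ edges and is not a spanning tree, it contains a cycle. Summing the cycle's columns with signs $\pm1$ according to the traversal direction gives the zero vector, so the columns are dependent and the determinant vanishes. If $S$ is a spanning tree, I will argue by induction on $n$. A tree with $n\ge2$ vertices has at least two leaves, so one leaf $\ell\neq k$ exists. Row $\ell$ of $\boldsymbol{B}_k[S]$ has a single nonzero entry $\pm1$, in the column of the leaf's edge. Expanding along that row leaves $\pm$ the determinant of the reduced incidence matrix of the tree $S\setminus\{\text{leaf edge}\}$ on $V_e\setminus\{\ell\}$, still with row $k$ deleted. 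The base case $n=1$ is the empty determinant, equal to $1$. Squaring removes the sign, which yields Equation~\eqref{eq:mtt}.

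The main obstacle I expect is bookkeeping rather than insight. One part is getting the induction to keep the deleted vertex $k$ while removing a leaf, which needs the two-leaves fact. The other is verifying the factorization of the diagonal entries, which holds because $a_{ii}=0$ by Equation~\eqref{eq:affinity_matrix}, so no self-loops enter. Connectivity of $G_e$ is not actually needed for the identity: if no spanning tree existed, both sides would be $0$. Finally, the result is independent of $k$, as the argument shows.
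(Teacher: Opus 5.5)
Your proof is correct, and it takes a different route from the paper, which gives no proof at all: it states the identity and defers to Chaiken's all-minors matrix-tree theorem~\cite{chaiken1982allminors}. Your argument is self-contained. You factor $\boldsymbol{L}_e=\boldsymbol{B}\boldsymbol{C}\boldsymbol{B}^{\mathsf T}$ through an oriented incidence matrix, expand $\det(\boldsymbol{L}_e^k)$ by Cauchy--Binet, and show $\det(\boldsymbol{B}_k[S])\in\{0,\pm1\}$, with the nonzero case occurring exactly for spanning trees. The individual steps all hold:
\begin{itemize}
\item the cycle-dependence argument for non-trees;
\item the choice of a leaf $\ell\neq k$;
\item the observation that removing the leaf row and leaf-edge column leaves exactly the reduced incidence matrix of a spanning tree on $V_e\setminus\{\ell\}$, with row $k$ still deleted.
\end{itemize}

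The citation buys generality. The all-minors version also handles non-principal minors, which count two-component spanning forests and so give a combinatorial reading of the entries of $\boldsymbol{M}_e$.

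Your route buys two things the paper relies on later. First, you can apply Cauchy--Binet to $\boldsymbol{B}_k$ and $\boldsymbol{C}\boldsymbol{B}_k^{\mathsf T}$ directly; the square root $\boldsymbol{C}^{1/2}$ is unnecessary. Done this way, the identity holds as a polynomial identity in independent affinities of any sign, for every $k$, and, as you note, with or without connectivity. This is the form in which Equation~\eqref{eq:edge_affinity_derivative} differentiates both sides with respect to $a_{ij}$. Second, your factorization gives $\partial\boldsymbol{L}_e/\partial a_{ij}=\boldsymbol{b}_{ij}\boldsymbol{b}_{ij}^{\mathsf T}$, built from the same incidence vectors $\boldsymbol{b}_{ij}$ that appear in the proof of Theorem~\ref{thm:lipschitz}. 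Jacobi's formula with the zero-padded $\boldsymbol{M}_e$ then yields Equation~\eqref{eq:eff_resistance} in one line.

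One small remark: your caveat that the diagonal factorization needs $a_{ii}=0$ is unnecessary, since $\boldsymbol{D}_e-\boldsymbol{A}_e$ cancels any diagonal of $\boldsymbol{A}_e$. It does no harm, however.
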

For each tree, substituting the affinities from Equation~\eqref{eq:affinity_matrix} gives
\begin{equation}
    \begin{aligned}
    \prod_{\{u,v\}\in T}a_{uv}
       &=\prod_{\{u,v\}\in T}\exp\!\left(-\frac{w_{uv}}{\beta}\right)\\
       &=\exp\!\left(-\frac{1}{\beta}
            \sum_{\{u,v\}\in T}w_{uv}\right).
    \end{aligned}
    \label{eq:tree_affinity_product}
\end{equation}
The final expression is precisely the contribution of that tree to $Z_e$. The theorem therefore computes the original partition function, rather than a separate approximation to it. This representation replaces tree enumeration with a factorization of the $(n-1)\times(n-1)$ minor~\cite{koo2007structured}. In particular, $Z_e=\det(\boldsymbol{L}_e^k)$, and the relaxed wirelength becomes
\begin{equation}
    \hat{\mathcal{W}}_{\text{RMST}}^\beta(e;x,y)
        =-\beta\log\det\!\left(\boldsymbol{L}_e^k\right).
    \label{eq:logdet_obj}
\end{equation}

\subsection{Analytical Gradient and Marginal Probability}
\label{sec:analytical_gradient}
The wirelength objective in Equation~\eqref{eq:logdet_obj} depends on pin coordinates through the smoothed edge lengths. Moving pin $i$ changes only its incident edges. Therefore, by the chain rule, the coordinate gradient can be expressed as the sum of the sensitivity of the objective to each incident edge weight multiplied by the corresponding spatial derivative:
\begin{equation}
    \frac{\partial\hat{\mathcal{W}}_{\text{RMST}}^\beta}{\partial x_i}
       =\sum_{\substack{j\in V_e\\j\neq i}}
          \frac{\partial\hat{\mathcal{W}}_{\text{RMST}}^\beta}{\partial w_{ij}}
          \frac{\partial w_{ij}}{\partial x_i}.
    \label{eq:chain_rule}
\end{equation}

For the first factor, we treat $w_{ij}=w_{ji}$ as one undirected edge length. A tree's length has derivative $1$ with respect to $w_{ij}$ if it contains $\{i,j\}$, and $0$ otherwise. Differentiating the LSE objective therefore gives
\begin{equation}
    \begin{aligned}
    \frac{\partial\hat{\mathcal{W}}_{\text{RMST}}^\beta}{\partial w_{ij}}
       &=-\beta\cdot\frac{1}{Z_e}\cdot\frac{\partial Z_e}{\partial w_{ij}}\\
       &=\frac{\sum_{\substack{T\in\mathcal{T}_e\\\{i,j\}\in T}}
          \exp\!\left(-\frac{1}{\beta}\sum_{\{u,v\}\in T}w_{uv}\right)}
          {\sum_{T\in\mathcal{T}_e}
          \exp\!\left(-\frac{1}{\beta}\sum_{\{u,v\}\in T}w_{uv}\right)}.
    \end{aligned}
    \label{eq:prob_derivation}
\end{equation}
The numerator includes only trees containing $\{i,j\}$, so this ratio is the edge's marginal probability under the normalized exponential weights. Denoting this probability by $p_{ij}$ gives
\begin{equation}
    p_{ij}\triangleq
       \Pr\!\left(\{i,j\}\in T\right)
       =\frac{\partial\hat{\mathcal{W}}_{\text{RMST}}^\beta}{\partial w_{ij}},
       \quad i\neq j.
    \label{eq:marginal_prob_def}
\end{equation}
Edges occurring frequently in shorter trees consequently have a greater influence on the gradient. Since $\{i,j\}$ and $\{j,i\}$ are the same edge, the marginal probability matrix is symmetric:
\begin{equation}
    \boldsymbol{P}_e=[p_{ij}]=\boldsymbol{P}_e^{\mathsf T}
       \in\mathbb{R}^{n\times n},
    \qquad p_{ii}=0.
    \label{eq:marginal_probability_matrix}
\end{equation}
As with $w_{ij}$, the net index is implicit in the scalar entries. Each $p_{ij}\in[0,1]$, and $\sum_{i<j}p_{ij}=n-1$ because every spanning tree contains exactly $n-1$ edges.

To evaluate these probabilities, we differentiate the weighted matrix-tree identity in Equation~\eqref{eq:mtt}. The symmetric entries $a_{ij}=a_{ji}$ represent one undirected affinity and vary together. Differentiation with respect to this affinity selects trees containing $\{i,j\}$, and multiplication by $a_{ij}$ restores their weights. Thus,
\begin{equation}
    \begin{aligned}
    p_{ij}
       &=\frac{a_{ij}}{Z_e}\frac{\partial Z_e}{\partial a_{ij}}\\
       &=a_{ij}\frac{\partial\log\det(\boldsymbol{L}_e^k)}{\partial a_{ij}}\\
       &=a_{ij}\operatorname{tr}\!\left[
          (\boldsymbol{L}_e^k)^{-1}
          \frac{\partial\boldsymbol{L}_e^k}{\partial a_{ij}}
         \right].
    \end{aligned}
    \label{eq:edge_affinity_derivative}
\end{equation}
The last equality follows from Jacobi's formula, where $\operatorname{tr}(\cdot)$ denotes the matrix trace. To expand this trace uniformly for all pin pairs, we choose $k=n$ and pad the inverse minor with a zero row and column corresponding to the removed pin:
\begin{equation}
    \boldsymbol{M}_e=
    \begin{bmatrix}
        (\boldsymbol{L}_e^k)^{-1}&\boldsymbol{0}\\
        \boldsymbol{0}^{\mathsf T}&0
    \end{bmatrix}\in\mathbb{R}^{n\times n}.
    \label{eq:expanded_inverse}
\end{equation}
Let $L_{e,uv}$ and $M_{e,uv}$ denote the corresponding matrix entries. The undirected affinity contributes to both endpoint degrees and both off-diagonal entries, giving $\partial L_{e,ii}/\partial a_{ij}=\partial L_{e,jj}/\partial a_{ij}=1$ and $\partial L_{e,ij}/\partial a_{ij}=\partial L_{e,ji}/\partial a_{ij}=-1$, with all other derivatives zero. Expanding the trace in Equation~\eqref{eq:edge_affinity_derivative} therefore gives
\begin{equation}
    \begin{aligned}
    p_{ij}
       &=a_{ij}\sum_{u,v\in V_e}
          M_{e,vu}\frac{\partial L_{e,uv}}{\partial a_{ij}}\\
       &=a_{ij}(M_{e,ii}+M_{e,jj}-M_{e,ij}-M_{e,ji})\\
       &=a_{ij}(M_{e,ii}+M_{e,jj}-2M_{e,ij}).
    \end{aligned}
    \label{eq:eff_resistance}
\end{equation}
The final equality uses the symmetry of $\boldsymbol{M}_e$. Zero padding makes the expression valid also for edges incident to the removed pin $k$, for which $p_{ik}=a_{ik}M_{e,ii}$. The resulting edge probabilities are independent of the choice of $k$.

Collecting these coefficients gives a matrix evaluation of Equation~\eqref{eq:marginal_probability_matrix}:
\begin{equation}
    \begin{aligned}
    \boldsymbol{P}_e
       &=\boldsymbol{A}_e\odot
         \bigl[\operatorname{Diag}(\boldsymbol{M}_e)\boldsymbol{1}^{\mathsf T}\\
       &\qquad+\boldsymbol{1}\operatorname{Diag}(\boldsymbol{M}_e)^{\mathsf T}
          -2\boldsymbol{M}_e\bigr],
    \end{aligned}
    \label{eq:edge_coefficients}
\end{equation}
where $\operatorname{Diag}(\boldsymbol{M}_e)$ is the column vector of diagonal entries and $\odot$ denotes the Hadamard product. The two outer products repeat the diagonal entries across rows and columns, respectively, producing a symmetric matrix with a zero diagonal.

Each entry $p_{ij}$ supplies the edge sensitivity in Equation~\eqref{eq:chain_rule}. The remaining factor is the spatial derivative, obtained from the smoothed edge length in Equation~\eqref{eq:edge_weight}:
\begin{equation}\label{eq:spatial_deriv}
    \begin{aligned}
    \frac{\partial w_{ij}}{\partial x_i}
        &=\tanh\!\left(\frac{x_i-x_j}{\gamma}\right)\\
        &\quad+\frac{x_i-x_j}{\gamma}\operatorname{sech}^2\!\left(\frac{x_i-x_j}{\gamma}\right),\\
    \frac{\partial w_{ij}}{\partial y_i}
        &=\tanh\!\left(\frac{y_i-y_j}{\gamma}\right)\\
        &\quad+\frac{y_i-y_j}{\gamma}\operatorname{sech}^2\!\left(\frac{y_i-y_j}{\gamma}\right).
    \end{aligned}
\end{equation}
Substituting these spatial derivatives and the edge marginal $p_{ij}$ into Equation~\eqref{eq:chain_rule} yields
\begin{equation}
    \begin{aligned}
    \frac{\partial\hat{\mathcal{W}}_{\text{RMST}}^\beta}{\partial x_i}
       &=\sum_{\substack{j\in V_e\\j\neq i}}
          p_{ij}\frac{\partial w_{ij}}{\partial x_i},\\
    \frac{\partial\hat{\mathcal{W}}_{\text{RMST}}^\beta}{\partial y_i}
       &=\sum_{\substack{j\in V_e\\j\neq i}}
          p_{ij}\frac{\partial w_{ij}}{\partial y_i},
    \end{aligned}
    \label{eq:coordinate_gradients}
\end{equation}
for every $i\in V_e$. Because each edge length depends on coordinate differences, it contributes equal and opposite gradients at its endpoints. A single inverse minor supplies all coefficients in $\boldsymbol{P}_e$, after which these gradients require only an accumulation over incident edges. The next subsection reuses $\boldsymbol{P}_e$ to obtain the forward output within the same computation.

\subsection{Fused Forward and Backward Operator Implementation}
\label{sec:parallel_merged}
Although the differentiable formulation enables gradient-based optimization, its repeated evaluation over a large number of nets can introduce substantial computational overhead in the global placement iteration. Thus, we fuse the wirelength forward value and the required spatial derivatives within each net's RMST evaluation. The CPU uses dynamic scheduling across nets. CUDA selects a single thread, a subgroup within a warp, or a thread block according to net degree. Small multipin nets use register-resident matrices and shuffles, while larger nets use shared memory, warp-owned rows, and consecutive-column accesses, with large workspaces reused across batches. Finally, the existing density overflow controls $\gamma$ and the degree-dependent $\beta$, maintaining smoothness early in placement and sharpening the edge distribution as density feasibility improves.

\subsubsection{Forward Cost Proxy}
Linearity of expectation gives the forward cost proxy from the edge marginals:
\begin{equation}
\mathbb{E}_e[\mathcal{W}_{\text{RMST}}]=\frac12\langle\boldsymbol{W}_e,\boldsymbol{P}_e\rangle=\sum_{\substack{i,j\in V_e\\i<j}}w_{ij}p_{ij}.
\label{eq:expected_forward}
\end{equation}
Here, $\langle\cdot,\cdot\rangle$ is the Frobenius inner product. The factor $1/2$ accounts for the two symmetric entries per undirected edge, so each unordered pair contributes once. The proxy is related to the LSE objective in Equation~\eqref{eq:logdet_obj} by
\begin{equation}
    \begin{aligned}
    \mathbb{E}_e[\mathcal{W}_{\text{RMST}}]
       -\hat{\mathcal{W}}_{\text{RMST}}^\beta
       &=\beta H_e,\\
    0\leq\beta H_e&\leq\beta\log|\mathcal{T}_e|,
    \end{aligned}
    \label{eq:forward_proxy_gap}
\end{equation}
where $H_e$ is the entropy of the normalized tree distribution. Corollary~\ref{cor:forward_proxy_error} in Section~\ref{sec:theory} proves this relation and bounds the proxy's error relative to exact RMST. Optimization uses the LSE derivatives in Equation~\eqref{eq:coordinate_gradients}, so replacing the reported forward value introduces no gradient approximation.

Equation~\eqref{eq:expected_forward} requires only a multiplication and an accumulation per edge during the existing gradient traversal. It avoids the separate determinant and logarithm evaluation for the scalar cost. The marginal solve is shared by both outputs and is not repeated in backward.

\begin{algorithm}[t]
    \caption{CPU Scheduling and Per-Net Computation}
    \label{alg:cpu_merged}
    \small
    \algrenewcommand\algorithmicindent{1em}
    \begin{algorithmic}[1]
        \Require Nets $E$, coordinates $x,y$, parameters $\{\beta_e,\gamma_e\}$
        \Ensure $\sum_e c_e,\{\boldsymbol{g}_i\}$
        \Function{TwoPin}{$e,x,y,\gamma$}
            \State $c_e\leftarrow w_{12}$
            \State $\boldsymbol{g}_1\leftarrow\nabla_{\mathbf{p}_1}w_{12}$, $\boldsymbol{g}_2\leftarrow-\boldsymbol{g}_1$
            \State \Return $c_e,\{\boldsymbol{g}_i\}_{i\in V_e}$
        \EndFunction
        \Function{MultiPin}{$e,x,y,\beta,\gamma$}
            \State $c_e\leftarrow0$, $\boldsymbol{g}_i\leftarrow\boldsymbol{0}\quad(i\in V_e)$
            \State $\boldsymbol{W}_e\leftarrow[w_{ij}]$ \Comment{joint edge derivatives}
            \State $\boldsymbol{A}_e\leftarrow\exp(-\boldsymbol{W}_e/\beta)-\boldsymbol{I}$
            \State $\boldsymbol{P}_e\leftarrow\Call{KronMarginals}{\boldsymbol{A}_e}$
            \ForAll{$i<j$}
                \State $c_e\leftarrow c_e+w_{ij}p_{ij}$
                \State $\boldsymbol{g}_i\leftarrow\boldsymbol{g}_i+p_{ij}\nabla_{\mathbf{p}_i}w_{ij}$
                \State $\boldsymbol{g}_j\leftarrow\boldsymbol{g}_j-p_{ij}\nabla_{\mathbf{p}_i}w_{ij}$
            \EndFor
            \State \Return $c_e,\{\boldsymbol{g}_i\}_{i\in V_e}$
        \EndFunction
        \Function{CPUForwardBackward}{$E,x,y,\{\beta_e,\gamma_e\}$}
            \State $c_e\leftarrow0$, $\boldsymbol{g}_i\leftarrow\boldsymbol{0}$ for all nets and pins
            \ForAll{nets $e$ in parallel} \Comment{dynamic scheduling}
                \If{$|V_e|=2$}
                    \State $c_e,\{\boldsymbol{g}_i\}_{i\in V_e}\leftarrow\Call{TwoPin}{e,x,y,\gamma_e}$
                \ElsIf{$|V_e|>2$}
                    \State $c_e,\{\boldsymbol{g}_i\}_{i\in V_e}\leftarrow\Call{MultiPin}{e,x,y,\beta_e,\gamma_e}$
                \EndIf
            \EndFor
            \State \Return $\sum_e c_e,\{\boldsymbol{g}_i\}$
        \EndFunction
    \end{algorithmic}
\end{algorithm}

\subsubsection{CPU Implementation}
Algorithm~\ref{alg:cpu_merged} gives the \textsc{TwoPin} and \textsc{MultiPin} routines followed by their parallel CPU driver. Here, $c_e$ denotes the forward cost proxy, and $\boldsymbol{g}_i=\bigl(\frac{\partial\hat{\mathcal{W}}_{\text{RMST}}^\beta}{\partial x_i},\frac{\partial\hat{\mathcal{W}}_{\text{RMST}}^\beta}{\partial y_i}\bigr)^{\mathsf T}$ is the LSE spatial gradient of pin $i$ in Equation~\eqref{eq:coordinate_gradients}. The per-net parameters $\beta_e,\gamma_e$ are specified after the complexity analysis, with their net indices omitted inside each per-net routine.

\textsc{TwoPin} evaluates one edge and its opposite endpoint derivatives without matrix operations. \textsc{MultiPin} evaluates the edge lengths and their spatial derivatives together, obtains $\boldsymbol{P}_e$ once, and accumulates both outputs over $i<j$. The common \textsc{KronMarginals} routine takes $\boldsymbol{A}_e$ as input and returns the symmetric $\boldsymbol{P}_e$ according to Equation~\eqref{eq:edge_coefficients}. Successive Kron elimination applies Schur complements to the Laplacian~\cite{dorfler2013kron}, and reverse reconstruction supplies the inverse information required by the edge marginals.

Dynamic scheduling assigns the next available net to each CPU worker, balancing the matrix workload across net degrees. Each worker reuses its private matrix storage. Distinct nets have disjoint pin sets, so their coordinate derivatives can be written independently.

\begin{algorithm}[t]
    \caption{GPU Computation for Small and Big Nets}
    \label{alg:gpu_kernels}
    \small
    \algrenewcommand\algorithmicindent{1em}
    \begin{algorithmic}[1]
        \Require Bucket $B$, coordinates $x,y$, parameters $\{\beta_e,\gamma_e\}$
        \Ensure $\{c_e\},\{\boldsymbol{g}_i\}$
        \Function{SmallNet}{$B,x,y,\{\beta_e,\gamma_e\},s$}
            \ForAll{$e\in B$ in parallel} \Comment{one $s$-lane subgroup}
                \State $\beta\leftarrow\beta_e$, $\gamma\leftarrow\gamma_e$
                \State $i\leftarrow1+\text{lane index within subgroup}$
                \State $\boldsymbol{W}_e[i,:]\leftarrow[w_{ij}]_j$ \Comment{register row}
                \State $\boldsymbol{A}_e[i,:]\leftarrow[a_{ij}]_j$ \Comment{shuffles}
                \State $\boldsymbol{P}_e\leftarrow\Call{KronMarginals}{\boldsymbol{A}_e}$
                \State $\boldsymbol{g}_i,c_{e,i}\leftarrow\text{Eqs.~\eqref{eq:coordinate_gradients},~\eqref{eq:expected_forward}}$
                \State $c_e\leftarrow\operatorname{Reduce}_s(c_{e,i})$ \Comment{lane $0$ writes}
            \EndFor
        \EndFunction
        \Function{BigNet}{$B,x,y,\{\beta_e,\gamma_e\},\text{storage}$}
            \ForAll{$e\in B$ in parallel} \Comment{one $128$-thread block}
                \State $\beta\leftarrow\beta_e$, $\gamma\leftarrow\gamma_e$
                \ForAll{owned pairs $i<j$ in parallel}
                    \State $\boldsymbol{W}_e[i,j]\leftarrow w_{ij}$
                    \State $\boldsymbol{A}_e[i,j]\leftarrow\exp(-w_{ij}/\beta)$
                \EndFor
                \State \Call{SyncBlock}{}
                \State $\boldsymbol{P}_e\leftarrow\Call{KronMarginals}{\boldsymbol{A}_e}$ \Comment{block sync}
                \ForAll{warp-owned rows $i$ in parallel}
                    \State $\boldsymbol{g}_i,c_{e,i}\leftarrow\text{Eqs.~\eqref{eq:coordinate_gradients},~\eqref{eq:expected_forward}}$
                \EndFor
                \State $c_e\leftarrow\operatorname{Reduce}_{\rm block}(c_{e,i})$
            \EndFor
        \EndFunction
    \end{algorithmic}
\end{algorithm}

\subsubsection{GPU Implementation}
The GPU reuses \textsc{TwoPin} with one thread per net. For $n>2$, \textsc{SmallNet} and \textsc{BigNet} in Algorithm~\ref{alg:gpu_kernels} map \textsc{MultiPin} to subgroups and thread blocks, respectively. Both evaluate Equations~\eqref{eq:coordinate_gradients} and~\eqref{eq:expected_forward} by rows, with $c_{e,i}$ denoting the cost contribution from pairs $(i,j)$ with $j>i$. Each pin's derivatives are accumulated by one lane or warp and written once, avoiding concurrent endpoint updates.

The GPU recomputes $w_{ij}$, $a_{ij}$, and the spatial derivatives during the fused output traversal. Thus, one matrix workspace successively holds affinities, elimination coefficients, and reconstructed inverse information, without separate dense arrays for edge derivatives and probabilities.

For a degree bucket $B$ containing nets with $3\leq n\leq32$, \textsc{SmallNet} assigns one net to a subgroup of width
\begin{equation}
    s=2^{\left\lceil\log_2\!\left(\max_{e\in B}|V_e|\right)\right\rceil}
       \in\{4,8,16,32\}.
    \label{eq:gpu_subgroup_width}
\end{equation}
Each valid lane owns one pin and one matrix row in registers. Unrolled row operations and subgroup shuffles provide the coordinate and elimination-data exchanges. A subgroup reduction combines the partial costs into $c_e$. Multiple nets share a warp when $s<32$. Padding lanes participate in collectives with neutral contributions, while only valid lanes evaluate pin rows and write outputs.

For $n>32$, \textsc{BigNet} assigns one $128$-thread block per net. Warps own rows, and consecutive lanes access consecutive columns. The upper triangle represents the symmetric $\boldsymbol{A}_e$ with a zero diagonal. The cubic elimination sweep updates only $i<j$, avoiding transposed shared-memory stores and index division during elimination. Reverse reconstruction restores symmetric entries before use. Warp reductions produce the coordinate derivatives, followed by a block reduction for $c_e$. Matrix storage and the $\mathcal{O}(n)$ auxiliary arrays reside in shared memory when they fit and otherwise use reusable global-memory batches.

Nets with $n\geq2$ are partitioned into degree buckets, each confined to one kernel range. Algorithm~\ref{alg:gpu_merged} dispatches these buckets sequentially on the current CUDA stream. Each iteration first selects the kernel and its execution configuration, then launches that kernel once with the bucket's coordinates and parameters. The \textsc{TwoPin} launch applies the per-net routine in Algorithm~\ref{alg:cpu_merged} to each net using one thread. The final scalar reduction and the subsequent backward operations execute on the same stream.

\begin{algorithm}[t]
    \caption{GPU Scheduling over Degree Buckets}
    \label{alg:gpu_merged}
    \small
    \algrenewcommand\algorithmicindent{1em}
    \begin{algorithmic}[1]
        \Require Degree buckets of $E$, coordinates $x,y$, $\{\beta_e,\gamma_e\}$
        \Ensure $\sum_e c_e,\{\boldsymbol{g}_i\}$
        \Function{GPUForwardBackward}{$E,x,y,\{\beta_e,\gamma_e\}$}
            \State $c_e\leftarrow0$, $\boldsymbol{g}_i\leftarrow\boldsymbol{0}$ for all nets and pins
            \For{each degree bucket $B$ sequentially}
                \If{$\max_{e\in B}|V_e|=2$}
                    \State \Call{Launch}{$\textsc{TwoPin},B,x,y,\{\gamma_e\}$}
                \ElsIf{$\max_{e\in B}|V_e|\leq32$}
                    \State $s\leftarrow\text{Equation~\eqref{eq:gpu_subgroup_width}}$
                    \State \Call{Launch}{$\textsc{SmallNet},B,x,y,\{\beta_e,\gamma_e\},s$}
                \Else
                    \State $\text{storage}\leftarrow\text{shared if it fits, else global}$
                    \State \Call{Launch}{$\textsc{BigNet},B,x,y,\{\beta_e,\gamma_e\},\text{storage}$}
                \EndIf
            \EndFor
            \State \Return $\sum_e c_e,\{\boldsymbol{g}_i\}$
        \EndFunction
    \end{algorithmic}
\end{algorithm}

\subsubsection{Complexity Analysis}
For an $n$-pin net, Kron elimination and reverse reconstruction require $\mathcal{O}(\sum_{k=1}^{n-1}(n-k)^2)=\mathcal{O}(n^3)$ operations. Edge evaluation and fused accumulation need $\mathcal{O}(n^2)$ work and temporary storage. The two-pin path requires constant work and storage.

The total work of fused evaluation ($\mathcal{C}_{\rm f}$) and backward ($\mathcal{C}_{\rm b}$) is
\begin{equation}
    \mathcal{C}_{\rm f}=\mathcal{O}\!\left(\sum_{e\in E}|V_e|^3\right),
    \qquad
    \mathcal{C}_{\rm b}=\mathcal{O}\!\left(\sum_{e\in E}|V_e|\right).
    \label{eq:merged_complexity}
\end{equation}
With $q$ CPU workers, dynamic task assignment gives the ideal execution-time bound
\begin{equation}
    T_{\rm CPU}
       =\mathcal{O}\!\left(\frac{\mathcal{C}_{\rm f}}{q}
          +\max_{e\in E}|V_e|^3\right),
    \label{eq:cpu_parallel_bound}
\end{equation}
excluding scheduling overhead. The second term accounts for the longest indivisible net computation. On the GPU, subgroup and block cooperation parallelize the row and column operations within each stage, while successive elimination and reconstruction stages remain dependent. Bucket execution times add because dispatch is sequential, with one launch per bucket.

Forward retains $2\sum_e|V_e|$ coordinate derivatives, which backward scales by the upstream gradient and weights without matrix operations. CPU temporary storage is $\mathcal{O}(q\max_e|V_e|^2)$, while the GPU workspace is sized for the largest batch. Fusion avoids repeated marginal evaluation without changing the cubic arithmetic complexity.

\subsubsection{Adaptive Parameter Update}
We adapt the net-specific smoothing parameters to density overflow and net degree. Let $o_t$ denote the overflow after global placement iteration $t$ and $o_{\rm stop}=0.1$ the density stopping threshold. We set $\gamma_{\rm base}=4$ and $\beta_0=1/6$, and initialize $\gamma_0=10\gamma_{\rm base}$ and $r_0=1$. The corresponding initial parameters of an $n$-pin net are $\gamma_{e,0}=5n\gamma_{\rm base}$ and $\beta_{e,0}=2/(3\log n)$. Subsequent iterations use
\begin{equation}
    \begin{aligned}
    &r_t\leftarrow\operatorname{clip}
       \!\left(\frac{o_t-o_{\rm stop}}{1-o_{\rm stop}},0,1\right),\\
    &\gamma_t\leftarrow\gamma_{\rm base}
       10^{\frac{20}{9}(o_t-o_{\rm stop})-1},\\
    &\gamma_{e,t}\leftarrow\gamma_t
       \left[1+r_t\left(\frac{n}{2}-1\right)\right],\\
    &\beta_{e,t}\leftarrow\beta_0\frac{1+3r_t}{\log n},
    \end{aligned}
    \label{eq:adaptive_parameters}
\end{equation}
where $n=|V_e|\geq2$ and $\operatorname{clip}(\cdot,0,1)$ truncates its argument to $[0,1]$. The update of $\gamma_t$ follows the DREAMPlace implementation~\cite{lin2021dreamplace}. The parameter $\beta$ follows the temperature convention $a_{ij}=\exp(-w_{ij}/\beta)$. Large overflow increases both spatial and topological smoothing. As overflow approaches $o_{\rm stop}$, the temperatures decrease and the edge distribution becomes more concentrated. An increase in overflow can restore smoothing. The resulting $\beta_e=\beta_{e,t}$ and $\gamma_e=\gamma_{e,t}$ remain fixed throughout each fused evaluation and its backward operation.

\subsection{Theoretical Properties}
\label{sec:theory}
Following the formulation and implementation, we analyze the theoretical properties relevant to placement optimization.
We next analyze the approximation error, invariance properties, and gradient regularity of Equation~\eqref{eq:lse_obj}. Throughout this subsection, $n\geq2$ and the per-net parameters $\beta,\gamma>0$ are held fixed, as in each fused evaluation. The forward expectation in Equation~\eqref{eq:expected_forward} is analyzed separately from the LSE objective whose gradient drives optimization.

\begin{theorem}[Bounded Approximation Error]
\label{thm:bounded_error}
For every pin configuration, the relaxed objective satisfies
\begin{equation}
    \begin{aligned}
    0&\leq\mathcal{W}_{\text{RMST}}
       -\hat{\mathcal{W}}_{\text{RMST}}^\beta\\
     &\leq2C\gamma(n-1)+\beta(n-2)\log n,
    \end{aligned}
    \label{eq:rmst_approximation_bound}
\end{equation}
where $C=\max_{u\geq0}u(1-\tanh u)\approx0.278465$.
\end{theorem}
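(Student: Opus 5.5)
Let $S(T)=\sum_{\{u,v\}\in T}w_{uv}$ denote the smoothed length of a tree $T\in\mathcal{T}_e$, and let $D(T)=\sum_{\{u,v\}\in T}d_{uv}$ denote its exact Manhattan length, so that $\mathcal{W}_{\text{RMST}}=\min_T D(T)$. The plan is to split the gap into a spatial part and a topological part,
\[
\mathcal{W}_{\text{RMST}}-\hat{\mathcal{W}}_{\text{RMST}}^\beta
=\Bigl(\min_T D(T)-\min_T S(T)\Bigr)+\Bigl(\min_T S(T)-\hat{\mathcal{W}}_{\text{RMST}}^\beta\Bigr),
\]
and to show that each bracket is nonnegative and bounded by the corresponding term on the right-hand side of Equation~\eqref{eq:rmst_approximation_bound}.

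\textbf{Spatial part.} For a scalar $d$, the deficit of the smoothing function is
\[
|d|-\phi_\gamma(d)=|d|\bigl(1-\tanh(|d|/\gamma)\bigr)=\gamma\,u(1-\tanh u),\qquad u=|d|/\gamma\geq0.
\]
It follows that $0\leq |d|-\phi_\gamma(d)\leq C\gamma$, with $C=\max_{u\geq0}u(1-\tanh u)$; this maximum exists because the function is continuous, vanishes at $u=0$, and tends to $0$ as $u\to\infty$. Summing the $x$ and $y$ terms gives $0\leq d_{ij}-w_{ij}\leq 2C\gamma$ for every edge. Each tree has $n-1$ edges, so $0\leq D(T)-S(T)\leq 2C\gamma(n-1)$ for every $T$.

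To pass to the minima, let $T^*$ minimize $S$ and $T^\dagger$ minimize $D$. Then
\[
\min D-\min S\geq D(T^*)-S(T^*)\geq0,\qquad \min D-\min S\leq D(T^*)-S(T^*)\leq 2C\gamma(n-1).
\]
The lower bound uses $\min D\leq D(T^*)$ only implicitly through $\min D\geq \min S$, which holds since $D\geq S$ pointwise; the upper bound uses $\min D\leq D(T^*)$ directly.

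\textbf{Topological part.} Set $m=\min_T S(T)$. Then
\[
Z_e=\sum_T e^{-S(T)/\beta}\geq e^{-m/\beta},
\]
which gives $\hat{\mathcal{W}}_{\text{RMST}}^\beta\leq m$. Also
\[
Z_e\leq|\mathcal{T}_e|\,e^{-m/\beta},
\]
which gives $\hat{\mathcal{W}}_{\text{RMST}}^\beta\geq m-\beta\log|\mathcal{T}_e|$. This is the standard LSE sandwich already recalled in Section~\ref{sec:preliminaries}. By Cayley's formula $|\mathcal{T}_e|=n^{n-2}$, as stated before Theorem~\ref{thm:mtt}, so the topological gap lies in $[0,\beta(n-2)\log n]$. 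Adding the two parts yields Equation~\eqref{eq:rmst_approximation_bound}.

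The main step to handle carefully is the spatial part. One point is transferring the edgewise bound through two different minimizing trees; the one-sided choice of $T^*$ above handles this. The other is confirming that the constant is exactly $C$ and not $2C$ per coordinate, i.e.\ that the $2$ in the bound comes from the two coordinate directions and not from a looser estimate. The numerical value $C\approx0.278465$ is obtained by solving
\[
1-\tanh u-u\,\operatorname{sech}^2u=0,
\]
and is not needed for the inequality itself.
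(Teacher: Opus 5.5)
Your proof is correct and follows the paper's argument: the per-coordinate deficit $|d|-\phi_\gamma(d)=\gamma u(1-\tanh u)\leq C\gamma$ is summed over two directions and $n-1$ edges and carried through the minimum over the common tree set, then combined with the standard LSE sandwich and Cayley's count $|\mathcal{T}_e|=n^{n-2}$. One slip needs fixing: the displayed chain $\min D-\min S\geq D(T^*)-S(T^*)$ is false unless $T^*$ also minimizes $D$, so the lower bound should read $\min D=D(T^\dagger)\geq S(T^\dagger)\geq\min S$, which is what your prose says and is where $T^\dagger$ is actually needed.
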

\begin{IEEEproof}
The evenness of $\phi_\gamma$ and the definition of $C$ give
\begin{equation}
    0\leq |d|-\phi_\gamma(d)
      =\gamma\frac{|d|}{\gamma}
        \left(1-\tanh\frac{|d|}{\gamma}\right)
      \leq C\gamma.
    \label{eq:scalar_smoothing_bound}
\end{equation}
Thus $0\leq d_{ij}-w_{ij}\leq2C\gamma$. Every spanning tree contains $n-1$ edges, so its smoothed length underestimates its Manhattan length by at most $2C\gamma(n-1)$. Taking the minimum over the same set of spanning trees preserves the additive error bound. The LSE relaxation contributes the additional error
\begin{equation}
    \begin{aligned}
    0&\leq\min_{T\in\mathcal{T}_e}
        \sum_{\{i,j\}\in T}w_{ij}
        -\hat{\mathcal{W}}_{\text{RMST}}^\beta\\
     &\leq\beta\log|\mathcal{T}_e|
      =\beta(n-2)\log n.
    \end{aligned}
    \label{eq:topological_relaxation_bound}
\end{equation}
Adding the spatial and topological bounds proves Equation~\eqref{eq:rmst_approximation_bound}.
\end{IEEEproof}

\begin{corollary}[Forward Proxy Error]
\label{cor:forward_proxy_error}
The expected wirelength returned by the fused operator satisfies the entropy relation in Equation~\eqref{eq:forward_proxy_gap} and the error bound
\begin{equation}
    \begin{aligned}
    -2C\gamma(n-1)
       &\leq\mathbb{E}_e[\mathcal{W}_{\text{RMST}}]
           -\mathcal{W}_{\text{RMST}}\\
       &\leq\beta(n-2)\log n.
    \end{aligned}
    \label{eq:forward_proxy_error_bound}
\end{equation}
\end{corollary}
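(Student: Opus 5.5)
Let $S(T)=\sum_{\{u,v\}\in T}w_{uv}$ denote the smoothed length of a tree, and let $\pi(T)=\exp(-S(T)/\beta)/Z_e$ be the normalized tree distribution. The plan has two parts. First, establish the entropy identity in Equation~\eqref{eq:forward_proxy_gap}. Then combine it with Theorem~\ref{thm:bounded_error} and the scalar smoothing bound \eqref{eq:scalar_smoothing_bound} to obtain the two-sided error bound.

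For the identity, I will first check that the proxy \eqref{eq:expected_forward} is an expectation under $\pi$. By linearity, $\sum_{i<j}w_{ij}p_{ij}=\sum_{i<j}w_{ij}\sum_T\pi(T)\mathbf{1}[\{i,j\}\in T]=\sum_T\pi(T)S(T)=\mathbb{E}_\pi[S]$. Next, taking logarithms in the definition of $\pi$ gives $\log\pi(T)=-S(T)/\beta-\log Z_e$, so $S(T)=-\beta\log\pi(T)-\beta\log Z_e$. Averaging this over $\pi$ yields
\[
\mathbb{E}_\pi[S]=\beta H_e-\beta\log Z_e=\beta H_e+\hat{\mathcal{W}}_{\text{RMST}}^\beta,
\]
using $\hat{\mathcal{W}}_{\text{RMST}}^\beta=-\beta\log Z_e$ from \eqref{eq:lse_obj}. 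The standard entropy bounds $0\le H_e\le\log|\mathcal{T}_e|$ then complete \eqref{eq:forward_proxy_gap}. By Cayley's formula, $\log|\mathcal{T}_e|=(n-2)\log n$.

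For the upper bound in \eqref{eq:forward_proxy_error_bound}, write the proxy as $\hat{\mathcal{W}}_{\text{RMST}}^\beta+\beta H_e$. The left inequality of Theorem~\ref{thm:bounded_error} gives $\hat{\mathcal{W}}_{\text{RMST}}^\beta\le\mathcal{W}_{\text{RMST}}$. Hence $\mathbb{E}_e[\mathcal{W}_{\text{RMST}}]-\mathcal{W}_{\text{RMST}}\le\beta H_e\le\beta(n-2)\log n$.

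For the lower bound, I will not route through the LSE value, since that would add an unwanted $\beta$ term. Instead I bound the expectation directly. Every tree satisfies $S(T)\ge\min_{T'}S(T')$, so $\mathbb{E}_\pi[S]\ge\min_{T}S(T)$. For any fixed tree, \eqref{eq:scalar_smoothing_bound} gives $w_{ij}\ge d_{ij}-2C\gamma$ on each of its $n-1$ edges. Therefore $S(T)\ge\sum_{\{i,j\}\in T}d_{ij}-2C\gamma(n-1)\ge\mathcal{W}_{\text{RMST}}-2C\gamma(n-1)$, and taking the minimum over trees gives the claimed lower bound.

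None of these steps is difficult. The point needing most care is the exchange of summations in the first step, namely that $p_{ij}$ as defined in \eqref{eq:prob_derivation} is exactly the $\pi$-probability that $\{i,j\}\in T$. This holds directly from the ratio form of \eqref{eq:prob_derivation}. Together with the choice to bound the lower side via $\min_T S(T)$ rather than via the LSE, this gives the asymmetric form of \eqref{eq:forward_proxy_error_bound}.
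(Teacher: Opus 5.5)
Your proposal is correct and follows essentially the same route as the paper: the entropy identity from averaging $-\log\pi(T)$, the upper bound via $\hat{\mathcal{W}}_{\text{RMST}}^\beta\le\mathcal{W}_{\text{RMST}}$ together with $\beta H_e\le\beta\log|\mathcal{T}_e|$, and the lower bound taken directly through the minimum smoothed tree length rather than through the LSE value. Your explicit check that $\sum_{i<j}w_{ij}p_{ij}=\mathbb{E}_\pi[S]$ spells out a step the paper only attributes to linearity of expectation.
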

\begin{IEEEproof}
Taking the expected negative logarithm of the normalized tree probability gives $H_e=\mathbb{E}_e[\mathcal{W}_{\text{RMST}}]/\beta+\log Z_e$. Substituting $\hat{\mathcal{W}}_{\text{RMST}}^\beta=-\beta\log Z_e$ and using the entropy bounds $0\leq H_e\leq\log|\mathcal{T}_e|$~\cite{boyd2004convex} proves Equation~\eqref{eq:forward_proxy_gap}.

The expectation is at least the minimum smoothed tree length, which is at least $\mathcal{W}_{\text{RMST}}-2C\gamma(n-1)$ by the preceding proof. For the upper bound, Equation~\eqref{eq:forward_proxy_gap} and Theorem~\ref{thm:bounded_error} give $\mathbb{E}_e[\mathcal{W}_{\text{RMST}}]\leq\hat{\mathcal{W}}_{\text{RMST}}^\beta+\beta\log|\mathcal{T}_e|\leq\mathcal{W}_{\text{RMST}}+\beta\log|\mathcal{T}_e|$.
\end{IEEEproof}
Both outputs therefore converge uniformly to the exact RMST wirelength as $\beta,\gamma\to0^+$ for fixed $n$. At finite temperatures, the LSE objective is a lower bound, whereas the forward proxy can lie on either side of the exact wirelength. For a two-pin net, the topological error vanishes because there is only one spanning tree.

\begin{theorem}[Spatial Translation Invariance]
\label{thm:translation_invariance}
For any translation $\mathbf{\Delta}=(\Delta_x,\Delta_y)$, set $x'=x+\Delta_x\boldsymbol{1}$ and $y'=y+\Delta_y\boldsymbol{1}$. Then
\begin{equation}
    \begin{aligned}
    \hat{\mathcal{W}}_{\text{RMST}}^\beta(e;x',y')
       &=\hat{\mathcal{W}}_{\text{RMST}}^\beta(e;x,y),\\
    \boldsymbol{P}'_e&=\boldsymbol{P}_e,
       \qquad\boldsymbol{g}'_i=\boldsymbol{g}_i.
    \end{aligned}
    \label{eq:translation_invariance}
\end{equation}
The forward proxy is also invariant, with $c'_e=c_e$.
\end{theorem}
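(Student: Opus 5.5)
The plan is to show that every quantity in the pipeline depends on the coordinates only through the pairwise differences $x_i-x_j$ and $y_i-y_j$. These differences are unchanged by a common translation, so each downstream object is unchanged as well.

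First, under $x'=x+\Delta_x\boldsymbol{1}$ and $y'=y+\Delta_y\boldsymbol{1}$ we have $x'_i-x'_j=x_i-x_j$ and $y'_i-y'_j=y_i-y_j$ for all $i,j\in V_e$. Equation~\eqref{eq:edge_weight} then gives $w'_{ij}=w_{ij}$, hence $\boldsymbol{W}'_e=\boldsymbol{W}_e$. Since $\beta$ and $\gamma$ are held fixed during each fused evaluation, Equation~\eqref{eq:affinity_matrix} gives $\boldsymbol{A}'_e=\boldsymbol{A}_e$. Equation~\eqref{eq:laplacian_matrices} then gives $\boldsymbol{L}'_e=\boldsymbol{L}_e$, so the minor, the padded inverse $\boldsymbol{M}_e$, and $Z_e$ are all unchanged.

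Invariance of the objective follows from Equation~\eqref{eq:lse_obj}, or equivalently from Equation~\eqref{eq:logdet_obj} via Theorem~\ref{thm:mtt}. Invariance of $\boldsymbol{P}_e$ follows from Equation~\eqref{eq:edge_coefficients}, or more directly from the ratio in Equation~\eqref{eq:prob_derivation}, which is a function of the $w_{uv}$ alone. This avoids any dependence on how the inverse is computed.

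For the gradients, the spatial derivatives in Equation~\eqref{eq:spatial_deriv} are functions of $x_i-x_j$ and $y_i-y_j$ only, so they are unchanged. Combined with $p'_{ij}=p_{ij}$, Equation~\eqref{eq:coordinate_gradients} gives $\boldsymbol{g}'_i=\boldsymbol{g}_i$. The forward proxy $c_e=\sum_{i<j}w_{ij}p_{ij}$ in Equation~\eqref{eq:expected_forward} is unchanged for the same reason. The two-pin case is covered separately and trivially: \textsc{TwoPin} returns $w_{12}$ and $\pm\nabla_{\mathbf{p}_1}w_{12}$, both of which are functions of the coordinate differences.

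No step is a genuine obstacle. The one point needing care is that the claim concerns the parameters held fixed, as stipulated at the start of this subsection. The adaptive update in Equation~\eqref{eq:adaptive_parameters} depends on the density overflow rather than on the per-net coordinates, and within a fused evaluation $\beta_e$ and $\gamma_e$ are constants. I would state this explicitly so that the identity $\boldsymbol{A}'_e=\boldsymbol{A}_e$ is justified.
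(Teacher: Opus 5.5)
Your proof is correct and follows essentially the same route as the paper. Translation preserves coordinate differences, so $w'_{ij}=w_{ij}$, which leaves the tree weights, the LSE objective, and the marginals unchanged. The spatial derivatives in Equation~\eqref{eq:spatial_deriv} depend only on differences, giving $\boldsymbol{g}'_i=\boldsymbol{g}_i$ and $c'_e=c_e$. Your extra remarks on the fixed per-net parameters and the two-pin path only make explicit what the paper's short proof leaves implicit.
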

\begin{IEEEproof}
Translation preserves all coordinate differences, hence $w'_{ij}=w_{ij}$ and the normalized weights of all trees are unchanged. This preserves the LSE objective and every edge marginal. Equation~\eqref{eq:spatial_deriv} also depends only on coordinate differences, so Equation~\eqref{eq:coordinate_gradients} gives $\boldsymbol{g}'_i=\boldsymbol{g}_i$. The proxy equality follows from Equation~\eqref{eq:expected_forward}.
\end{IEEEproof}

\begin{corollary}[Zero-Sum Gradient]
\label{cor:zero_sum}
The net's wirelength gradients satisfy
\begin{equation}
    \sum_{i\in V_e}\boldsymbol{g}_i=\boldsymbol{0}.
    \label{eq:zero_sum_gradient}
\end{equation}
\end{corollary}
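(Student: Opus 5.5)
The zero-sum property is the infinitesimal form of Theorem~\ref{thm:translation_invariance}, so there are two short routes. I would present the derivative route as the main argument and check it against a direct pairwise cancellation using Equation~\eqref{eq:coordinate_gradients}.

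\emph{Derivative route.} Fix $(x,y)$ and define $f(t)=\hat{\mathcal{W}}_{\text{RMST}}^\beta(e;x+t\boldsymbol{1},y)$ for $t\in\mathbb{R}$. By Theorem~\ref{thm:translation_invariance}, $f$ is constant, so $f'(0)=0$. The objective is continuously differentiable in the coordinates, since it is a composition of smooth maps. Here $\phi_\gamma$ is smooth, and $\log\det(\boldsymbol{L}_e^k)$ is smooth because $Z_e>0$. The chain rule therefore gives
\begin{equation*}
f'(0)=\sum_{i\in V_e}\frac{\partial\hat{\mathcal{W}}_{\text{RMST}}^\beta}{\partial x_i}.
\end{equation*}
Hence the $x$-components of the $\boldsymbol{g}_i$ sum to zero. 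Repeating the argument with $y+t\boldsymbol{1}$ handles the $y$-components.

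\emph{Direct check.} Sum Equation~\eqref{eq:coordinate_gradients} over $i$ to get $\sum_i\sum_{j\ne i}p_{ij}\,\partial w_{ij}/\partial x_i$. Group the terms by unordered pairs $\{i,j\}$. From Equation~\eqref{eq:marginal_probability_matrix}, the marginals are symmetric: $p_{ij}=p_{ji}$. Because $w_{ij}$ depends on $x_i-x_j$ through the even function $\phi_\gamma$, its two partial derivatives are opposite: $\partial w_{ij}/\partial x_j=-\partial w_{ij}/\partial x_i$. Equivalently, the derivative in Equation~\eqref{eq:spatial_deriv} is an odd function of $x_i-x_j$. So the two terms for each pair cancel, the sum vanishes, and the same holds in $y$.

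The only delicate point is notational. In Equation~\eqref{eq:chain_rule}, $w_{ij}=w_{ji}$ is treated as a single undirected variable. I must make sure each pair is counted once per endpoint, as in the updates $\boldsymbol{g}_i\mathrel{+}=\cdot$ and $\boldsymbol{g}_j\mathrel{-}=\cdot$ of Algorithm~\ref{alg:cpu_merged}, and never double-counted. Once that is fixed, both arguments are immediate.
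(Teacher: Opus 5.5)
Your proposal is correct, and your ``direct check'' is essentially the paper's own proof. The paper notes that each unordered edge contributes $p_{ij}\nabla_{\mathbf{p}_i}w_{ij}$ at pin $i$ and its negative at pin $j$, so the sum over pins cancels pairwise.

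Your main argument is a genuinely different route. You differentiate the constant function $t\mapsto\hat{\mathcal{W}}_{\text{RMST}}^\beta(e;x+t\boldsymbol{1},y)$ supplied by Theorem~\ref{thm:translation_invariance}. This is valid because the objective is smooth and $\boldsymbol{g}_i$ is by definition its exact partial gradient. The two routes buy different things:
\begin{itemize}
\item \emph{Derivative route.} It never uses Equation~\eqref{eq:eff_resistance} or the specific form of $\phi_\gamma$. It proves the zero-sum property for the exact gradient of any differentiable objective that depends only on coordinate differences, which also explains why the statement is naturally a corollary of translation invariance. However, it certifies only the exact gradient.
\item \emph{Pairwise route.} It uses only the symmetry $p_{ij}=p_{ji}$ and the fact that $w_{ij}$ depends on the coordinates through $x_i-x_j$ and $y_i-y_j$. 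It does not need the $p_{ij}$ to be the exact marginals. It therefore applies directly to the quantities the operator computes. The opposite-sign updates of $\boldsymbol{g}_i$ and $\boldsymbol{g}_j$ in Algorithm~\ref{alg:cpu_merged} give zero-sum gradients by construction, even when $\boldsymbol{P}_e$ carries rounding error from the elimination.
\end{itemize}
Your bookkeeping remark, that each unordered edge is counted once per endpoint, is exactly what makes the pairwise argument go through.
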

\begin{IEEEproof}
Each unordered edge contributes $p_{ij}\nabla_{\mathbf{p}_i}w_{ij}$ at pin $i$ and its negative at pin $j$. Summing over all pins cancels these contributions pairwise.
\end{IEEEproof}

\begin{theorem}[Permutation Invariance and Gradient Equivariance]
\label{thm:permutation_invariance}
For any permutation $\pi:V_e\to V_e$, relabel the pins by $\mathbf{p}'_i=\mathbf{p}_{\pi(i)}$. Then
\begin{equation}
    \begin{aligned}
    \hat{\mathcal{W}}_{\text{RMST}}^\beta(e;x',y')
       &=\hat{\mathcal{W}}_{\text{RMST}}^\beta(e;x,y),\\
    p'_{ij}&=p_{\pi(i)\pi(j)},
       \qquad\boldsymbol{g}'_i=\boldsymbol{g}_{\pi(i)}.
    \end{aligned}
    \label{eq:permutation_equivariance}
\end{equation}
The forward proxy is invariant, with $c'_e=c_e$.
\end{theorem}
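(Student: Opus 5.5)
The argument rests on the fact that relabelling the pins permutes the edge weights consistently and maps the set of spanning trees bijectively onto itself. Every quantity in the statement is a symmetric function of the trees, or a derivative of such a function, so each claimed identity should follow from this reindexing.

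\textbf{Step 1: the weights are permuted.} Since $\mathbf{p}'_i=\mathbf{p}_{\pi(i)}$, Equation~\eqref{eq:edge_weight} gives
\[
w'_{ij}=\phi_\gamma(x_{\pi(i)}-x_{\pi(j)})+\phi_\gamma(y_{\pi(i)}-y_{\pi(j)})=w_{\pi(i)\pi(j)}.
\]
Equivalently, $\boldsymbol{W}'_e=\boldsymbol{\Pi}^{\mathsf T}\boldsymbol{W}_e\boldsymbol{\Pi}$ for the associated permutation matrix $\boldsymbol{\Pi}$. The same relation holds for $a'_{ij}=a_{\pi(i)\pi(j)}$. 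The same computation applied to Equation~\eqref{eq:spatial_deriv} gives
\[
\nabla_{\mathbf{p}'_i}w'_{ij}=\nabla_{\mathbf{p}_{\pi(i)}}w_{\pi(i)\pi(j)}.
\]

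\textbf{Step 2: the trees are permuted.} The map $T\mapsto\pi(T)=\{\{\pi(u),\pi(v)\}:\{u,v\}\in T\}$ is a bijection of $\mathcal{T}_e$, because $G_e$ is complete and $\pi$ is a graph automorphism of it. The length of $T$ under $w'$ equals the length of $\pi(T)$ under $w$. Reindexing the sum in Equation~\eqref{eq:partition_function} therefore gives $Z'_e=Z_e$, and Equation~\eqref{eq:lse_obj} gives invariance of $\hat{\mathcal{W}}_{\text{RMST}}^\beta$. One could also argue through Equation~\eqref{eq:logdet_obj}: $\boldsymbol{L}'_e=\boldsymbol{\Pi}^{\mathsf T}\boldsymbol{L}_e\boldsymbol{\Pi}$, and by Theorem~\ref{thm:mtt} every principal minor of a Laplacian gives the same determinant. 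The tree bijection is cleaner, because it avoids tracking which row is deleted.

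\textbf{Step 3: marginals, gradients, and the proxy.} The same reindexing applied to the numerator of Equation~\eqref{eq:prob_derivation} shows that trees containing $\{i,j\}$ under $w'$ correspond to trees containing $\{\pi(i),\pi(j)\}$ under $w$. Hence $p'_{ij}=p_{\pi(i)\pi(j)}$.

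There is a subtle point at this step. The computed $\boldsymbol{P}_e$ from Equation~\eqref{eq:edge_coefficients} uses the fixed deleted index $k=n$, which is not permutation-equivariant entrywise for $\boldsymbol{M}_e$. This is harmless because the text notes that $p_{ij}$ is independent of $k$. I would invoke the probabilistic definition in Equation~\eqref{eq:marginal_prob_def} rather than the matrix formula, which sidesteps the issue.

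For the gradients, substitute into Equation~\eqref{eq:coordinate_gradients}:
\[
\boldsymbol{g}'_i=\sum_{j\neq i}p_{\pi(i)\pi(j)}\nabla_{\mathbf{p}_{\pi(i)}}w_{\pi(i)\pi(j)}.
\]
Since $j\mapsto\pi(j)$ is a bijection of $V_e\setminus\{i\}$ onto $V_e\setminus\{\pi(i)\}$, this equals $\boldsymbol{g}_{\pi(i)}$.

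Finally, $c'_e=\sum_{i<j}w'_{ij}p'_{ij}$ is a sum over unordered pairs. By Steps 1 and 3 it equals $\sum_{\{u,v\}}w_{uv}p_{uv}=c_e$, since $\pi$ permutes unordered pairs bijectively. The condition $i<j$ is only a device for listing each unordered pair once, so it imposes no constraint on the relabelling.

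The only step needing care is the $k$-independence noted in Step 3; everything else is bookkeeping.
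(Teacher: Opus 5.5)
Your proposal is correct and follows the paper's proof essentially step for step. Relabeling gives $w'_{ij}=w_{\pi(i)\pi(j)}$ and induces a length-preserving bijection on $\mathcal{T}_e$, which fixes $Z_e$ and permutes the marginals; the gradient and forward-proxy claims then follow by reindexing Equations~\eqref{eq:coordinate_gradients} and~\eqref{eq:expected_forward}, and your observation that the fixed grounding index $k=n$ makes $\boldsymbol{M}_e$ non-equivariant while $\boldsymbol{P}_e$ is unaffected is a point the paper leaves implicit.
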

\begin{IEEEproof}
Relabeling preserves edge lengths through $w'_{ij}=w_{\pi(i)\pi(j)}$ and induces a bijection on $\mathcal{T}_e$. Corresponding trees have identical lengths and normalized weights, preserving the partition function and permuting the edge marginals. The spatial derivatives transform under the same relabeling, so Equation~\eqref{eq:coordinate_gradients} gives the stated gradient equivariance. Summing each unordered edge once in Equation~\eqref{eq:expected_forward} preserves the forward proxy.
\end{IEEEproof}

\begin{theorem}[Globally Lipschitz Continuous Gradient]
\label{thm:lipschitz}
Let $\mathbf{v}=(x_1,\ldots,x_n,y_1,\ldots,y_n)^{\mathsf T}\in\mathbb{R}^{2n}$. The LSE objective is infinitely differentiable, and its coordinate gradient is globally $L$-Lipschitz continuous with
\begin{equation}
    L=n\left(\frac{2}{\gamma}+\frac{36}{25\beta}\right).
    \label{eq:gradient_lipschitz_constant}
\end{equation}
In particular, for any $\mathbf{u},\mathbf{v}\in\mathbb{R}^{2n}$,
\begin{equation}
    \begin{aligned}
    &\left\|\nabla\hat{\mathcal{W}}_{\text{RMST}}^\beta(\mathbf{u})
       -\nabla\hat{\mathcal{W}}_{\text{RMST}}^\beta(\mathbf{v})\right\|_2\\
    &\hspace{2em}\leq L\|\mathbf{u}-\mathbf{v}\|_2.
    \end{aligned}
    \label{eq:gradient_lipschitz}
\end{equation}
\end{theorem}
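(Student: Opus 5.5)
The plan is to bound the spectral norm of the Hessian of $F(\mathbf{v})=\hat{\mathcal{W}}_{\text{RMST}}^\beta(\mathbf{v})$ uniformly over $\mathbb{R}^{2n}$. A uniform bound $\|\nabla^2F\|_2\le L$ gives the stated Lipschitz estimate by the mean value theorem along the segment from $\mathbf{u}$ to $\mathbf{v}$.

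\textbf{Smoothness.} By Equation~\eqref{eq:logdet_obj}, $F=\Phi\circ w$ with $\Phi(w)=-\beta\log Z_e(w)$. Here $Z_e$ is a finite sum of exponentials, hence positive and $C^\infty$. Each $w_{ij}$ is a sum of the $C^\infty$ functions $\phi_\gamma$. So $F$ is infinitely differentiable.

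\textbf{Hessian decomposition.} The chain rule gives
\begin{equation*}
\nabla^2F=\sum_{i<j}p_{ij}\,\nabla^2 w_{ij}+J^{\mathsf T}\nabla^2\Phi\,J,
\end{equation*}
where $J$ is the Jacobian of the edge-weight vector, using $\partial\Phi/\partial w_{ij}=p_{ij}$ from Equation~\eqref{eq:marginal_prob_def}. Differentiating $p_{ij}$ once more gives $\nabla^2\Phi=-\frac1\beta\Sigma$, where $\Sigma$ is the covariance matrix of the edge indicators $\mathbf 1[\{i,j\}\in T]$ under the normalized tree distribution. The triangle inequality then reduces the claim to two bounds:
\begin{equation*}
\Bigl\|\sum_{i<j}p_{ij}\nabla^2w_{ij}\Bigr\|_2\le\frac{2n}{\gamma},\qquad \|J^{\mathsf T}\Sigma J\|_2\le\frac{36n}{25}.
\end{equation*}

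\textbf{First term.} I would write $\phi_\gamma'(d)=\tanh t+t\operatorname{sech}^2t$ with $t=d/\gamma$. Then
\begin{equation*}
\phi_\gamma''(d)=\frac{2}{\gamma}\operatorname{sech}^2t\,(1-t\tanh t),
\end{equation*}
and an elementary one-variable argument shows $|\phi''_\gamma|\le 2/\gamma$, with equality at $t=0$. The matrix $\nabla^2w_{ij}$ is block diagonal, with blocks $\phi''_\gamma(x_i-x_j)(\mathbf e_i-\mathbf e_j)(\mathbf e_i-\mathbf e_j)^{\mathsf T}$ and the analogous $y$ block. Hence for any unit vector the quadratic form is bounded in absolute value by $\frac2\gamma$ times the quadratic form of the Laplacian with edge weights $p_{ij}\le 1$. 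That Laplacian is dominated by the unweighted Laplacian of $K_n$, whose largest eigenvalue is $n$. The $x$ and $y$ blocks act on orthogonal coordinates, so the bound $2n/\gamma$ follows.

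\textbf{Second term (main obstacle).} I would use the fact that the tree distribution with affinities $a_{ij}$ is a determinantal process on edges. Its kernel is the transfer-current matrix $K$, a symmetric projection with $K_{ee}=p_e$ (cf.\ the effective-resistance form in Equation~\eqref{eq:eff_resistance}). Then $\Sigma=\operatorname{Diag}(p)-K\odot K$. By the Schur product theorem $K\odot K\succeq0$, so
\begin{equation*}
0\preceq\Sigma\preceq\operatorname{Diag}(p),
\qquad\text{hence}\qquad
J^{\mathsf T}\Sigma J\preceq\sum_{i<j}p_{ij}\nabla w_{ij}\nabla w_{ij}^{\mathsf T}.
\end{equation*}
Next I would show $\sup_d|\phi_\gamma'(d)|\le 6/5$, which is a numerical maximization of $\tanh t+t\operatorname{sech}^2t$. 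Comparison with the $K_n$ Laplacian as in the first term then yields the factor $\frac{36}{25}n$.

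The delicate point is that $\nabla w_{ij}$ mixes the $x$ and $y$ displacements of the same edge. A naive Cauchy--Schwarz on $(c^x\delta^x+c^y\delta^y)^2$ loses a factor $2$. I would first try to avoid this by exploiting the block structure of $J$ so that the $x$ and $y$ contributions are handled separately. If that fails, I would check whether the constant $36/25$ must absorb the cross term through a sharper joint bound on $(\phi'_\gamma(\Delta x),\phi'_\gamma(\Delta y))$.
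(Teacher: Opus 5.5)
Your skeleton matches the paper's. You use the split $\nabla_{\mathbf{v}}^2\hat{\mathcal{W}}_{\text{RMST}}^\beta=\sum_{i<j}p_{ij}\nabla_{\mathbf{v}}^2w_{ij}-\frac1\beta\boldsymbol{J}^{\mathsf T}\boldsymbol{\Sigma}\boldsymbol{J}$, the bound $2n/\gamma$ on the first term (which you handle correctly), $|\phi_\gamma'|\le 6/5$, and integration along the segment. The gap is in the second term, and the ``delicate point'' you flag is fatal for your route, not a technicality. After $\boldsymbol{\Sigma}\preceq\operatorname{Diag}(p)$ you need $\bigl\|\sum_{i<j}p_{ij}\nabla w_{ij}\nabla w_{ij}^{\mathsf T}\bigr\|_2\le 36n/25$, and this is false. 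Take $n=2$, so $p_{12}=1$, and $x_1-x_2=y_1-y_2=\gamma u^*$ with $u^*\tanh u^*=1$. Then $\phi_\gamma'(\gamma u^*)=u^*\approx1.1997$ is the maximum of $\phi'_\gamma$. The matrix is $\boldsymbol{g}\boldsymbol{g}^{\mathsf T}$ with $\|\boldsymbol{g}\|_2^2=2\phi'_\gamma(x_1-x_2)^2+2\phi'_\gamma(y_1-y_2)^2=4(u^*)^2\approx5.76>72/25$. The factor $2$ in $(c^x\delta^x+c^y\delta^y)^2\le2[(c^x\delta^x)^2+(c^y\delta^y)^2]$ is attained when $c^x=c^y$ and $\delta^x=\delta^y$, and both $\phi'_\gamma$ values can be maximal at once. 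So the block structure of $\boldsymbol{J}$ cannot save it: each row carries both an $x$ and a $y$ part, so $\boldsymbol{J}$ is not block diagonal. A joint bound on $(\phi'_\gamma(\Delta x),\phi'_\gamma(\Delta y))$ cannot save it either. As written, your argument only yields $L=n\bigl(2/\gamma+72/(25\beta)\bigr)$.

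The missing factor $\tfrac12$ has to come from $\boldsymbol{\Sigma}$, and discarding $K\odot K$ is exactly where you lose it. You already have the two facts the paper uses. First, the off-diagonal covariances $-K_{ef}^2$ are nonpositive. Second, $\boldsymbol{\Sigma}\boldsymbol{1}=\boldsymbol{0}$, because every spanning tree has $n-1$ edges and so the sum of the edge indicators is deterministic. Hence $\boldsymbol{\Sigma}$ is a weighted Laplacian on the edge index set, and each absolute row sum equals $2\Sigma_{ij,ij}=2p_{ij}(1-p_{ij})\le\frac12$. Gershgorin then gives $\|\boldsymbol{\Sigma}\|_2\le\frac12$; equivalently, $\boldsymbol{\Sigma}\preceq2\operatorname{Diag}(\Sigma_{ij,ij})$ because the corresponding signless Laplacian is positive semidefinite. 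With this, the crude bound $\|\boldsymbol{J}\|_2^2\le72n/25$, factor $2$ included, is enough: $\frac1\beta\|\boldsymbol{J}^{\mathsf T}\boldsymbol{\Sigma}\boldsymbol{J}\|_2\le\frac1\beta\cdot\frac12\cdot\frac{72n}{25}=\frac{36n}{25\beta}$. This is the paper's proof.
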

\begin{IEEEproof}
Smoothness follows from Equation~\eqref{eq:lse_obj}, a finite sum of positive smooth terms inside the logarithm. Differentiating $\phi_\gamma$ gives the uniform bounds
\begin{equation}
    |\phi_\gamma'(d)|\leq\frac65,
    \qquad |\phi_\gamma''(d)|\leq\frac2\gamma.
    \label{eq:spatial_derivative_bounds}
\end{equation}
Indeed, $\phi_\gamma''(d)=\frac2\gamma\operatorname{sech}^2u(1-u\tanh u)$ with $u=d/\gamma$. The maximum of $|\phi_\gamma'|$ is the positive root of $u\tanh u=1$, which is less than $6/5$, and $|\operatorname{sech}^2u(1-u\tanh u)|\leq1$.

Index each unordered edge once, using $i<j$. Let $\boldsymbol{J}$ have rows $(\nabla_{\mathbf{v}}w_{ij})^{\mathsf T}$, and let $\boldsymbol{\Sigma}$ be the covariance matrix of the indicators of $\{i,j\}\in T$ under the tree distribution in Equation~\eqref{eq:prob_derivation}. Differentiating the normalized exponential weights yields
\begin{equation}
    \nabla_{\mathbf{v}}^2\hat{\mathcal{W}}_{\text{RMST}}^\beta
       =\sum_{i<j}p_{ij}\nabla_{\mathbf{v}}^2w_{ij}
        -\frac1\beta\boldsymbol{J}^{\mathsf T}
          \boldsymbol{\Sigma}\boldsymbol{J}.
    \label{eq:coordinate_hessian}
\end{equation}
The minus sign follows from the smooth minimum convention. To bound the covariance, let $\boldsymbol{b}_{ij}\in\mathbb{R}^n$ have entries $+1$ at $i$, $-1$ at $j$, and zero elsewhere. Since $\Sigma_{ij,uv}=a_{uv}\,\partial p_{ij}/\partial a_{uv}$, differentiating Equation~\eqref{eq:eff_resistance} gives, for distinct unordered edges $\{i,j\}$ and $\{u,v\}$,
\begin{equation}
    \begin{aligned}
    \Sigma_{ij,uv}
       &=-a_{ij}a_{uv}
          (\boldsymbol{b}_{ij}^{\mathsf T}
           \boldsymbol{M}_e\boldsymbol{b}_{uv})^2\leq0,\\
    \Sigma_{ij,ij}&=p_{ij}(1-p_{ij})\leq\frac14.
    \end{aligned}
    \label{eq:edge_indicator_covariance}
\end{equation}
These signs express the pairwise negative correlation of weighted spanning-tree edges~\cite{lyons2003determinantal}. Since every tree contains $n-1$ edges, $\boldsymbol{\Sigma}\boldsymbol{1}=\boldsymbol{0}$. Its absolute row sums therefore satisfy
\begin{equation}
    \|\boldsymbol{\Sigma}\|_2
       \leq\max_{i<j}\sum_{u<v}|\Sigma_{ij,uv}|
       =2\max_{i<j}p_{ij}(1-p_{ij})\leq\frac12.
    \label{eq:edge_covariance_bound}
\end{equation}

For any $\boldsymbol{z}\in\mathbb{R}^n$, the complete-graph identity $\sum_{i<j}(z_i-z_j)^2=n\|\boldsymbol{z}\|_2^2-(\boldsymbol{1}^{\mathsf T}\boldsymbol{z})^2\leq n\|\boldsymbol{z}\|_2^2$, together with Equation~\eqref{eq:spatial_derivative_bounds}, gives
\begin{equation}
    \|\boldsymbol{J}\|_2^2\leq\frac{72n}{25},
    \qquad
    \left\|\sum_{i<j}p_{ij}\nabla_{\mathbf{v}}^2w_{ij}\right\|_2
       \leq\frac{2n}{\gamma}.
    \label{eq:spatial_matrix_bounds}
\end{equation}
The second inequality also uses $0\leq p_{ij}\leq1$ and the absence of mixed derivatives between $x$ and $y$ in each edge length. Applying these bounds to Equation~\eqref{eq:coordinate_hessian} yields $\|\nabla_{\mathbf{v}}^2\hat{\mathcal{W}}_{\text{RMST}}^\beta\|_2\leq2n/\gamma+36n/(25\beta)=L$. Integrating the Hessian along the segment from $\mathbf{u}$ to $\mathbf{v}$ proves Equation~\eqref{eq:gradient_lipschitz}.
\end{IEEEproof}
The Lipschitz continuity of the gradient bounds its variation with respect to the pin coordinates, providing a regular and well-behaved optimization landscape for gradient-based placement. This property supports stable first-order optimization of the proposed differentiable RMST objective.

%% file: include/Figure/forward.tex
\begin{figure*}[t]
\centering

\tikzset{every picture/.style={line width=0.75pt}} %
\resizebox{\hsize}{!}{

\begin{tikzpicture}[
    x=0.75pt,y=0.75pt,yscale=-1,xscale=1,
    every node/.style={font=\fontsize{12}{14}\selectfont,inner sep=0.75pt},
    forward heading/.style={font=\fontsize{12}{14}\selectfont\bfseries,
        text=white,align=center},
    forward group/.style={font=\fontsize{13}{15}\selectfont\bfseries,
        fill={rgb,255:red,205;green,231;blue,251},rounded corners=7.5pt,
        minimum height=18.75pt,inner xsep=5pt,inner ysep=2pt},
    forward note/.style={font=\fontsize{11}{13}\selectfont,align=center}
]

\draw  [color={rgb, 255:red, 49; green, 116; blue, 178 }  ,draw opacity=1 ][fill={rgb, 255:red, 237; green, 248; blue, 253 }  ,fill opacity=1 ][dash pattern={on 4.5pt off 4.5pt}] (127.17,279.48) -- (371.67,279.48) .. controls (380.18,279.48) and (387.44,274.17) .. (390.33,266.68) -- (390.33,266.5) .. controls (390.33,255.45) and (399.29,246.5) .. (410.33,246.5) -- (716.33,246.5) .. controls (727.38,246.5) and (736.33,255.45) .. (736.33,266.5) .. controls (736.33,266.5) and (736.33,372.01) .. (736.33,372.01) -- (736.33,496) .. controls (736.33,507.05) and (727.38,516) .. (716.33,516) -- (20,516) .. controls (8.95,516) and (0,507.05) .. (0,496) -- (0,350) .. controls (0,338.95) and (8.95,330) .. (20,330) -- (91.71,330) .. controls (99.9,328.16) and (106.19,321.28) .. (107.17,312.79) -- (107.17,299.48) .. controls (107.17,288.44) and (116.12,279.48) .. (127.17,279.48) -- cycle ;
\draw  [color={rgb, 255:red, 49; green, 116; blue, 178 }  ,draw opacity=1 ][fill={rgb, 255:red, 230; green, 250; blue, 253 }  ,fill opacity=1 ][dash pattern={on 4.5pt off 4.5pt}] (1006,496.32) .. controls (1006,507.47) and (997.05,516.5) .. (986,516.5) -- (772,516.5) .. controls (760.95,516.5) and (752,507.47) .. (752,496.32) -- (752,251) .. controls (749.73,242.37) and (741.94,236.01) .. (732.67,236.01) -- (359,236.01) .. controls (347.95,236.01) and (339,226.97) .. (339,215.83) -- (339,30.18) .. controls (339,19.03) and (347.95,10) .. (359,10) -- (986,10) .. controls (997.05,10) and (1006,19.03) .. (1006,30.18) -- (1006,496.32) -- cycle ;
\draw  [color={rgb, 255:red, 49; green, 116; blue, 178 }  ,draw opacity=1 ][fill={rgb, 255:red, 235; green, 248; blue, 253 }  ,fill opacity=1 ][dash pattern={on 4.5pt off 4.5pt}] (0,30.5) .. controls (0,19.45) and (8.95,10.5) .. (20,10.5) -- (310,10.5) .. controls (321.05,10.5) and (330,19.45) .. (330,30.5) -- (330,230.33) .. controls (330,241.38) and (321.05,250.33) .. (310,250.33) -- (20,250.33) .. controls (8.95,250.33) and (0,241.38) .. (0,230.33) -- cycle ;
\draw  [fill={rgb, 255:red, 19; green, 97; blue, 157 }  ,fill opacity=1 ] (25,32) -- (135,32) .. controls (143.28,32) and (150,38.72) .. (150,47) -- (150,58) -- (10,58) -- (10,47) .. controls (10,38.72) and (16.72,32) .. (25,32) -- cycle ;
\draw  [fill={rgb, 255:red, 255; green, 255; blue, 255 }  ,fill opacity=1 ] (150,217) .. controls (150,225.28) and (143.28,232) .. (135,232) -- (25,232) .. controls (16.72,232) and (10,225.28) .. (10,217) -- (10,58) -- (150,58) -- (150,217) -- cycle ;
\draw  [color={rgb, 255:red, 0; green, 0; blue, 0 }  ,draw opacity=1 ][fill={rgb, 255:red, 183; green, 209; blue, 226 }  ,fill opacity=1 ] (27.67,139.4) -- (47.67,139.4) -- (47.67,170.73) -- (27.67,170.73) -- cycle ;
\draw  [color={rgb, 255:red, 0; green, 0; blue, 0 }  ,draw opacity=1 ][fill={rgb, 255:red, 183; green, 209; blue, 226 }  ,fill opacity=1 ] (70.17,109.4) -- (96.67,109.4) -- (96.67,139.9) -- (70.17,139.9) -- cycle ;
\draw  [color={rgb, 255:red, 0; green, 0; blue, 0 }  ,draw opacity=1 ][fill={rgb, 255:red, 183; green, 209; blue, 226 }  ,fill opacity=1 ] (112.67,85.4) -- (132.67,85.4) -- (132.67,115.4) -- (112.67,115.4) -- cycle ;
\draw  [color={rgb, 255:red, 0; green, 0; blue, 0 }  ,draw opacity=1 ][fill={rgb, 255:red, 183; green, 209; blue, 226 }  ,fill opacity=1 ] (79,170.73) -- (109,170.73) -- (109,190.73) -- (79,190.73) -- cycle ;
\draw  [color={rgb, 255:red, 49; green, 116; blue, 178 }  ,draw opacity=1 ][fill={rgb, 255:red, 49; green, 116; blue, 178 }  ,fill opacity=1 ] (107,96.23) .. controls (107,94.85) and (108.12,93.73) .. (109.5,93.73) .. controls (110.88,93.73) and (112,94.85) .. (112,96.23) .. controls (112,97.61) and (110.88,98.73) .. (109.5,98.73) .. controls (108.12,98.73) and (107,97.61) .. (107,96.23) -- cycle ;
\draw  [color={rgb, 255:red, 49; green, 116; blue, 178 }  ,draw opacity=1 ][fill={rgb, 255:red, 49; green, 116; blue, 178 }  ,fill opacity=1 ] (86,142.9) .. controls (86,141.52) and (87.12,140.4) .. (88.5,140.4) .. controls (89.88,140.4) and (91,141.52) .. (91,142.9) .. controls (91,144.28) and (89.88,145.4) .. (88.5,145.4) .. controls (87.12,145.4) and (86,144.28) .. (86,142.9) -- cycle ;
\draw  [color={rgb, 255:red, 49; green, 116; blue, 178 }  ,draw opacity=1 ][fill={rgb, 255:red, 49; green, 116; blue, 178 }  ,fill opacity=1 ] (48,152.9) .. controls (48,151.52) and (49.12,150.4) .. (50.5,150.4) .. controls (51.88,150.4) and (53,151.52) .. (53,152.9) .. controls (53,154.28) and (51.88,155.4) .. (50.5,155.4) .. controls (49.12,155.4) and (48,154.28) .. (48,152.9) -- cycle ;
\draw  [color={rgb, 255:red, 49; green, 116; blue, 178 }  ,draw opacity=1 ][fill={rgb, 255:red, 49; green, 116; blue, 178 }  ,fill opacity=1 ] (96,167.9) .. controls (96,166.52) and (97.12,165.4) .. (98.5,165.4) .. controls (99.88,165.4) and (101,166.52) .. (101,167.9) .. controls (101,169.28) and (99.88,170.4) .. (98.5,170.4) .. controls (97.12,170.4) and (96,169.28) .. (96,167.9) -- cycle ;

\draw [color={rgb, 255:red, 49; green, 116; blue, 178 }  ,draw opacity=1 ][line width=1.5]    (150,132) -- (176,132) ;
\draw [shift={(180,132)}, rotate = 180] [fill={rgb, 255:red, 49; green, 116; blue, 178 }  ,fill opacity=1 ][line width=0.08]  [draw opacity=0] (12.77,-6.13) -- (0,0) -- (12.77,6.13) -- cycle    ;
\draw [color={rgb, 255:red, 49; green, 116; blue, 178 }  ,draw opacity=1 ][line width=1.5]    (320,137) .. controls (364.11,208.97) and (292.72,238.7) .. (412.18,301.06) ;
\draw [shift={(414,302)}, rotate = 207.31] [fill={rgb, 255:red, 49; green, 116; blue, 178 }  ,fill opacity=1 ][line width=0.08]  [draw opacity=0] (11.61,-5.58) -- (0,0) -- (11.61,5.58) -- cycle    ;
\draw [color={rgb, 255:red, 49; green, 116; blue, 178 }  ,draw opacity=1 ][line width=1.5]    (907,257.75) .. controls (903.24,278.2) and (898.6,284.06) .. (885.04,298.46) ;
\draw [shift={(882.33,301.33)}, rotate = 313.37] [fill={rgb, 255:red, 49; green, 116; blue, 178 }  ,fill opacity=1 ][line width=0.08]  [draw opacity=0] (12.77,-6.13) -- (0,0) -- (12.77,6.13) -- cycle    ;
\draw [color={rgb, 255:red, 49; green, 116; blue, 178 }  ,draw opacity=1 ][line width=1.5]    (83,352.25) .. controls (67.1,341.5) and (63.01,332.93) .. (59.69,315.31) ;
\draw [shift={(59,311.5)}, rotate = 80.19] [fill={rgb, 255:red, 49; green, 116; blue, 178 }  ,fill opacity=1 ][line width=0.08]  [draw opacity=0] (11.61,-5.58) -- (0,0) -- (11.61,5.58) -- cycle    ;
\draw [color={rgb, 255:red, 49; green, 116; blue, 178 }  ,draw opacity=1 ][line width=1.5]    (493,132) -- (519,132) ;
\draw [shift={(523,132)}, rotate = 180] [fill={rgb, 255:red, 49; green, 116; blue, 178 }  ,fill opacity=1 ][line width=0.08]  [draw opacity=0] (11.61,-5.58) -- (0,0) -- (11.61,5.58) -- cycle    ;
\draw [color={rgb, 255:red, 49; green, 116; blue, 178 }  ,draw opacity=1 ][line width=1.5]    (320,137) -- (349.67,137.15) ;
\draw [shift={(353.67,137.17)}, rotate = 180.28] [fill={rgb, 255:red, 49; green, 116; blue, 178 }  ,fill opacity=1 ][line width=0.08]  [draw opacity=0] (11.61,-5.58) -- (0,0) -- (11.61,5.58) -- cycle    ;
\draw  [fill={rgb, 255:red, 255; green, 255; blue, 255 }  ,fill opacity=1 ] (988,487) .. controls (988,495.28) and (981.28,502) .. (973,502) -- (783,502) .. controls (774.72,502) and (768,495.28) .. (768,487) -- (768,328) -- (988,328) -- (988,487) -- cycle ;
\draw  [fill={rgb, 255:red, 19; green, 97; blue, 157 }  ,fill opacity=1 ] (783,302) -- (973,302) .. controls (981.28,302) and (988,308.72) .. (988,317) -- (988,328) -- (768,328) -- (768,317) .. controls (768,308.72) and (774.72,302) .. (783,302) -- cycle ;

\draw  [color={rgb, 255:red, 19; green, 97; blue, 157 }  ,draw opacity=1 ][fill={rgb, 255:red, 255; green, 255; blue, 255 }  ,fill opacity=1 ] (838,361.67) -- (858,361.67) -- (858,381.67) -- (838,381.67) -- cycle ;
\draw  [color={rgb, 255:red, 19; green, 97; blue, 157 }  ,draw opacity=1 ][fill={rgb, 255:red, 185; green, 220; blue, 250 }  ,fill opacity=1 ] (858,361.67) -- (878,361.67) -- (878,381.67) -- (858,381.67) -- cycle ;
\draw  [color={rgb, 255:red, 19; green, 97; blue, 157 }  ,draw opacity=1 ][fill={rgb, 255:red, 140; green, 197; blue, 242 }  ,fill opacity=1 ] (878,361.67) -- (898,361.67) -- (898,381.67) -- (878,381.67) -- cycle ;
\draw  [color={rgb, 255:red, 19; green, 97; blue, 157 }  ,draw opacity=1 ][fill={rgb, 255:red, 47; green, 134; blue, 209 }  ,fill opacity=1 ] (898,361.67) -- (918,361.67) -- (918,381.67) -- (898,381.67) -- cycle ;
\draw  [color={rgb, 255:red, 19; green, 97; blue, 157 }  ,draw opacity=1 ][fill={rgb, 255:red, 185; green, 220; blue, 250 }  ,fill opacity=1 ] (838,381.67) -- (858,381.67) -- (858,401.67) -- (838,401.67) -- cycle ;
\draw  [color={rgb, 255:red, 19; green, 97; blue, 157 }  ,draw opacity=1 ][fill={rgb, 255:red, 255; green, 255; blue, 255 }  ,fill opacity=1 ] (858,381.67) -- (878,381.67) -- (878,401.67) -- (858,401.67) -- cycle ;
\draw  [color={rgb, 255:red, 19; green, 97; blue, 157 }  ,draw opacity=1 ][fill={rgb, 255:red, 140; green, 197; blue, 242 }  ,fill opacity=1 ] (878,381.67) -- (898,381.67) -- (898,401.67) -- (878,401.67) -- cycle ;
\draw  [color={rgb, 255:red, 19; green, 97; blue, 157 }  ,draw opacity=1 ][fill={rgb, 255:red, 185; green, 220; blue, 250 }  ,fill opacity=1 ] (898,381.67) -- (918,381.67) -- (918,401.67) -- (898,401.67) -- cycle ;
\draw  [color={rgb, 255:red, 19; green, 97; blue, 157 }  ,draw opacity=1 ][fill={rgb, 255:red, 140; green, 197; blue, 242 }  ,fill opacity=1 ] (838,401.67) -- (858,401.67) -- (858,421.67) -- (838,421.67) -- cycle ;
\draw  [color={rgb, 255:red, 19; green, 97; blue, 157 }  ,draw opacity=1 ][fill={rgb, 255:red, 140; green, 197; blue, 242 }  ,fill opacity=1 ] (858,401.67) -- (878,401.67) -- (878,421.67) -- (858,421.67) -- cycle ;
\draw  [color={rgb, 255:red, 19; green, 97; blue, 157 }  ,draw opacity=1 ][fill={rgb, 255:red, 255; green, 255; blue, 255 }  ,fill opacity=1 ] (878,401.67) -- (898,401.67) -- (898,421.67) -- (878,421.67) -- cycle ;
\draw  [color={rgb, 255:red, 19; green, 97; blue, 157 }  ,draw opacity=1 ][fill={rgb, 255:red, 15; green, 95; blue, 174 }  ,fill opacity=1 ] (898,401.67) -- (918,401.67) -- (918,421.67) -- (898,421.67) -- cycle ;
\draw  [color={rgb, 255:red, 19; green, 97; blue, 157 }  ,draw opacity=1 ][fill={rgb, 255:red, 47; green, 134; blue, 209 }  ,fill opacity=1 ] (838,421.67) -- (858,421.67) -- (858,441.67) -- (838,441.67) -- cycle ;
\draw  [color={rgb, 255:red, 19; green, 97; blue, 157 }  ,draw opacity=1 ][fill={rgb, 255:red, 185; green, 220; blue, 250 }  ,fill opacity=1 ] (858,421.67) -- (878,421.67) -- (878,441.67) -- (858,441.67) -- cycle ;
\draw  [color={rgb, 255:red, 19; green, 97; blue, 157 }  ,draw opacity=1 ][fill={rgb, 255:red, 15; green, 95; blue, 174 }  ,fill opacity=1 ] (878,421.67) -- (898,421.67) -- (898,441.67) -- (878,441.67) -- cycle ;
\draw  [color={rgb, 255:red, 19; green, 97; blue, 157 }  ,draw opacity=1 ][fill={rgb, 255:red, 255; green, 255; blue, 255 }  ,fill opacity=1 ] (898,421.67) -- (918,421.67) -- (918,441.67) -- (898,441.67) -- cycle ;
\draw  [color={rgb, 255:red, 19; green, 97; blue, 157 }  ,draw opacity=1 ][line width=1.5]  (838,361.67) -- (918,361.67) -- (918,441.67) -- (838,441.67) -- cycle ;

\draw [color={rgb, 255:red, 49; green, 116; blue, 178 }  ,draw opacity=1 ][line width=1.5]    (225,462.5) .. controls (210.98,453.92) and (207.21,452.8) .. (193.16,437.26) ;
\draw [shift={(190.6,434.4)}, rotate = 48.48] [fill={rgb, 255:red, 49; green, 116; blue, 178 }  ,fill opacity=1 ][line width=0.08]  [draw opacity=0] (12.77,-6.13) -- (0,0) -- (12.77,6.13) -- cycle    ;
\draw [color={rgb, 255:red, 49; green, 116; blue, 178 }  ,draw opacity=1 ][line width=1.5]    (59,270.5) .. controls (62.27,257.18) and (66.42,250.19) .. (76.75,235.65) ;
\draw [shift={(79,232.5)}, rotate = 125.62] [fill={rgb, 255:red, 49; green, 116; blue, 178 }  ,fill opacity=1 ][line width=0.08]  [draw opacity=0] (12.77,-6.13) -- (0,0) -- (12.77,6.13) -- cycle    ;
\draw [color={rgb, 255:red, 49; green, 116; blue, 178 }  ,draw opacity=1 ][line width=1.5]    (493,132) .. controls (510.33,171.67) and (494,207.5) .. (536,221.5) .. controls (578,235.5) and (717,224.5) .. (756,236.5) .. controls (794.03,248.2) and (807.65,262.28) .. (834.88,297.73) ;
\draw [shift={(837,300.5)}, rotate = 232.6] [fill={rgb, 255:red, 49; green, 116; blue, 178 }  ,fill opacity=1 ][line width=0.08]  [draw opacity=0] (12.77,-6.13) -- (0,0) -- (12.77,6.13) -- cycle    ;
\draw [color={rgb, 255:red, 49; green, 116; blue, 178 }  ,draw opacity=1 ][line width=1.5]    (793,130) -- (819,130) ;
\draw [shift={(823,130)}, rotate = 180] [fill={rgb, 255:red, 49; green, 116; blue, 178 }  ,fill opacity=1 ][line width=0.08]  [draw opacity=0] (11.61,-5.58) -- (0,0) -- (11.61,5.58) -- cycle    ;
\draw  [fill={rgb, 255:red, 255; green, 255; blue, 255 }  ,fill opacity=1 ] (793,197) .. controls (793,205.28) and (786.28,212) .. (778,212) -- (538,212) .. controls (529.72,212) and (523,205.28) .. (523,197) -- (523,58) -- (793,58) -- (793,197) -- cycle ;
\draw  [color={rgb, 255:red, 19; green, 97; blue, 157 }  ,draw opacity=1 ][fill={rgb, 255:red, 15; green, 95; blue, 174 }  ,fill opacity=1 ] (552,89) -- (572,89) -- (572,109) -- (552,109) -- cycle ;
\draw  [color={rgb, 255:red, 19; green, 97; blue, 157 }  ,draw opacity=1 ][fill={rgb, 255:red, 90; green, 170; blue, 230 }  ,fill opacity=1 ] (572,89) -- (592,89) -- (592,109) -- (572,109) -- cycle ;
\draw  [color={rgb, 255:red, 19; green, 97; blue, 157 }  ,draw opacity=1 ][fill={rgb, 255:red, 140; green, 197; blue, 242 }  ,fill opacity=1 ] (592,89) -- (612,89) -- (612,109) -- (592,109) -- cycle ;
\draw  [color={rgb, 255:red, 19; green, 97; blue, 157 }  ,draw opacity=1 ][fill={rgb, 255:red, 140; green, 197; blue, 242 }  ,fill opacity=1 ] (612,89) -- (632,89) -- (632,109) -- (612,109) -- cycle ;
\draw  [color={rgb, 255:red, 19; green, 97; blue, 157 }  ,draw opacity=1 ][fill={rgb, 255:red, 47; green, 134; blue, 209 }  ,fill opacity=1 ] (572,109) -- (592,109) -- (592,129) -- (572,129) -- cycle ;
\draw  [color={rgb, 255:red, 19; green, 97; blue, 157 }  ,draw opacity=1 ][fill={rgb, 255:red, 185; green, 220; blue, 250 }  ,fill opacity=1 ] (592,109) -- (612,109) -- (612,129) -- (592,129) -- cycle ;
\draw  [color={rgb, 255:red, 19; green, 97; blue, 157 }  ,draw opacity=1 ][fill={rgb, 255:red, 220; green, 238; blue, 255 }  ,fill opacity=1 ] (612,109) -- (632,109) -- (632,129) -- (612,129) -- cycle ;
\draw  [color={rgb, 255:red, 19; green, 97; blue, 157 }  ,draw opacity=1 ][fill={rgb, 255:red, 47; green, 134; blue, 209 }  ,fill opacity=1 ] (592,129) -- (612,129) -- (612,149) -- (592,149) -- cycle ;
\draw  [color={rgb, 255:red, 19; green, 97; blue, 157 }  ,draw opacity=1 ][fill={rgb, 255:red, 140; green, 197; blue, 242 }  ,fill opacity=1 ] (612,129) -- (632,129) -- (632,149) -- (612,149) -- cycle ;
\draw  [color={rgb, 255:red, 19; green, 97; blue, 157 }  ,draw opacity=1 ][fill={rgb, 255:red, 47; green, 134; blue, 209 }  ,fill opacity=1 ] (612,149) -- (632,149) -- (632,169) -- (612,169) -- cycle ;
\draw  [color={rgb, 255:red, 19; green, 97; blue, 157 }  ,draw opacity=1 ][fill={rgb, 255:red, 90; green, 170; blue, 230 }  ,fill opacity=1 ] (552,109) -- (572,109) -- (572,129) -- (552,129) -- cycle ;
\draw  [color={rgb, 255:red, 19; green, 97; blue, 157 }  ,draw opacity=1 ][fill={rgb, 255:red, 140; green, 197; blue, 242 }  ,fill opacity=1 ] (552,129) -- (572,129) -- (572,149) -- (552,149) -- cycle ;
\draw  [color={rgb, 255:red, 19; green, 97; blue, 157 }  ,draw opacity=1 ][fill={rgb, 255:red, 140; green, 197; blue, 242 }  ,fill opacity=1 ] (552,149) -- (572,149) -- (572,169) -- (552,169) -- cycle ;
\draw  [color={rgb, 255:red, 19; green, 97; blue, 157 }  ,draw opacity=1 ][fill={rgb, 255:red, 185; green, 220; blue, 250 }  ,fill opacity=1 ] (572,129) -- (592,129) -- (592,149) -- (572,149) -- cycle ;
\draw  [color={rgb, 255:red, 19; green, 97; blue, 157 }  ,draw opacity=1 ][fill={rgb, 255:red, 220; green, 238; blue, 255 }  ,fill opacity=1 ] (572,149) -- (592,149) -- (592,169) -- (572,169) -- cycle ;
\draw  [color={rgb, 255:red, 19; green, 97; blue, 157 }  ,draw opacity=1 ][fill={rgb, 255:red, 140; green, 197; blue, 242 }  ,fill opacity=1 ] (592,149) -- (612,149) -- (612,169) -- (592,169) -- cycle ;

\draw  [color={rgb, 255:red, 19; green, 97; blue, 157 }  ,draw opacity=1 ][line width=1.5]  (552,89) -- (632,89) -- (632,169) -- (552,169) -- cycle ;

\path  [color={rgb, 255:red, 19; green, 97; blue, 157 }  ,draw opacity=1 ][fill={rgb, 255:red, 15; green, 95; blue, 174 }  ,fill opacity=1 ] (680,90.5) -- (700,90.5) -- (700,110.5) -- (680,110.5) -- cycle ;
\path  [color={rgb, 255:red, 19; green, 97; blue, 157 }  ,draw opacity=1 ][fill={rgb, 255:red, 90; green, 170; blue, 230 }  ,fill opacity=1 ] (700,90.5) -- (720,90.5) -- (720,110.5) -- (700,110.5) -- cycle ;
\path  [color={rgb, 255:red, 19; green, 97; blue, 157 }  ,draw opacity=1 ][fill={rgb, 255:red, 140; green, 197; blue, 242 }  ,fill opacity=1 ] (720,90.5) -- (740,90.5) -- (740,110.5) -- (720,110.5) -- cycle ;
\path  [color={rgb, 255:red, 19; green, 97; blue, 157 }  ,draw opacity=1 ][fill={rgb, 255:red, 47; green, 134; blue, 209 }  ,fill opacity=1 ] (700,110.5) -- (720,110.5) -- (720,130.5) -- (700,130.5) -- cycle ;
\path  [color={rgb, 255:red, 19; green, 97; blue, 157 }  ,draw opacity=1 ][fill={rgb, 255:red, 185; green, 220; blue, 250 }  ,fill opacity=1 ] (720,110.5) -- (740,110.5) -- (740,130.5) -- (720,130.5) -- cycle ;
\path  [color={rgb, 255:red, 19; green, 97; blue, 157 }  ,draw opacity=1 ][fill={rgb, 255:red, 47; green, 134; blue, 209 }  ,fill opacity=1 ] (720,130.5) -- (740,130.5) -- (740,150.5) -- (720,150.5) -- cycle ;
\path  [color={rgb, 255:red, 19; green, 97; blue, 157 }  ,draw opacity=1 ][fill={rgb, 255:red, 90; green, 170; blue, 230 }  ,fill opacity=1 ] (680,110.5) -- (700,110.5) -- (700,130.5) -- (680,130.5) -- cycle ;
\path  [color={rgb, 255:red, 19; green, 97; blue, 157 }  ,draw opacity=1 ][fill={rgb, 255:red, 140; green, 197; blue, 242 }  ,fill opacity=1 ] (680,130.5) -- (700,130.5) -- (700,150.5) -- (680,150.5) -- cycle ;
\path  [color={rgb, 255:red, 19; green, 97; blue, 157 }  ,draw opacity=1 ][fill={rgb, 255:red, 185; green, 220; blue, 250 }  ,fill opacity=1 ] (700,130.5) -- (720,130.5) -- (720,150.5) -- (700,150.5) -- cycle ;
\begin{scope}[color={rgb,255:red,19;green,97;blue,157},line width=1pt]
    \foreach \x in {680,700,720}
        \draw (\x,90.5) -- (\x,150.5);
    \foreach \y in {90.5,110.5,130.5}
        \draw (680,\y) -- (740,\y);
    \begin{scope}[dash pattern=on 2pt off 1.5pt]
        \foreach \x in {680,700,720}
            \draw (\x,150.5) -- (\x,170.5);
        \foreach \y in {90.5,110.5,130.5}
            \draw (740,\y) -- (760,\y);
        \draw (740,90.5) -- (740,170.5)
              (760,90.5) -- (760,170.5)
              (680,150.5) -- (760,150.5)
              (680,170.5) -- (760,170.5);
    \end{scope}
\end{scope}

\draw [color={rgb, 255:red, 49; green, 116; blue, 178 }  ,draw opacity=1 ][line width=1.5]    (641,129) -- (667,129) ;
\draw [shift={(671,129)}, rotate = 180] [fill={rgb, 255:red, 49; green, 116; blue, 178 }  ,fill opacity=1 ][line width=0.08]  [draw opacity=0] (11.61,-5.58) -- (0,0) -- (11.61,5.58) -- cycle    ;
\draw  [fill={rgb, 255:red, 19; green, 97; blue, 157 }  ,fill opacity=1 ] (538,32) -- (778,32) .. controls (786.28,32) and (793,38.72) .. (793,47) -- (793,58) -- (523,58) -- (523,47) .. controls (523,38.72) and (529.72,32) .. (538,32) -- cycle ;
\draw  [fill={rgb, 255:red, 19; green, 97; blue, 157 }  ,fill opacity=1 ] (429,254.5) -- (709,254.5) .. controls (717.28,254.5) and (724,261.22) .. (724,269.5) -- (724,280.5) -- (414,280.5) -- (414,269.5) .. controls (414,261.22) and (420.72,254.5) .. (429,254.5) -- cycle ;
\draw  [fill={rgb, 255:red, 255; green, 255; blue, 255 }  ,fill opacity=1 ] (724,389.5) .. controls (724,397.78) and (717.28,404.5) .. (709,404.5) -- (429,404.5) .. controls (420.72,404.5) and (414,397.78) .. (414,389.5) -- (414,280.5) -- (724,280.5) -- (724,389.5) -- cycle ;
\draw  [fill={rgb, 255:red, 19; green, 97; blue, 157 }  ,fill opacity=1 ] (240,415) -- (710,415) .. controls (718.28,415) and (725,421.72) .. (725,430) -- (725,441) -- (225,441) -- (225,430) .. controls (225,421.72) and (231.72,415) .. (240,415) -- cycle ;
\draw  [fill={rgb, 255:red, 255; green, 255; blue, 255 }  ,fill opacity=1 ] (725,492) .. controls (725,500.28) and (718.28,507) .. (710,507) -- (240,507) .. controls (231.72,507) and (225,500.28) .. (225,492) -- (225,441) -- (725,441) -- (725,492) -- cycle ;
\draw  [color={rgb, 255:red, 0; green, 0; blue, 0 }  ,draw opacity=1 ][fill={rgb, 255:red, 183; green, 209; blue, 226 }  ,fill opacity=1 ] (429.8,344.57) -- (449.8,344.57) -- (449.8,375.9) -- (429.8,375.9) -- cycle ;
\draw  [color={rgb, 255:red, 0; green, 0; blue, 0 }  ,draw opacity=1 ][fill={rgb, 255:red, 183; green, 209; blue, 226 }  ,fill opacity=1 ] (472.83,314.57) -- (499.33,314.57) -- (499.33,345.07) -- (472.83,345.07) -- cycle ;
\draw  [color={rgb, 255:red, 0; green, 0; blue, 0 }  ,draw opacity=1 ][fill={rgb, 255:red, 183; green, 209; blue, 226 }  ,fill opacity=1 ] (515.5,290.57) -- (535.5,290.57) -- (535.5,320.57) -- (515.5,320.57) -- cycle ;
\draw  [color={rgb, 255:red, 0; green, 0; blue, 0 }  ,draw opacity=1 ][fill={rgb, 255:red, 183; green, 209; blue, 226 }  ,fill opacity=1 ] (481.67,376.5) -- (511.67,376.5) -- (511.67,396.5) -- (481.67,396.5) -- cycle ;
\draw  [color={rgb, 255:red, 49; green, 116; blue, 178 }  ,draw opacity=1 ][fill={rgb, 255:red, 49; green, 116; blue, 178 }  ,fill opacity=1 ] (488.67,348.07) .. controls (488.67,346.69) and (489.79,345.57) .. (491.17,345.57) .. controls (492.55,345.57) and (493.67,346.69) .. (493.67,348.07) .. controls (493.67,349.45) and (492.55,350.57) .. (491.17,350.57) .. controls (489.79,350.57) and (488.67,349.45) .. (488.67,348.07) -- cycle ;
\draw  [color={rgb, 255:red, 49; green, 116; blue, 178 }  ,draw opacity=0.3 ][line width=1.5]  (513.8,301.8) -- (513.8,373.07) -- (503.67,373.07) ;
\draw  [color={rgb, 255:red, 125; green, 172; blue, 214 }  ,draw opacity=0.2 ][line width=1.5]  (451.8,356.62) -- (451.8,301.4) -- (509.9,301.4) ;
\draw  [color={rgb, 255:red, 125; green, 172; blue, 214 }  ,draw opacity=0.9 ][line width=1.5]  (499.8,374.6) -- (453.17,374.6) -- (453.17,360.57) ;
\draw  [color={rgb, 255:red, 49; green, 116; blue, 178 }  ,draw opacity=0.9 ][line width=1.5]  (501.17,371.64) -- (491.17,371.64) -- (491.17,350.57) ;
\draw  [color={rgb, 255:red, 49; green, 116; blue, 178 }  ,draw opacity=1 ][line width=1.5]  (454.71,357.2) -- (454.71,348.07) -- (488.57,348.07) ;
\draw  [color={rgb, 255:red, 49; green, 116; blue, 178 }  ,draw opacity=0.9 ][line width=1.5]  (510.8,302.6) -- (510.8,348.07) -- (493.67,348.07) ;
\draw  [color={rgb, 255:red, 49; green, 116; blue, 178 }  ,draw opacity=1 ][fill={rgb, 255:red, 49; green, 116; blue, 178 }  ,fill opacity=1 ] (450.67,358.07) .. controls (450.67,356.69) and (451.79,355.57) .. (453.17,355.57) .. controls (454.55,355.57) and (455.67,356.69) .. (455.67,358.07) .. controls (455.67,359.45) and (454.55,360.57) .. (453.17,360.57) .. controls (451.79,360.57) and (450.67,359.45) .. (450.67,358.07) -- cycle ;
\draw  [color={rgb, 255:red, 49; green, 116; blue, 178 }  ,draw opacity=1 ][fill={rgb, 255:red, 49; green, 116; blue, 178 }  ,fill opacity=1 ] (498.67,373.07) .. controls (498.67,371.69) and (499.79,370.57) .. (501.17,370.57) .. controls (502.55,370.57) and (503.67,371.69) .. (503.67,373.07) .. controls (503.67,374.45) and (502.55,375.57) .. (501.17,375.57) .. controls (499.79,375.57) and (498.67,374.45) .. (498.67,373.07) -- cycle ;
\draw  [color={rgb, 255:red, 49; green, 116; blue, 178 }  ,draw opacity=1 ][fill={rgb, 255:red, 49; green, 116; blue, 178 }  ,fill opacity=1 ] (509.67,301.4) .. controls (509.67,300.02) and (510.79,298.9) .. (512.17,298.9) .. controls (513.55,298.9) and (514.67,300.02) .. (514.67,301.4) .. controls (514.67,302.78) and (513.55,303.9) .. (512.17,303.9) .. controls (510.79,303.9) and (509.67,302.78) .. (509.67,301.4) -- cycle ;

\draw  [fill={rgb, 255:red, 19; green, 97; blue, 157 }  ,fill opacity=1 ] (215,300) -- (365,300) .. controls (373.28,300) and (380,306.72) .. (380,315) -- (380,340) -- (200,340) -- (200,315) .. controls (200,306.72) and (206.72,300) .. (215,300) -- cycle ;
\draw  [fill={rgb, 255:red, 255; green, 255; blue, 255 }  ,fill opacity=1 ] (380,385) .. controls (380,393.28) and (373.28,400) .. (365,400) -- (215,400) .. controls (206.72,400) and (200,393.28) .. (200,385) -- (200,340) -- (380,340) -- (380,385) -- cycle ;
\draw  [fill={rgb, 255:red, 19; green, 97; blue, 157 }  ,fill opacity=1 ] (26,353) -- (176,353) .. controls (184.28,353) and (191,359.72) .. (191,368) -- (191,379) -- (11,379) -- (11,368) .. controls (11,359.72) and (17.72,353) .. (26,353) -- cycle ;
\draw  [fill={rgb, 255:red, 255; green, 255; blue, 255 }  ,fill opacity=1 ] (191,488) .. controls (191,496.28) and (184.28,503) .. (176,503) -- (26,503) .. controls (17.72,503) and (11,496.28) .. (11,488) -- (11,379) -- (191,379) -- (191,488) -- cycle ;
\draw  [color={rgb, 255:red, 0; green, 0; blue, 0 }  ,draw opacity=1 ][fill={rgb, 255:red, 183; green, 209; blue, 226 }  ,fill opacity=1 ] (46.33,442.07) -- (66.33,442.07) -- (66.33,473.4) -- (46.33,473.4) -- cycle ;
\draw  [color={rgb, 255:red, 0; green, 0; blue, 0 }  ,draw opacity=1 ][fill={rgb, 255:red, 183; green, 209; blue, 226 }  ,fill opacity=1 ] (88.83,412.07) -- (115.33,412.07) -- (115.33,442.57) -- (88.83,442.57) -- cycle ;
\draw  [color={rgb, 255:red, 0; green, 0; blue, 0 }  ,draw opacity=1 ][fill={rgb, 255:red, 183; green, 209; blue, 226 }  ,fill opacity=1 ] (131.33,388.07) -- (151.33,388.07) -- (151.33,418.07) -- (131.33,418.07) -- cycle ;
\draw  [color={rgb, 255:red, 0; green, 0; blue, 0 }  ,draw opacity=1 ][fill={rgb, 255:red, 183; green, 209; blue, 226 }  ,fill opacity=1 ] (97.67,473.4) -- (127.67,473.4) -- (127.67,493.4) -- (97.67,493.4) -- cycle ;
\draw  [color={rgb, 255:red, 49; green, 116; blue, 178 }  ,draw opacity=1 ][fill={rgb, 255:red, 49; green, 116; blue, 178 }  ,fill opacity=1 ] (125.67,398.9) .. controls (125.67,397.52) and (126.79,396.4) .. (128.17,396.4) .. controls (129.55,396.4) and (130.67,397.52) .. (130.67,398.9) .. controls (130.67,400.28) and (129.55,401.4) .. (128.17,401.4) .. controls (126.79,401.4) and (125.67,400.28) .. (125.67,398.9) -- cycle ;
\draw  [color={rgb, 255:red, 49; green, 116; blue, 178 }  ,draw opacity=1 ][fill={rgb, 255:red, 49; green, 116; blue, 178 }  ,fill opacity=1 ] (104.67,445.57) .. controls (104.67,444.19) and (105.79,443.07) .. (107.17,443.07) .. controls (108.55,443.07) and (109.67,444.19) .. (109.67,445.57) .. controls (109.67,446.95) and (108.55,448.07) .. (107.17,448.07) .. controls (105.79,448.07) and (104.67,446.95) .. (104.67,445.57) -- cycle ;
\draw  [color={rgb, 255:red, 49; green, 116; blue, 178 }  ,draw opacity=1 ][fill={rgb, 255:red, 49; green, 116; blue, 178 }  ,fill opacity=1 ] (66.67,455.57) .. controls (66.67,454.19) and (67.79,453.07) .. (69.17,453.07) .. controls (70.55,453.07) and (71.67,454.19) .. (71.67,455.57) .. controls (71.67,456.95) and (70.55,458.07) .. (69.17,458.07) .. controls (67.79,458.07) and (66.67,456.95) .. (66.67,455.57) -- cycle ;
\draw  [color={rgb, 255:red, 49; green, 116; blue, 178 }  ,draw opacity=1 ][fill={rgb, 255:red, 49; green, 116; blue, 178 }  ,fill opacity=1 ] (114.67,470.57) .. controls (114.67,469.19) and (115.79,468.07) .. (117.17,468.07) .. controls (118.55,468.07) and (119.67,469.19) .. (119.67,470.57) .. controls (119.67,471.95) and (118.55,473.07) .. (117.17,473.07) .. controls (115.79,473.07) and (114.67,471.95) .. (114.67,470.57) -- cycle ;
\draw [color={rgb, 255:red, 49; green, 116; blue, 178 }  ,draw opacity=1 ][line width=1.5]    (128.17,398.9) -- (122.35,410.9) ;
\draw [shift={(120.6,414.5)}, rotate = 295.88] [fill={rgb, 255:red, 49; green, 116; blue, 178 }  ,fill opacity=1 ][line width=0.08]  [draw opacity=0] (6.43,-3.09) -- (0,0) -- (6.43,3.09) -- (4.27,0) -- cycle    ;
\draw [color={rgb, 255:red, 49; green, 116; blue, 178 }  ,draw opacity=1 ][line width=1.5]    (69.17,455.57) -- (84.21,450.11) ;
\draw [shift={(87.97,448.75)}, rotate = 160.07] [fill={rgb, 255:red, 49; green, 116; blue, 178 }  ,fill opacity=1 ][line width=0.08]  [draw opacity=0] (6.43,-3.09) -- (0,0) -- (6.43,3.09) -- (4.27,0) -- cycle    ;
\draw [color={rgb, 255:red, 49; green, 116; blue, 178 }  ,draw opacity=1 ][line width=1.5]    (107.63,445.72) -- (110.29,452.99) ;
\draw [shift={(111.67,456.75)}, rotate = 249.92] [fill={rgb, 255:red, 49; green, 116; blue, 178 }  ,fill opacity=1 ][line width=0.08]  [draw opacity=0] (6.43,-3.09) -- (0,0) -- (6.43,3.09) -- (4.27,0) -- cycle    ;
\draw [color={rgb, 255:red, 49; green, 116; blue, 178 }  ,draw opacity=1 ][line width=1.5]    (117.17,470.57) -- (114.01,462.79) ;
\draw [shift={(112.5,459.08)}, rotate = 67.88] [fill={rgb, 255:red, 49; green, 116; blue, 178 }  ,fill opacity=1 ][line width=0.08]  [draw opacity=0] (6.43,-3.09) -- (0,0) -- (6.43,3.09) -- (4.27,0) -- cycle    ;

\draw  [fill={rgb, 255:red, 19; green, 97; blue, 157 }  ,fill opacity=1 ] (369,32) -- (479,32) .. controls (487.28,32) and (494,38.72) .. (494,47) -- (494,58) -- (354,58) -- (354,47) .. controls (354,38.72) and (360.72,32) .. (369,32) -- cycle ;
\draw  [fill={rgb, 255:red, 255; green, 255; blue, 255 }  ,fill opacity=1 ] (494,217) .. controls (494,225.28) and (487.28,232) .. (479,232) -- (369,232) .. controls (360.72,232) and (354,225.28) .. (354,217) -- (354,58) -- (494,58) -- (494,217) -- cycle ;

\draw [color={rgb, 255:red, 125; green, 172; blue, 214 }  ,draw opacity=1 ][line width=3]    (387.32,147.93) -- (438.37,132.96) ;
\draw [color={rgb, 255:red, 125; green, 172; blue, 214 }  ,draw opacity=1 ][line width=2.25]    (438.37,132.96) -- (463.02,86.81) ;
\draw [color={rgb, 255:red, 125; green, 172; blue, 214 }  ,draw opacity=1 ][line width=3.75]    (451.8,168.34) -- (438.37,132.96) ;
\draw [color={rgb, 255:red, 125; green, 172; blue, 214 }  ,draw opacity=1 ][line width=1.5]    (451.8,168.34) -- (463.02,86.81) ;
\draw [color={rgb, 255:red, 125; green, 172; blue, 214 }  ,draw opacity=1 ][line width=0.75]    (387.32,147.93) -- (463.02,86.81) ;
\draw [color={rgb, 255:red, 125; green, 172; blue, 214 }  ,draw opacity=1 ][line width=2.25]    (387.32,147.93) -- (451.8,168.34) ;
\draw  [color={rgb, 255:red, 49; green, 116; blue, 178 }  ,draw opacity=1 ][fill={rgb, 255:red, 49; green, 116; blue, 178 }  ,fill opacity=1 ] (456.3,86.81) .. controls (456.3,83.05) and (459.31,80.01) .. (463.02,80.01) .. controls (466.73,80.01) and (469.74,83.05) .. (469.74,86.81) .. controls (469.74,90.57) and (466.73,93.61) .. (463.02,93.61) .. controls (459.31,93.61) and (456.3,90.57) .. (456.3,86.81) -- cycle ;
\draw  [color={rgb, 255:red, 49; green, 116; blue, 178 }  ,draw opacity=1 ][fill={rgb, 255:red, 49; green, 116; blue, 178 }  ,fill opacity=1 ] (431.65,132.96) .. controls (431.65,129.21) and (434.66,126.16) .. (438.37,126.16) .. controls (442.07,126.16) and (445.08,129.21) .. (445.08,132.96) .. controls (445.08,136.72) and (442.07,139.77) .. (438.37,139.77) .. controls (434.66,139.77) and (431.65,136.72) .. (431.65,132.96) -- cycle ;
\draw  [color={rgb, 255:red, 49; green, 116; blue, 178 }  ,draw opacity=1 ][fill={rgb, 255:red, 49; green, 116; blue, 178 }  ,fill opacity=1 ] (380.61,147.93) .. controls (380.61,144.17) and (383.61,141.13) .. (387.32,141.13) .. controls (391.03,141.13) and (394.04,144.17) .. (394.04,147.93) .. controls (394.04,151.69) and (391.03,154.74) .. (387.32,154.74) .. controls (383.61,154.74) and (380.61,151.69) .. (380.61,147.93) -- cycle ;
\draw  [color={rgb, 255:red, 49; green, 116; blue, 178 }  ,draw opacity=1 ][fill={rgb, 255:red, 49; green, 116; blue, 178 }  ,fill opacity=1 ] (445.08,168.34) .. controls (445.08,164.59) and (448.09,161.54) .. (451.8,161.54) .. controls (455.51,161.54) and (458.51,164.59) .. (458.51,168.34) .. controls (458.51,172.1) and (455.51,175.15) .. (451.8,175.15) .. controls (448.09,175.15) and (445.08,172.1) .. (445.08,168.34) -- cycle ;

\draw [color={rgb, 255:red, 49; green, 116; blue, 178 }  ,draw opacity=1 ][line width=1.5]    (768,430.5) .. controls (757.66,449.3) and (747.32,459.26) .. (729.34,467.95) ;
\draw [shift={(725.8,469.6)}, rotate = 335.75] [fill={rgb, 255:red, 49; green, 116; blue, 178 }  ,fill opacity=1 ][line width=0.08]  [draw opacity=0] (11.61,-5.58) -- (0,0) -- (11.61,5.58) -- cycle    ;
\draw [color={rgb, 255:red, 49; green, 116; blue, 178 }  ,draw opacity=1 ][line width=1.5]    (767.67,373) .. controls (758.58,353.22) and (752.32,343.95) .. (727.06,332.27) ;
\draw [shift={(723.8,330.8)}, rotate = 23.89] [fill={rgb, 255:red, 49; green, 116; blue, 178 }  ,fill opacity=1 ][line width=0.08]  [draw opacity=0] (11.61,-5.58) -- (0,0) -- (11.61,5.58) -- cycle    ;
\draw [color={rgb, 255:red, 49; green, 116; blue, 178 }  ,draw opacity=1 ][line width=1.5]    (380.5,349.25) .. controls (396.82,372.29) and (399.32,379.2) .. (400.82,409.14) ;
\draw [shift={(401,413)}, rotate = 267.42] [fill={rgb, 255:red, 49; green, 116; blue, 178 }  ,fill opacity=1 ][line width=0.08]  [draw opacity=0] (11.61,-5.58) -- (0,0) -- (11.61,5.58) -- cycle    ;
\draw [color={rgb, 255:red, 49; green, 116; blue, 178 }  ,draw opacity=1 ][line width=1.5]    (337.67,197.33) .. controls (336.69,241.21) and (314.17,249.59) .. (321.39,295.86) ;
\draw [shift={(322,299.5)}, rotate = 259.7] [fill={rgb, 255:red, 49; green, 116; blue, 178 }  ,fill opacity=1 ][line width=0.08]  [draw opacity=0] (11.61,-5.58) -- (0,0) -- (11.61,5.58) -- cycle    ;
\draw  [fill={rgb, 255:red, 255; green, 255; blue, 255 }  ,fill opacity=1 ] (988,243.73) .. controls (988,251.89) and (981.81,258.5) .. (974.17,258.5) -- (835.83,258.5) .. controls (828.19,258.5) and (822,251.89) .. (822,243.73) -- (822,57.6) -- (988,57.6) -- (988,243.73) -- cycle ;
\draw  [fill={rgb, 255:red, 19; green, 97; blue, 157 }  ,fill opacity=1 ] (835.83,32) -- (974.17,32) .. controls (981.81,32) and (988,38.61) .. (988,46.77) -- (988,57.6) -- (822,57.6) -- (822,46.77) .. controls (822,38.61) and (828.19,32) .. (835.83,32) -- cycle ;
\draw  [color={rgb, 255:red, 19; green, 97; blue, 157 }  ,draw opacity=1 ][fill={rgb, 255:red, 47; green, 134; blue, 209 }  ,fill opacity=1 ] (867.67,90.5) -- (887.67,90.5) -- (887.67,110.5) -- (867.67,110.5) -- cycle ;
\draw  [color={rgb, 255:red, 19; green, 97; blue, 157 }  ,draw opacity=1 ][fill={rgb, 255:red, 140; green, 197; blue, 242 }  ,fill opacity=1 ] (887.67,90.5) -- (907.67,90.5) -- (907.67,110.5) -- (887.67,110.5) -- cycle ;
\draw  [color={rgb, 255:red, 19; green, 97; blue, 157 }  ,draw opacity=1 ][fill={rgb, 255:red, 220; green, 238; blue, 255 }  ,fill opacity=1 ] (907.67,90.5) -- (927.67,90.5) -- (927.67,110.5) -- (907.67,110.5) -- cycle ;
\draw  [color={rgb, 255:red, 19; green, 97; blue, 157 }  ,draw opacity=1 ][fill={rgb, 255:red, 140; green, 197; blue, 242 }  ,fill opacity=1 ] (867.67,110.5) -- (887.67,110.5) -- (887.67,130.5) -- (867.67,130.5) -- cycle ;
\draw  [color={rgb, 255:red, 19; green, 97; blue, 157 }  ,draw opacity=1 ][fill={rgb, 255:red, 15; green, 95; blue, 174 }  ,fill opacity=1 ] (887.67,110.5) -- (907.67,110.5) -- (907.67,130.5) -- (887.67,130.5) -- cycle ;
\draw  [color={rgb, 255:red, 19; green, 97; blue, 157 }  ,draw opacity=1 ][fill={rgb, 255:red, 185; green, 220; blue, 250 }  ,fill opacity=1 ] (907.67,110.5) -- (927.67,110.5) -- (927.67,130.5) -- (907.67,130.5) -- cycle ;
\draw  [color={rgb, 255:red, 19; green, 97; blue, 157 }  ,draw opacity=1 ][fill={rgb, 255:red, 220; green, 238; blue, 255 }  ,fill opacity=1 ] (867.67,130.5) -- (887.67,130.5) -- (887.67,150.5) -- (867.67,150.5) -- cycle ;
\draw  [color={rgb, 255:red, 19; green, 97; blue, 157 }  ,draw opacity=1 ][fill={rgb, 255:red, 185; green, 220; blue, 250 }  ,fill opacity=1 ] (887.67,130.5) -- (907.67,130.5) -- (907.67,150.5) -- (887.67,150.5) -- cycle ;
\draw  [color={rgb, 255:red, 19; green, 97; blue, 157 }  ,draw opacity=1 ][fill={rgb, 255:red, 47; green, 134; blue, 209 }  ,fill opacity=1 ] (907.67,130.5) -- (927.67,130.5) -- (927.67,150.5) -- (907.67,150.5) -- cycle ;

\draw  [color={rgb, 255:red, 19; green, 97; blue, 157 }  ,draw opacity=1 ][fill={rgb, 255:red, 255; green, 255; blue, 255 }  ,fill opacity=1 ] (927.67,90.5) -- (947.67,90.5) -- (947.67,110.5) -- (927.67,110.5) -- cycle ;
\draw  [color={rgb, 255:red, 19; green, 97; blue, 157 }  ,draw opacity=1 ][fill={rgb, 255:red, 255; green, 255; blue, 255 }  ,fill opacity=1 ] (927.67,110.5) -- (947.67,110.5) -- (947.67,130.5) -- (927.67,130.5) -- cycle ;
\draw  [color={rgb, 255:red, 19; green, 97; blue, 157 }  ,draw opacity=1 ][fill={rgb, 255:red, 255; green, 255; blue, 255 }  ,fill opacity=1 ] (927.67,130.5) -- (947.67,130.5) -- (947.67,150.5) -- (927.67,150.5) -- cycle ;
\draw  [color={rgb, 255:red, 19; green, 97; blue, 157 }  ,draw opacity=1 ][fill={rgb, 255:red, 255; green, 255; blue, 255 }  ,fill opacity=1 ] (907.67,150.5) -- (927.67,150.5) -- (927.67,170.5) -- (907.67,170.5) -- cycle ;
\draw  [color={rgb, 255:red, 19; green, 97; blue, 157 }  ,draw opacity=1 ][fill={rgb, 255:red, 255; green, 255; blue, 255 }  ,fill opacity=1 ] (887.67,150.5) -- (907.67,150.5) -- (907.67,170.5) -- (887.67,170.5) -- cycle ;
\draw  [color={rgb, 255:red, 19; green, 97; blue, 157 }  ,draw opacity=1 ][fill={rgb, 255:red, 255; green, 255; blue, 255 }  ,fill opacity=1 ] (867.67,150.5) -- (887.67,150.5) -- (887.67,170.5) -- (867.67,170.5) -- cycle ;
\draw  [color={rgb, 255:red, 19; green, 97; blue, 157 }  ,draw opacity=1 ][fill={rgb, 255:red, 255; green, 255; blue, 255 }  ,fill opacity=1 ] (927.67,150.5) -- (947.67,150.5) -- (947.67,170.5) -- (927.67,170.5) -- cycle ;
\draw  [color={rgb, 255:red, 19; green, 97; blue, 157 }  ,draw opacity=1 ][line width=1.5]  (867.67,90.5) -- (947.67,90.5) -- (947.67,170.5) -- (867.67,170.5) -- cycle ;

\draw  [fill={rgb, 255:red, 19; green, 97; blue, 157 }  ,fill opacity=1 ] (195,32) -- (305,32) .. controls (313.28,32) and (320,38.72) .. (320,47) -- (320,58) -- (180,58) -- (180,47) .. controls (180,38.72) and (186.72,32) .. (195,32) -- cycle ;
\draw  [fill={rgb, 255:red, 255; green, 255; blue, 255 }  ,fill opacity=1 ] (320,217) .. controls (320,225.28) and (313.28,232) .. (305,232) -- (195,232) .. controls (186.72,232) and (180,225.28) .. (180,217) -- (180,58) -- (320,58) -- (320,217) -- cycle ;

\draw  [color={rgb, 255:red, 0; green, 0; blue, 0 }  ,draw opacity=1 ][fill={rgb, 255:red, 183; green, 209; blue, 226 }  ,fill opacity=1 ] (198.47,135.73) -- (218.47,135.73) -- (218.47,167.07) -- (198.47,167.07) -- cycle ;
\draw  [color={rgb, 255:red, 0; green, 0; blue, 0 }  ,draw opacity=1 ][fill={rgb, 255:red, 183; green, 209; blue, 226 }  ,fill opacity=1 ] (241.5,105.73) -- (268,105.73) -- (268,136.23) -- (241.5,136.23) -- cycle ;
\draw  [color={rgb, 255:red, 0; green, 0; blue, 0 }  ,draw opacity=1 ][fill={rgb, 255:red, 183; green, 209; blue, 226 }  ,fill opacity=1 ] (284,81.73) -- (304,81.73) -- (304,111.73) -- (284,111.73) -- cycle ;
\draw  [color={rgb, 255:red, 0; green, 0; blue, 0 }  ,draw opacity=1 ][fill={rgb, 255:red, 183; green, 209; blue, 226 }  ,fill opacity=1 ] (250.33,167.67) -- (280.33,167.67) -- (280.33,187.67) -- (250.33,187.67) -- cycle ;
\draw  [color={rgb, 255:red, 125; green, 172; blue, 214 }  ,draw opacity=1 ][line width=1.5]  (282.33,93.92) -- (282.33,164.23) -- (272.33,164.23) ;
\draw  [color={rgb, 255:red, 125; green, 172; blue, 214 }  ,draw opacity=1 ][line width=1.5]  (220.47,147.57) -- (220.47,93.05) -- (278.59,93.05) ;
\draw  [color={rgb, 255:red, 125; green, 172; blue, 214 }  ,draw opacity=1 ][line width=1.5]  (267.83,165.52) -- (221.83,165.52) -- (221.83,151.73) ;
\draw  [color={rgb, 255:red, 125; green, 172; blue, 214 }  ,draw opacity=1 ][line width=1.5]  (269.83,162.67) -- (259.83,162.67) -- (259.83,141.73) ;
\draw  [color={rgb, 255:red, 125; green, 172; blue, 214 }  ,draw opacity=1 ][line width=1.5]  (223.38,147.57) -- (223.38,139.23) -- (257.5,139.23) ;
\draw  [color={rgb, 255:red, 125; green, 172; blue, 214 }  ,draw opacity=1 ][line width=1.5]  (279.5,93.75) -- (279.5,139.23) -- (262.33,139.23) ;
\draw  [color={rgb, 255:red, 49; green, 116; blue, 178 }  ,draw opacity=1 ][fill={rgb, 255:red, 49; green, 116; blue, 178 }  ,fill opacity=1 ] (267.33,164.23) .. controls (267.33,162.85) and (268.45,161.73) .. (269.83,161.73) .. controls (271.21,161.73) and (272.33,162.85) .. (272.33,164.23) .. controls (272.33,165.61) and (271.21,166.73) .. (269.83,166.73) .. controls (268.45,166.73) and (267.33,165.61) .. (267.33,164.23) -- cycle ;
\draw  [color={rgb, 255:red, 49; green, 116; blue, 178 }  ,draw opacity=1 ][fill={rgb, 255:red, 49; green, 116; blue, 178 }  ,fill opacity=1 ] (278.33,92.57) .. controls (278.33,91.19) and (279.45,90.07) .. (280.83,90.07) .. controls (282.21,90.07) and (283.33,91.19) .. (283.33,92.57) .. controls (283.33,93.95) and (282.21,95.07) .. (280.83,95.07) .. controls (279.45,95.07) and (278.33,93.95) .. (278.33,92.57) -- cycle ;
\draw  [color={rgb, 255:red, 49; green, 116; blue, 178 }  ,draw opacity=1 ][fill={rgb, 255:red, 49; green, 116; blue, 178 }  ,fill opacity=1 ] (257.33,139.23) .. controls (257.33,137.85) and (258.45,136.73) .. (259.83,136.73) .. controls (261.21,136.73) and (262.33,137.85) .. (262.33,139.23) .. controls (262.33,140.61) and (261.21,141.73) .. (259.83,141.73) .. controls (258.45,141.73) and (257.33,140.61) .. (257.33,139.23) -- cycle ;
\draw  [color={rgb, 255:red, 49; green, 116; blue, 178 }  ,draw opacity=1 ][fill={rgb, 255:red, 49; green, 116; blue, 178 }  ,fill opacity=1 ] (219.33,149.23) .. controls (219.33,147.85) and (220.45,146.73) .. (221.83,146.73) .. controls (223.21,146.73) and (224.33,147.85) .. (224.33,149.23) .. controls (224.33,150.61) and (223.21,151.73) .. (221.83,151.73) .. controls (220.45,151.73) and (219.33,150.61) .. (219.33,149.23) -- cycle ;

\node[forward group] at (165,10.5) {Graph Construction};
\node[forward group] at (672.5,10.5)
    {Algebraic Evaluation via the Matrix-Tree Theorem};
\node[forward group,font=\fontsize{12}{14}\selectfont\bfseries,
    anchor=east,inner xsep=2pt] at (318,279.57)
    {Fused Forward\&Backward};
\node[font=\fontsize{12}{14}\selectfont\bfseries,align=center,
    fill={rgb,255:red,219;green,235;blue,253},rounded corners=7.5pt,
    minimum width=59.25pt,minimum height=30pt,inner sep=2pt]
    at (59.27,291.83) {Gradient\\Descent};

\node[forward heading] at (80,45) {Pin Coordinates};
\node[forward heading] at (250,45) {Edge Weights};
\node[forward heading] at (424,45) {Edge Affinities};
\node[forward heading] at (658,45)
    {Graph Laplacian and Its Minor};
\node[forward heading] at (905,44.8) {Zero-Padded Inverse};
\node[forward heading] at (878,315) {Edge Marginal Probabilities};
\node[forward heading] at (569,267.5) {RMST Wirelength Evaluation};
\node[forward heading] at (290,320)
    {Spatial Derivatives of\\Edge Weights};
\node[forward heading] at (475,428) {RMST Wirelength Gradients};
\node[forward heading] at (101,366) {Gradient Directions};

\node at (80,210) {$\{(x_i,y_i)\mid i\in V_e\}$};
\node at (250,210) {$\boldsymbol{W}_e=[w_{ij}]$};
\node at (424,204)
    {$\boldsymbol{A}_e\!=\!\hat{\exp}\!\left(-\frac{\boldsymbol{W}_e}{\beta}\right)\!-\!\boldsymbol{I}$};
\node at (592,189) {$\boldsymbol{L}_e=\boldsymbol{D}_e-\boldsymbol{A}_e$};
\node at (720,184) {$\boldsymbol{L}_e^k$};
\node[forward note] at (717,201) {(remove $k$-th row\&col)};
\node at (905,219) {$\boldsymbol{M}_e=\begin{bmatrix}
    \left(\boldsymbol{L}_e^k\right)^{-1} & \boldsymbol{0}\\
    \boldsymbol{0}^{\mathsf T} & 0
\end{bmatrix}$};
\node at (878,472)
    {$\begin{aligned}
        \boldsymbol{P}_e&=\boldsymbol{A}_e\odot
            \bigl[\operatorname{Diag}(\boldsymbol{M}_e)\boldsymbol{1}^{\mathsf T}\\
        &\quad+\boldsymbol{1}\operatorname{Diag}(\boldsymbol{M}_e)^{\mathsf T}
            -2\boldsymbol{M}_e\bigr]
    \end{aligned}$};
\node at (628,346)
    {$\begin{aligned}
        \mathbb{E}_e[\mathcal{W}_{\text{RMST}}]
            &=\tfrac12\langle\boldsymbol{W}_e,\boldsymbol{P}_e\rangle\\
        &=\sum_{\substack{i,j\in V_e\\i<j}}w_{ij}p_{ij}
    \end{aligned}$};
\node at (290,371)
    {$\displaystyle\frac{\partial w_{ij}}{\partial x_i}
        \qquad\frac{\partial w_{ij}}{\partial y_i}$};
\node at (350,474)
    {$\displaystyle\frac{\partial\hat{\mathcal{W}}_{\text{RMST}}^\beta}{\partial x_i}
        =\sum\nolimits_{j\neq i}p_{ij}\frac{\partial w_{ij}}{\partial x_i}$};
\node at (600,474)
    {$\displaystyle\frac{\partial\hat{\mathcal{W}}_{\text{RMST}}^\beta}{\partial y_i}
        =\sum\nolimits_{j\neq i}p_{ij}\frac{\partial w_{ij}}{\partial y_i}$};

\end{tikzpicture}
}
\caption{Overview of the proposed DiffRMST wirelength model.}
\label{fig:forward}
\end{figure*}

%% file: include/4-experiment.tex
\input{include/Table/exp_benchmarks}
\section{Experimental Results}\label{sec:exp}

This section evaluates the practical benefits of DiffRMST in terms of wirelength quality and computational efficiency. We compare its legalized placements and global routing results with those of the LSE and WA baselines to assess improvements in tree and routed wirelength. We then analyze wirelength changes across net degrees to examine how the benefits vary with net connectivity, and compare CPU and GPU execution to evaluate placement acceleration and complete placement time. DiffRMST is implemented within DREAMPlace~\cite{lin2021dreamplace}\footnote{\url{https://github.com/limbo018/DREAMPlace}} using C++/OpenMP for the CPU operator and CUDA for the GPU operator, with both integrated into the PyTorch placement framework. All experiments run on a server equipped with an AMD EPYC 7763 processor and an NVIDIA GeForce RTX 4090 GPU. The implementation is compiled in Release mode, and placement uses FP32. CPU operators use eight threads.

\subsection{Experimental Setup}
\label{sec:exp_setup}
\subsubsection{Benchmarks}
We evaluate DiffRMST on the ISPD2019 benchmark suite~\cite{liu2019ispd}, which includes designs with 8,208 to 899,325 movable instances and up to approximately 3.96 million pins. Table~\ref{tab:exp_benchmarks} summarizes the movable and fixed instance counts, net and pin counts, and net-degree distributions of the benchmark netlists.

Table~\ref{tab:exp_benchmarks} shows that two-pin nets account for 49.76\% to 80.37\% of all nets across the benchmark netlists. The proportion of larger nets varies substantially: nets with 33--99 pins account for approximately 5.90\% of the nets in \texttt{ispd\_test1} but less than 0.01\% of those in \texttt{ispd\_test4}. These differences allow us to evaluate the model and its parallel implementation across circuits with different sizes and connectivity patterns.

\subsubsection{Baselines}
We compare DiffRMST with DREAMPlace's native LSE~\cite{naylor2001non} and WA~\cite{hsu2013tsv} wirelength models.
The tables use the labels LSE-DREAMPlace, WA-DREAMPlace, and DiffRMST-DREAMPlace. All methods share the same benchmark configuration, initialization, and stopping criteria, with DREAMPlace's default settings for the remaining placement parameters. The baselines use their merged forward and backward operators, and DiffRMST uses the adaptive parameters in Section~\ref{sec:parallel_merged}. The main quality comparison evaluates the GPU implementations, while the efficiency comparison reports DiffRMST's CPU and GPU results and compares the GP speedups of all three models.

\subsubsection{Evaluation Metrics}
HPWL, RMST, and RSMT are evaluated on the final legalized DEF using the same nets with $2\leq|V_e|<100$ and unit net weights. We compute exact RMST using Manhattan Prim and exact RSMT using GeoSteiner~\cite{juhl2018geosteiner}. CUGR2.0~\cite{liu2023edge} evaluates routed wirelength (RWL) on the same layouts. RWL uses the router's access points and paths, whereas the geometric metrics use pin centers.

For runtime (RT), GP measures the time spent in global placement iterations, while Total includes input reading, initialization, global placement, legalization, and DEF output. Offline evaluation and routing are excluded. The RT column in Table~\ref{tab:exp_comparison} reports Total. GP speedup is measured relative to each model's CPU implementation. Table ratios and overall percentage improvements use DiffRMST as the reference, while individual examples and heatmap percentages use the compared baseline. Summary results are averaged across the benchmarks.

\subsection{Overall Comparison with Baseline Methods}
\label{sec:exp_quality}
Table~\ref{tab:exp_comparison} compares the three GPU implementations in terms of HPWL, RMST, RSMT, RWL, and complete placement runtime. Relative to LSE and WA, DiffRMST reduces RMST by 4.9\% and 5.0\% on average, respectively. The corresponding reductions in exact RSMT are 2.9\% and 3.2\%. Both tree metrics improve on nine of ten circuits against LSE and on all ten against WA. The agreement between RMST and RSMT shows that the benefit extends from the optimized tree objective to the independently evaluated Steiner-tree reference.

\input{include/Table/exp_comparison}

The improvement also carries over to global routing. DiffRMST reduces RWL by 3.2\% and 3.5\% on average relative to LSE and WA, respectively, with shorter RWL on eight and nine circuits. Considering the metrics jointly, DiffRMST achieves lower RMST, RSMT, and RWL than both baselines on eight of the ten benchmarks.

These gains persist on the four largest netlists, \texttt{ispd\_test7} through \texttt{ispd\_test10}, where all three metrics improve against both baselines. On \texttt{ispd\_test10}, RMST decreases by 6.09\% relative to LSE and 5.85\% relative to WA. The corresponding RSMT reductions are 4.11\% and 4.13\%, while RWL decreases by 4.31\% and 4.39\%.

HPWL exhibits the opposite trend: DiffRMST has larger HPWL on every benchmark against both baselines, with average increases of 4.9\% and 4.7\% relative to LSE and WA, respectively. This tradeoff is expected because DiffRMST optimizes a topology-aware tree objective rather than the bounding-box surrogate measured by HPWL, allowing higher HPWL while reducing tree-based and routed wirelength.

The wirelength improvements are achieved with acceptable runtime overhead. On \texttt{ispd\_test10}, DiffRMST completes placement in 45.81~s, compared with 40.74~s for LSE and 41.73~s for WA, while achieving the RMST, RSMT, and RWL reductions reported above.

\subsection{Wirelength Comparison Analysis across Net Degrees}
\label{sec:exp_degree}
Figure~\ref{fig:degree_analysis}(a) compares wirelength within individual net-degree groups. The upper curves show average baseline-to-DiffRMST wirelength ratios for the GPU implementations, with values above one favoring DiffRMST. The lower heatmap reports per-circuit RSMT changes within each group relative to WA, where negative values indicate shorter wirelength. All statistics cover the evaluated 2--99-pin nets. Each degree group includes all ten circuits except 65--99, which excludes the empty group in \texttt{ispd\_test4}.

\input{include/Figure/ispd2019_degree_comparison}

Both RMST and RSMT ratios exceed one against LSE and WA for every group with at least five pins, and the advantage becomes more pronounced above eight pins. For example, the WA-to-DiffRMST ratios for 17--32-pin nets reach 1.221 for RMST and 1.182 for RSMT. The heatmap confirms that RSMT decreases relative to WA on every circuit in the 9--16-pin and 17--32-pin groups, with more variation in the largest groups. In contrast, the average RSMT ratios for two-, three-, and four-pin nets fall below one against both baselines.

The heatmap also shows how the changes in smaller and larger nets combine. On \texttt{ispd\_test6} through \texttt{ispd\_test10}, RSMT increases in every 2--4-pin group and decreases in every group with at least five pins, while the All column remains negative. For these circuits, the savings in larger nets therefore outweigh the losses in smaller nets, despite the latter's predominance in net count in Table~\ref{tab:exp_benchmarks}.

The variation across benchmarks also relates to their net-degree distributions in Table~\ref{tab:exp_benchmarks}. In \texttt{ispd\_test1}, nets with 33--99 pins account for approximately 5.90\% of all nets in the input netlist. The heatmap shows RSMT reductions of 19.0\% and 22.2\% in the 33--64-pin and 65--99-pin groups, respectively, alongside a total reduction of 7.7\%. In comparison, \texttt{ispd\_test4} has a negligible proportion of nets in this degree range and a total reduction of 0.5\%. This contrast is consistent with greater scope for aggregate savings when high-degree nets are more prevalent, although their counts alone do not determine their contributions to total wirelength.

Longer low-degree nets in the final placement do not, by themselves, establish an intrinsic disadvantage of the proposed model on those nets. Equation~\eqref{eq:placement_obj} couples all net costs and density through shared cell coordinates. Moving a cell to shorten a larger net can lengthen its other incident nets, while density constraints limit compensating movements. The observed degree pattern is therefore consistent with improvements in larger nets consuming some of the placement freedom available to smaller nets. In particular, exact RMST, RSMT, and HPWL coincide for a two-pin net, and Equation~\eqref{eq:lse_obj} reduces to its single smoothed edge length without topological relaxation error.

The selected 26-pin net \texttt{net25438} in \texttt{ispd\_test2} illustrates the benefit of modeling internal connections. Figures~\ref{fig:net_placement_example}(b) and (c) compare the same net in the WA and DiffRMST placements. Colored cells and pins identify the selected net against gray neighboring cells. Although HPWL increases from 52.192 to 68.281~$\mu$m, RMST decreases from 131.120 to 92.320~$\mu$m and RSMT from 115.360 to 87.508~$\mu$m, giving tree-wirelength reductions of 29.59\% and 24.14\%, respectively. HPWL depends only on coordinate extrema and cannot distinguish internal pin arrangements with the same bounding box but different connection lengths. WA and LSE smooth this objective without explicitly modeling those connections. DiffRMST accounts for the connection structure and, in this example, produces shorter trees despite a larger bounding box, consistent with the gains in the higher-degree groups.

\subsection{Placement Acceleration}
\label{sec:exp_efficiency}
Table~\ref{tab:exp_efficiency} compares DiffRMST on the eight-thread CPU and the GPU. Both implementations complete legalization on all ten benchmarks, with final RMST differing by at most 1.69\% relative to the CPU result. GPU execution accelerates GP by $2.77\times$ to $49.48\times$, averaging $14.74\times$. The gains are particularly strong on \texttt{ispd\_test7} through \texttt{ispd\_test10}, where speedups reach $38.84\times$ to $49.48\times$ while RMST differences remain below 0.30\%. These circuits have similar degree distributions and large netlists, providing substantial concurrency across independent nets in addition to the parallel matrix operations within each multipin net.

\input{include/Table/exp_efficiency}

Figure~\ref{fig:gp_speedup} compares each model's GP speedup relative to its own CPU implementation. DiffRMST achieves higher speedups than both baselines on eight of the ten benchmarks. Its average speedup exceeds LSE's $8.68\times$ and WA's $9.07\times$. On \texttt{ispd\_test10}, LSE and WA achieve $26.25\times$ and $27.32\times$, compared with DiffRMST's $49.48\times$. This advantage is consistent with the degree-aware CUDA mapping in Section~\ref{sec:parallel_merged}, which distributes the additional matrix work of marginal evaluation across cooperating threads. These timings cover all GP iterations, including density computation.

\input{include/Figure/ispd2019_gp_speedup}

The average speedup for Total is $6.47\times$. On \texttt{ispd\_test10}, GPU global placement accounts for only 10.97~s of the 45.81~s Total runtime. The remaining runtime is dominated by initialization, legalization, output, and other non-GP stages, which reduces the impact of GP acceleration on the end-to-end runtime.

%% file: include/Table/exp_benchmarks.tex
\begin{table*}[t]
    \centering
    \caption{ISPD2019 benchmark statistics and net-degree distributions\textsuperscript{*}.}
    \label{tab:exp_benchmarks}
    \footnotesize
    \setlength{\tabcolsep}{2pt}
    \renewcommand{\arraystretch}{1.2}
    \begin{tabular*}{\textwidth}{@{\extracolsep{\fill}}c|rrrr*{10}{r}@{}}
\toprule
\multirow[c]{2}{*}{Benchmark} & \multicolumn{4}{c}{Netlist statistics} & \multicolumn{10}{c}{Net-degree distribution (\% of nets)} \\
\cline{2-5}\cline{6-15}
 & \#Movable & \#Fixed & \#Nets & \#Pins & 0--1 & 2 & 3 & 4 & 5--8 & 9--16 & 17--32 & 33--64 & 65--99 & $\geq 100$ \\
\midrule
\texttt{ispd\_test1} & 8,879 & 0 & 3,153 & 17,203 & 0.03 & 61.85 & 3.30 & 21.31 & 3.81 & 1.59 & 2.22 & 5.80 & 0.10 & 0.00 \\
\texttt{ispd\_test2} & 72,090 & 4 & 72,410 & 318,245 & 0.17 & 57.07 & 5.44 & 26.68 & 5.14 & 1.76 & 0.76 & 2.75 & 0.22 & $<0.01$ \\
\texttt{ispd\_test3} & 8,208 & 75 & 8,953 & 30,271 & 3.52 & 57.33 & 18.76 & 7.08 & 7.80 & 3.72 & 1.31 & 0.41 & 0.06 & 0.01 \\
\texttt{ispd\_test4} & 146,435 & 7 & 151,612 & 436,707 & 0.00 & 49.76 & 31.32 & 12.56 & 5.21 & 1.00 & 0.16 & $<0.01$ & 0.00 & $<0.01$ \\
\texttt{ispd\_test5} & 28,914 & 8 & 29,416 & 80,757 & 0.00 & 80.37 & 5.88 & 7.18 & 3.38 & 2.18 & 0.89 & 0.10 & 0.01 & 0.00 \\
\texttt{ispd\_test6} & 179,865 & 16 & 179,863 & 793,289 & 0.09 & 57.25 & 5.35 & 26.96 & 4.56 & 2.04 & 0.77 & 2.77 & 0.21 & $<0.01$ \\
\texttt{ispd\_test7} & 359,730 & 16 & 358,720 & 1,584,844 & 0.04 & 57.27 & 5.29 & 27.01 & 4.55 & 2.07 & 0.77 & 2.78 & 0.21 & $<0.01$ \\
\texttt{ispd\_test8} & 539,595 & 16 & 537,577 & 2,376,399 & 0.03 & 57.28 & 5.26 & 27.03 & 4.54 & 2.08 & 0.77 & 2.78 & 0.21 & $<0.01$ \\
\texttt{ispd\_test9} & 899,325 & 16 & 895,252 & 3,957,481 & 0.02 & 57.28 & 5.25 & 27.05 & 4.64 & 2.00 & 0.77 & 2.78 & 0.22 & $<0.01$ \\
\texttt{ispd\_test10} & 899,325 & 79 & 895,253 & 3,957,499 & 0.02 & 57.28 & 5.25 & 27.05 & 4.64 & 2.00 & 0.77 & 2.78 & 0.22 & $<0.01$ \\
\bottomrule
    \end{tabular*}
    \par\smallskip
    \begin{minipage}{\textwidth}
        \footnotesize \textsuperscript{*}Instance counts exclude fillers. Fixed instances exclude stand-alone I/O terminals. Net and pin counts describe the complete input netlist. Net degree is the number of pins per net. Percentages are relative to all nets in each input netlist. $<0.01$ denotes a nonzero percentage below 0.01\%.
    \end{minipage}
\end{table*}

%% file: include/Table/exp_comparison.tex
\begin{table*}[t]
    \centering
    \caption{Wirelength and complete placement runtime of the GPU implementations on ISPD2019. Best and second-best results are shown in \textbf{bold} and \textcolor{brown}{brown}, respectively\textsuperscript{*}.}
    \label{tab:exp_comparison}
    \footnotesize
    \setlength{\tabcolsep}{2pt}
    \renewcommand{\arraystretch}{1.2}
    \begin{tabular*}{\textwidth}{@{\extracolsep{\fill}}c|rrrrrrrrrrrrrrr@{}}
\toprule
\multirow[c]{2}{*}{Benchmark} & \multicolumn{5}{c}{LSE-DREAMPlace} & \multicolumn{5}{c}{WA-DREAMPlace} & \multicolumn{5}{c}{DiffRMST-DREAMPlace} \\
\cline{2-6}\cline{7-11}\cline{12-16}
 & HPWL & RMST & RSMT & RWL & RT & HPWL & RMST & RSMT & RWL & RT & HPWL & RMST & RSMT & RWL & RT \\
\midrule
\texttt{ispd\_test1} & \textbf{0.41} & \textcolor{brown}{0.63} & \textcolor{brown}{0.59} & \textcolor{brown}{0.58} & \textbf{2.83} & \textcolor{brown}{0.41} & 0.63 & 0.59 & 0.59 & \textcolor{brown}{3.06} & 0.47 & \textbf{0.57} & \textbf{0.55} & \textbf{0.54} & 3.19 \\
\texttt{ispd\_test2} & \textbf{17.68} & 22.38 & \textcolor{brown}{21.23} & \textcolor{brown}{21.00} & \textcolor{brown}{6.15} & \textcolor{brown}{17.69} & \textcolor{brown}{22.36} & 21.26 & 21.07 & \textbf{5.71} & 18.57 & \textbf{21.21} & \textbf{20.54} & \textbf{20.27} & 6.63 \\
\texttt{ispd\_test3} & \textbf{0.74} & \textcolor{brown}{0.90} & \textcolor{brown}{0.86} & \textcolor{brown}{0.84} & \textcolor{brown}{3.01} & \textcolor{brown}{0.76} & 0.92 & 0.87 & 0.85 & \textbf{2.99} & 0.79 & \textbf{0.89} & \textbf{0.85} & \textbf{0.83} & 3.08 \\
\texttt{ispd\_test4} & \textbf{17.17} & \textcolor{brown}{17.91} & \textcolor{brown}{17.63} & \textbf{24.37} & \textcolor{brown}{9.10} & \textcolor{brown}{17.22} & 17.96 & 17.70 & \textcolor{brown}{24.60} & \textbf{8.82} & 17.47 & \textbf{17.72} & \textbf{17.61} & 24.63 & 10.32 \\
\texttt{ispd\_test5} & \textbf{3.25} & \textbf{3.55} & \textbf{3.45} & \textbf{3.97} & \textbf{15.27} & \textcolor{brown}{3.34} & 3.67 & 3.57 & 4.06 & 15.76 & 3.42 & \textcolor{brown}{3.57} & \textcolor{brown}{3.52} & \textcolor{brown}{4.01} & \textcolor{brown}{15.50} \\
\texttt{ispd\_test6} & \textcolor{brown}{45.33} & 56.88 & 54.01 & 53.67 & \textbf{9.95} & \textbf{45.11} & \textcolor{brown}{56.62} & \textcolor{brown}{53.90} & \textcolor{brown}{53.58} & \textcolor{brown}{10.31} & 47.41 & \textbf{54.07} & \textbf{52.40} & \textbf{51.92} & 11.44 \\
\texttt{ispd\_test7} & \textcolor{brown}{92.48} & 116.25 & 110.30 & 109.53 & \textcolor{brown}{18.22} & \textbf{89.46} & \textcolor{brown}{112.90} & \textcolor{brown}{107.23} & \textcolor{brown}{106.42} & \textbf{17.04} & 95.22 & \textbf{108.97} & \textbf{105.41} & \textbf{104.36} & 19.05 \\
\texttt{ispd\_test8} & \textbf{134.75} & \textcolor{brown}{170.53} & \textcolor{brown}{161.58} & \textcolor{brown}{160.10} & \textcolor{brown}{27.24} & \textcolor{brown}{135.55} & 171.16 & 162.54 & 161.28 & \textbf{26.28} & 140.84 & \textbf{161.35} & \textbf{155.99} & \textbf{154.10} & 27.63 \\
\texttt{ispd\_test9} & \textcolor{brown}{213.26} & 273.11 & 257.40 & 253.95 & \textcolor{brown}{41.62} & \textbf{211.12} & \textcolor{brown}{269.28} & \textcolor{brown}{254.63} & \textcolor{brown}{251.41} & \textbf{39.29} & 220.72 & \textbf{256.44} & \textbf{246.60} & \textbf{242.65} & 44.93 \\
\texttt{ispd\_test10} & \textbf{213.06} & 272.35 & \textcolor{brown}{257.35} & \textcolor{brown}{254.91} & \textbf{40.74} & \textcolor{brown}{213.16} & \textcolor{brown}{271.64} & 257.40 & 255.11 & \textcolor{brown}{41.73} & 221.58 & \textbf{255.77} & \textbf{246.76} & \textbf{243.92} & 45.81 \\
\midrule
Ratio & \textbf{0.951} & \textcolor{brown}{1.049} & \textcolor{brown}{1.029} & \textcolor{brown}{1.032} & \textcolor{brown}{0.928} & \textcolor{brown}{0.953} & 1.050 & 1.032 & 1.035 & \textbf{0.919} & 1.000 & \textbf{1.000} & \textbf{1.000} & \textbf{1.000} & 1.000 \\
\bottomrule
    \end{tabular*}
    \par\smallskip
    \begin{minipage}{\textwidth}
        \footnotesize \textsuperscript{*}Wirelength ($10^6$~$\mu$m), RT (s). Ratio reports average ratios relative to DiffRMST-DREAMPlace. For each metric, lower values are better. Rankings use unrounded values.
    \end{minipage}
\end{table*}

%% file: include/Figure/ispd2019_degree_comparison.tex
\begin{figure*}[t]
    \centering
    \includegraphics[width=\textwidth]{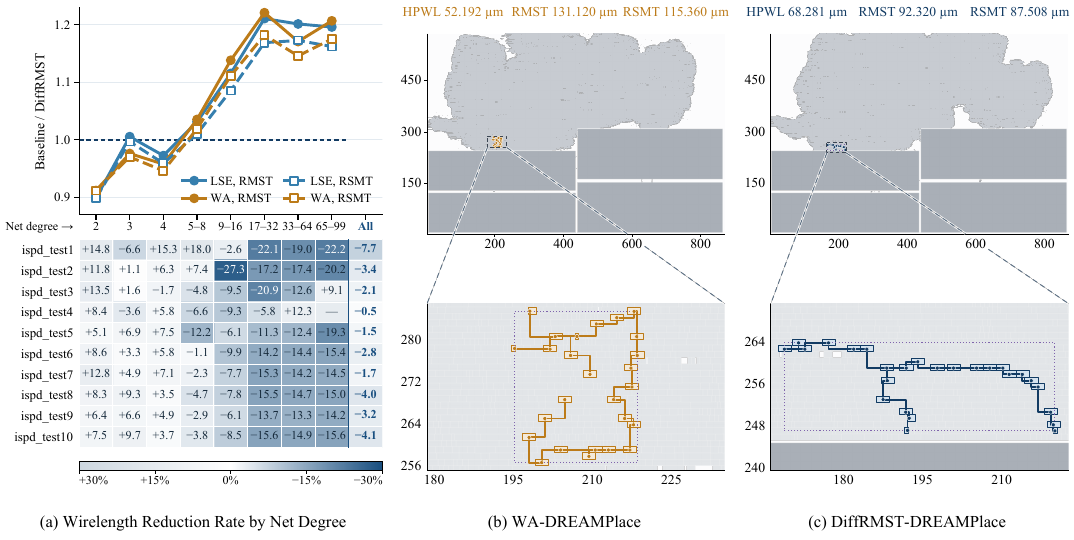}
    \caption{Wirelength comparison across net degrees. (a) Average baseline-to-DiffRMST ratios (top) and RSMT percentage changes relative to WA (bottom). Ratios above one and negative changes favor DiffRMST. All denotes each circuit's total over evaluated nets. (b) WA and (c) DiffRMST placements of the same 26-pin net in \texttt{ispd\_test2}, with RMST overlays at matched scales. Dotted boxes enclose pins, and dashed guides indicate enlarged regions. Placement coordinates and wirelengths are in $\mu$m.}
    \label{fig:degree_analysis}
    \label{fig:rsmt_heatmap}
    \label{fig:net_placement_example}
\end{figure*}

%% file: include/Table/exp_efficiency.tex
\begin{table}[t]
    \centering
    \caption{CPU and GPU comparison of DiffRMST-DREAMPlace. RMST ($10^6$), GP and Total (s). Ratio reports average ratios, with GPU normalized to 1. Best and second-best results are shown in \textbf{bold} and \textcolor{brown}{brown}, respectively\textsuperscript{*}.}
    \label{tab:exp_efficiency}
    \footnotesize
    \setlength{\tabcolsep}{2pt}
    \renewcommand{\arraystretch}{1.2}
    \begin{tabular*}{\columnwidth}{@{\extracolsep{\fill}}c|rrrrrr@{}}
\toprule
\multirow[c]{2}{*}{Benchmark} & \multicolumn{3}{c}{CPU} & \multicolumn{3}{c}{GPU} \\
\cline{2-4}\cline{5-7}
 & RMST & GP & Total & RMST & GP & Total \\
\midrule
\texttt{ispd\_test1} & \textbf{0.56} & \textcolor{brown}{6.10} & \textcolor{brown}{7.10} & \textcolor{brown}{0.57} & \textbf{2.20} & \textbf{3.19} \\
\texttt{ispd\_test2} & \textbf{21.20} & \textcolor{brown}{33.11} & \textcolor{brown}{37.78} & \textcolor{brown}{21.21} & \textbf{2.68} & \textbf{6.63} \\
\texttt{ispd\_test3} & \textcolor{brown}{0.90} & \textcolor{brown}{11.56} & \textcolor{brown}{12.63} & \textbf{0.89} & \textbf{2.09} & \textbf{3.08} \\
\texttt{ispd\_test4} & \textbf{17.71} & \textcolor{brown}{41.79} & \textcolor{brown}{48.15} & \textcolor{brown}{17.72} & \textbf{5.31} & \textbf{10.32} \\
\texttt{ispd\_test5} & \textcolor{brown}{3.57} & \textcolor{brown}{48.30} & \textcolor{brown}{52.00} & \textbf{3.57} & \textbf{13.56} & \textbf{15.50} \\
\texttt{ispd\_test6} & \textbf{54.06} & \textcolor{brown}{74.40} & \textcolor{brown}{86.46} & \textcolor{brown}{54.07} & \textbf{3.28} & \textbf{11.44} \\
\texttt{ispd\_test7} & \textbf{108.65} & \textcolor{brown}{198.84} & \textcolor{brown}{218.91} & \textcolor{brown}{108.97} & \textbf{5.12} & \textbf{19.05} \\
\texttt{ispd\_test8} & \textcolor{brown}{161.44} & \textcolor{brown}{286.28} & \textcolor{brown}{319.74} & \textbf{161.35} & \textbf{6.60} & \textbf{27.63} \\
\texttt{ispd\_test9} & \textcolor{brown}{256.59} & \textcolor{brown}{502.63} & \textcolor{brown}{560.22} & \textbf{256.44} & \textbf{10.42} & \textbf{44.93} \\
\texttt{ispd\_test10} & \textbf{255.01} & \textcolor{brown}{542.79} & \textcolor{brown}{579.86} & \textcolor{brown}{255.77} & \textbf{10.97} & \textbf{45.81} \\
\midrule
Ratio & \textbf{0.999} & \textcolor{brown}{14.741} & \textcolor{brown}{6.473} & \textcolor{brown}{1.000} & \textbf{1.000} & \textbf{1.000} \\
\bottomrule
    \end{tabular*}
    \par\smallskip
    \begin{minipage}{\columnwidth}
        \footnotesize \textsuperscript{*}For each metric, lower values are better. RMST values that coincide after rounding are ranked using the original measurements.
    \end{minipage}
\end{table}

%% file: include/Figure/ispd2019_gp_speedup.tex
\begin{figure}[t]
    \centering
    \includegraphics[width=\columnwidth]{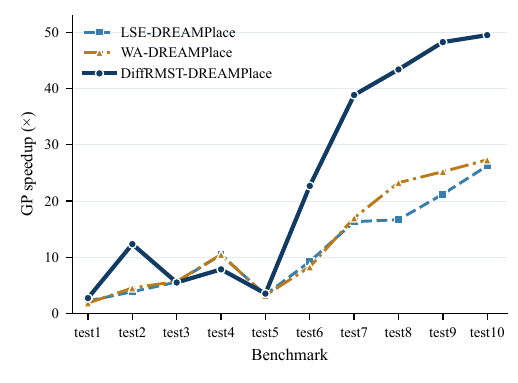}
    \caption{GPU speedup in global placement relative to each model's eight-thread CPU implementation.}
    \label{fig:gp_speedup}
\end{figure}

%% file: include/5-conclusion.tex
\section{Conclusion}\label{sec:conclusion}

This paper presented DiffRMST, a fully differentiable RMST wirelength model that combines spatial smoothing, LSE relaxation, and the Kirchhoff matrix-tree theorem to provide topology-aware gradients for analytical global placement. Approximation bounds and gradient regularity analysis characterize the relaxed objective and its relationship to exact RMST. A fused forward and backward formulation supports parallel CPU and GPU implementations. On the ISPD2019 benchmark suite, DiffRMST GPU reduces RMST, RSMT, and routed wirelength by 5.0\%, 3.2\%, and 3.5\% on average, respectively, relative to WA GPU. These gains are achieved with acceptable runtime overhead. The benefits are more pronounced for higher-degree nets, highlighting the value of modeling internal connectivity beyond the bounding box. Future work will explore extensions to timing-driven and congestion-aware placement, as well as applications to 3D placement.

\section*{Acknowledgments}
The authors disclose the use of generative AI tools for language editing and sentence refinement to enhance readability and clarity. All technical content, methodology, and results were produced and verified by the authors.